\setlist[enumerate]{noitemsep,partopsep=0pt,parsep=0pt}
\setlist[itemize]{noitemsep,partopsep=0pt,parsep=0pt}
\newtheorem{thm}{Theorem}
\newtheorem{defn}{Definition}
\newtheorem{lem}[thm]{Lemma}
\newtheorem{cor}[thm]{Corollary}
\newtheorem{prop}[thm]{Proposition}
\theoremstyle{remark}
\newtheorem{rem}{Remark}
\numberwithin{equation}{section}
\DeclareMathOperator{\tr}{tr}
\DeclareMathOperator{\spec}{spec}
\DeclareMathOperator{\supp}{supp}
\DeclareMathOperator*{\argmin}{\arg\min}
\DeclareMathOperator*{\argmax}{\arg\max}
\newcommand{\ket}[1]{|#1\rangle}
\newcommand{\bra}[1]{\langle#1|}
\newcommand{\ketbra}[2]{\ket{#1}\!\bra{#2}}
\newcommand{\proj}[1]{\ketbra{#1}{#1}}
\newcommand{\braket}[2]{\langle#1|#2\rangle}
\newcommand{\brak}[1]{\braket{#1}{#1}}
\begin{document}

\title{Alternating minimization for computing doubly minimized Petz R\'enyi mutual information}
\author{Laura Burri}
\affiliation{Institute for Theoretical Physics, ETH Zurich, Zurich, Switzerland}

\begin{abstract}
The doubly minimized Petz R\'enyi mutual information (PRMI) of order $\alpha$ is defined as the minimization of the Petz divergence of order $\alpha$ of a fixed bipartite quantum state $\rho_{AB}$ relative to any product state $\sigma_A\otimes \tau_B$. To date, no closed-form expression for this measure has been found, necessitating the development of numerical methods for its computation. In this work, we show that alternating minimization over $\sigma_A$ and $\tau_B$ asymptotically converges to the doubly minimized PRMI for any $\alpha\in (\frac{1}{2},1)\cup (1,2]$, by proving linear convergence of the objective function values with respect to the number of iterations for $\alpha\in (1,2]$ and sublinear convergence for $\alpha\in (\frac{1}{2},1)$. Previous studies have only addressed the specific case where $\rho_{AB}$ is a classical-classical state, while our results hold for any quantum state $\rho_{AB}$.
\end{abstract}

\maketitle

\section{Introduction}

The mutual information is a measure that quantifies the amount of correlation in a bipartite quantum state. 
Over the past decade, a number of R\'enyi generalizations of the mutual information have been introduced, 
such as generalizations based on the sandwiched divergence~\cite{mckinlay2020decomposition,
beigi2013sandwiched,leditzky2016strong,cheng2023tight,
wilde2014strong,leditzky2016strong,li2022reliability,li2024operational,burri2024properties2,
gupta2014multiplicativity,berta2015renyi,mosonyi2015coding,mosonyi2017strong,hayashi2016correlation}
and generalizations based on the Petz divergence~\cite{gupta2014multiplicativity,berta2015renyi,mosonyi2015coding,mosonyi2017strong,hayashi2016correlation,
kudlerflam2023renyi,kudlerflam2023renyi1,berta2021composite,burri2024properties}. 
This paper considers a member of the latter type of generalizations: the 
\emph{doubly minimized Petz R\'enyi mutual information} (PRMI)~\cite{berta2021composite,burri2024properties}.  
It is defined for a given bipartite quantum state $\rho_{AB}$ and $\alpha\in [0,\infty)$ as
\begin{align}\label{eq:def-2prmi}
I_\alpha^{\downarrow\downarrow}(A:B)_\rho 
\coloneqq \inf_{\sigma_A,\tau_B} D_\alpha (\rho_{AB}\| \sigma_A\otimes \tau_B),
\end{align}
where the minimization is over all quantum states $\sigma_A,\tau_B$, and $D_\alpha$ denotes the Petz divergence. 
This family of information measures is a one-parameter generalization of the mutual information, as the doubly minimized PRMI of order $\alpha=1$ coincides with the mutual information~\cite{hayashi2016correlation}.

In classical information theory, the analogous type of information measure was originally introduced in~\cite{tomamichel2018operational,lapidoth2019two}. 
The \emph{doubly minimized R\'enyi mutual information} (RMI) is defined for a given probability mass function (PMF) $P_{XY}$ and $\alpha\in [0,\infty)$ as
\begin{align}\label{eq:def-2rmi}
I_\alpha^{\downarrow\downarrow}(X:Y)_P
\coloneqq\inf_{Q_X,R_Y}D_\alpha (P_{XY}\| Q_X R_Y),
\end{align}
where the minimization is over all PMFs $Q_X,R_Y$, and $D_\alpha$ denotes the R\'enyi divergence. 
If $\rho_{AB}$ is a classical-classical (CC) state with PMF $P_{XY}$, then the doubly minimized PRMI reduces to the doubly minimized RMI, i.e., 
$I_\alpha^{\downarrow\downarrow}(A:B)_\rho =I_\alpha^{\downarrow\downarrow}(X:Y)_P$ for all $\alpha\in [0,\infty)$~\cite{burri2024properties}.
The doubly minimized RMI can therefore be regarded as a specific case of the doubly minimized PRMI.

To date, no closed-form expression for the doubly minimized PRMI of order $\alpha\neq 1$ that applies to any general $\rho_{AB}$ is known~\cite{burri2024properties}. 
A general closed-form expression is not available even for the case where $\rho_{AB}$ is a CC state, i.e., for the doubly minimized RMI of order $\alpha\neq 1$~\cite{lapidoth2019two}. 
The lack of methods for computing the doubly minimized PRMI hinders progress in understanding this measure (e.g., its relation to other information measures such as R\'enyi reflected entropies~\cite{burri2025minreflectedentropydoubly}). 
It is therefore useful to establish numerical methods for computing the doubly minimized PRMI. 
Since the doubly minimized PRMI of order $\alpha$ has an operational interpretation for $\alpha\in (\frac{1}{2},1)$ in the context of binary quantum state discrimination~\cite{burri2024properties}, numerical methods for this range of $\alpha$ are of particular importance.

Two approaches can be considered for computing the doubly minimized PRMI. 
The first approach uses the fact that 
$I_\alpha^{\downarrow\downarrow}(A:B)_\rho =\lim_{n\rightarrow\infty}\frac{1}{n}D_\alpha (\rho_{AB}^{\otimes n}\| \omega_{A^n}^n\otimes \omega_{B^n}^n)$ for any $\alpha\in [0,2]$, where $\omega_{A^n}^n$ and $\omega_{B^n}^n$ are certain universal permutation invariant states~\cite{burri2024properties}. 
Thus, the doubly minimized PRMI of order $\alpha\in [0,2]$ can be computed as the limit as $n\rightarrow\infty$ of 
$\frac{1}{n} D_\alpha (\rho_{AB}^{\otimes n}\| \omega_{A^n}^n\otimes \omega_{B^n}^n)$. 
However, as $n\rightarrow\infty$, the dimension of the Hilbert space $A^nB^n$ grows arbitrarily large, which suggests that computing 
$D_\alpha (\rho_{AB}^{\otimes n}\| \omega_{A^n}^n\otimes \omega_{B^n}^n)$ may become inefficient for large $n$. 
A straightforward implementation of the first approach for computing the doubly minimized PRMI is therefore not a viable option in practice. 
The second approach is to solve the optimization problem in~\eqref{eq:def-2prmi} numerically. 
This paper elaborates on this approach by employing the method of alternating minimization, where 
$D_\alpha(\rho_{AB}\| \sigma_A\otimes \tau_B)$ is minimized alternately over $\sigma_A$ and $\tau_B$.

\sloppy
For the classical setting, recent work~\cite{kamatsuka2024algorithms} has shown that alternating minimization of 
$D_\alpha (P_{XY}\| Q_XR_Y)$ over $Q_X$ and $R_Y$ asymptotically converges to the doubly minimized RMI for any $\alpha\in (\frac{1}{2},\infty)$. 
Subsequently, the non-asymptotic convergence has been analyzed: 
\cite{tsai2024linearconvergencehilbertsprojective} established linear convergence for \emph{the sequence of PMFs} obtained by alternating minimization to the global minimizer with respect to Hilbert's projective metric for any $\alpha\in (\frac{1}{2},\infty)$. 
Specifically, they show that the corresponding error rate is $\mathcal{O}(\lvert 1-\frac{1}{\alpha}\rvert^{2n})$, where $n$ denotes the number of iterations of alternating minimization. 
Their work does not provide explicit results on the convergence of the \emph{objective function values} to the doubly minimized RMI. 
It is, however, immediate to derive such results from their work, see Appendix~\ref{app:cc}. 
As we show there, their main result implies linear convergence of the objective function values to the doubly minimized RMI. 
Specifically, we show that the approximation error for the objective value is $\mathcal{O}(\lvert 1-\frac{1}{\alpha}\rvert^{2n})$.

This naturally raises the question of whether linear convergence be generalized from the classical to the quantum setting. 
In this work, we show that this is indeed the case for $\alpha\in (1,2]$ by utilizing similar techniques as~\cite{tsai2024linearconvergencehilbertsprojective}, applied to the cone of positive semidefinite operators instead of the cone of non-negative functions. 
We first show that a single iteration of alternating minimization is a contraction with respect to Hilbert's projective metric (Theorem~\ref{thm:linear}). 
As a corollary, this implies linear convergence for \emph{the sequence of quantum states} obtained by alternating minimization to the global minimizer with respect to Hilbert's projective metric for any $\alpha\in (1,2]$, and the corresponding error rate is $\mathcal{O}(\lvert 1-\frac{1}{\alpha}\rvert^{2n})$ (Corollary~\ref{cor:linear1}). 
In turn, this corollary implies linear convergence of the \emph{objective function values} to the doubly minimized PRMI of order $\alpha\in (1,2]$, and the corresponding error is $\mathcal{O}(\lvert 1-\frac{1}{\alpha}\rvert^{2n})$ (Corollary~\ref{cor:linear2}). 
Based on Corollary~\ref{cor:linear2}, we propose an alternating minimization algorithm for computing the doubly minimized PRMI of order $\alpha\in (1,2]$ for any given $\rho_{AB}$ up to an arbitrarily small error. 
These findings on the \emph{linear convergence} for $\alpha\in (1,2]$ form our first main result.

Our second main result concerns the convergence of alternating minimization for $\alpha\in (\frac{1}{2},1)$. 
For this range of $\alpha$, we have not been able to extend the proof method in~\cite{tsai2024linearconvergencehilbertsprojective} from the classical to the quantum setting. 
Instead, we adopt a completely different method to study the non-asymptotic convergence of alternating minimization for $\alpha\in (\frac{1}{2},1)$. 
Our method is motivated by the fact that the optimization problem in~\eqref{eq:def-2prmi} is jointly convex in $\sigma_A$ and $\tau_B$ for any fixed $\alpha\in (\frac{1}{2},1)$~\cite{burri2024properties}. 
The doubly minimized PRMI of order $\alpha$ can therefore be regarded as the minimization of a jointly convex function with two arguments, whose domains are convex subsets of Banach spaces. 
For such settings, general results on the convergence of alternating minimization exist. 
However, existing non-asymptotic analyses of alternating minimization~\cite{both2021rate} rely on smoothness assumptions in addition to the joint convexity of the objective function. 
Verifying these assumptions is not straightforward for the minimization problem in~\eqref{eq:def-2prmi}, which prevents a direct application of these results. 
(For instance,~\cite[Theorem~3]{both2021rate} guarantees sublinear convergence of alternating minimization under the assumption that the partial derivatives of the objective function are Lipschitz continuous~\cite[Assumption~(A2)]{both2021rate}. 
If this condition is satisfied, \cite[Theorem~3]{both2021rate} implies that the approximation error for the objective value after $n$ iterations of alternating minimization is $\mathcal{O}(1/n)$.) 
Instead, we analyze the convergence of alternating minimization for $\alpha\in (\frac{1}{2},1)$ from scratch by leveraging two additional properties of the optimization problem, alongside the joint convexity of the objective function. 
(More precisely, for technical reasons, our proof does not directly use the joint convexity of $D_\alpha (\rho_{AB}\| \sigma_A\otimes \tau_B)$, but instead makes use of the joint concavity of $Q_\alpha (\rho_{AB}\| \sigma_A\otimes \tau_B)$~\cite{burri2024properties}, the trace term appearing in the definition of $D_\alpha (\rho_{AB}\| \sigma_A\otimes \tau_B)$.) 
First, the partial optimization problems in~\eqref{eq:def-2prmi} can be solved uniquely and explicitly for any $\alpha\in (0,\infty)$~\cite{gupta2014multiplicativity,hayashi2016correlation}. 
Second, the global minimizer of~\eqref{eq:def-2prmi} is unique for any $\alpha\in (\frac{1}{2},1]$~\cite{burri2024properties}. 
These properties allow us to analyze the convergence of alternating minimization for~\eqref{eq:def-2prmi} for any $\alpha\in (\frac{1}{2},1)$, leading to our second main result: Theorem~\ref{thm:sublinear}. 
In this theorem, we establish \emph{sublinear convergence} of the objective function values to the doubly minimized PRMI of order $\alpha\in (\frac{1}{2},1)$. 
In particular, we show that the approximation error is $\mathcal{O}(1/n)$. 
Based on Theorem~\ref{thm:sublinear}, we propose an alternating minimization algorithm for computing the doubly minimized PRMI of order $\alpha\in (\frac{1}{2},1)$ for any given $\rho_{AB}$ up to an arbitrarily small error.

\paragraph*{Outline.} The remainder of this paper is structured as follows. 
Section~\ref{sec:preliminaries} outlines some mathematical preliminaries. 
In Section~\ref{sec:main}, we present our results on the convergence of alternating minimization for computing the doubly minimized PRMI of order $\alpha\in (\frac{1}{2},1)\cup (1,2]$. 
We first prove linear convergence for $\alpha\in (1,2]$ (\ref{ssec:linear}), 
and then sublinear convergence for $\alpha\in (\frac{1}{2},1)$ (\ref{ssec:sublinear}). 
Section~\ref{sec:conclusion} summarizes our findings and outlines possible directions for future work.

\section{Preliminaries}\label{sec:preliminaries}

\subsection{Notation}
We take ``$\log$'' to refer to the natural logarithm. 
The set of natural numbers strictly less than 
$n\in \mathbb{N}$ is denoted by $[n] \coloneqq \{0,1,\dots,n-1\}$.

All Hilbert spaces are assumed to be finite-dimensional and over $\mathbb{C}$. 
The dimension of a Hilbert space $A$ is denoted by $d_A$. 
The tensor product of two Hilbert spaces, $A$ and $B$, is denoted by $A\otimes B$ and may also be written as $AB$ for simplicity. 
$\mathcal{L}(A,B)$ denotes the set of linear maps from $A$ to $B$, and we set $\mathcal{L}(A)\coloneqq \mathcal{L}(A,A)$. 
To keep the notation short, identity operators are sometimes omitted. 
For instance, for $X_A\in\mathcal{L}(A)$, ``$X_A$'' may be understood as $X_A\otimes 1_B\in \mathcal{L}(A\otimes B)$. 
The kernel and spectrum of $X\in \mathcal{L}(A)$ are denoted by $\ker(X)$ and $\spec(X),$ respectively. 
The support of $X\in \mathcal{L}(A)$ is defined as the orthogonal complement of the kernel of $X$. 
For $X,Y\in \mathcal{L}(A)$, $X\ll Y$ is true iff $\ker(Y)\subseteq \ker(X)$.
For $X,Y\in \mathcal{L}(A)$, $X\perp Y$ is true iff $XY=YX=0$.
For $X\in \mathcal{L}(A)$, $X\geq 0$ is true iff $X$ is positive semidefinite, 
and $X>0$ is true iff $X$ is positive definite. 
For self-adjoint operators $X,Y\in \mathcal{L}(A)$, $X\geq Y$ is true iff $X-Y\geq 0$. 
Furthermore, $X\sim Y$ is true iff $X$ and $Y$ have the same support. 
The trace of $X\in \mathcal{L}(A)$ is denoted as $\tr[X]$, and the partial trace over $A$ is denoted as $\tr_A$. 

For a positive semidefinite $X\in \mathcal{L}(A)$, $X^p$ is defined for $p\in \mathbb{R}$ by taking the power on the support of $X$. 
The adjoint of $X\in \mathcal{L}(A,B)$ with respect to the inner products of $A,B$ is denoted by $X^\dagger\in \mathcal{L}(B,A)$, 
and the operator absolute value is denoted by $\lvert X\rvert \coloneqq (X^\dagger X)^{1/2}$. 
The Schatten $p$-norm of $X\in \mathcal{L}(A,B)$ is defined as $\|X \|_p\coloneqq \tr[\lvert X\rvert^p]^{1/p}$ for $p\in [1,\infty)$, and as 
$\lVert X\rVert_{\infty}\coloneqq \sqrt{\max (\spec(X^\dagger X))}$ for $p=\infty$.
The Schatten $p$-quasi-norm is defined as $\|X \|_p\coloneqq \tr[\lvert X\rvert^p]^{1/p}$ for $p\in (0,1)$.

The set of quantum states on $A$ is $\mathcal{S}(A)\coloneqq\{\rho\in \mathcal{L}(A):\rho\geq 0,\tr[\rho]=1\}$. 
In addition, we use the following symbols for constrained variants of this set with respect to a given self-adjoint $X\in \mathcal{L}(A)$.
\begin{align}
\mathcal{S}_{\ll X}(A)
&\coloneqq \{\rho\in \mathcal{S}(A): \rho\ll X\}
\\
\mathcal{S}_{X\ll}(A)
&\coloneqq \{\rho\in \mathcal{S}(A): X\ll \rho\}
\\
\mathcal{S}_{\sim X}(A)
&\coloneqq \{\rho\in \mathcal{S}(A): \rho\sim X\}
\\
\mathcal{S}_{>0}(A)
&\coloneqq \{\rho\in \mathcal{S}(A): \rho>0\}
\end{align}

\subsection{Petz divergence}
In this section, we define the Petz divergence and the doubly minimized PRMI, and highlight some of their properties relevant to this work.

The \emph{Petz (quantum R\'enyi) divergence (of order $\alpha$)} is defined for $\alpha\in [0,1)\cup (1,\infty),\rho\in \mathcal{S}(A),$ and any positive semidefinite $\sigma\in \mathcal{L}(A)$ as~\cite{petz1986quasi}
\begin{equation}
D_\alpha (\rho\| \sigma)\coloneqq\frac{1}{\alpha -1} \log \tr [\rho^\alpha \sigma^{1-\alpha}]
\end{equation}
if $(\alpha <1\land \rho\not\perp\sigma)\lor \rho\ll \sigma$ and 
$D_\alpha (\rho\| \sigma)\coloneqq \infty$ else. 
Moreover, $D_1$ is defined as the limit of $D_\alpha$ for $\alpha\rightarrow 1$. 
For $\alpha\in (0,\infty)$, we define
$Q_\alpha (\rho\| \sigma)\coloneqq
\tr [\rho^\alpha \sigma^{1-\alpha}]$
for all $\rho,\sigma\in \mathcal{S}(A)$.

In connection with the Petz divergence, the following R\'enyi Pinsker inequality holds~\cite[Theorem~2.1]{carlen2017remainder}. 
For any $\rho,\sigma\in \mathcal{S}(A)$ and $\alpha\in [\frac{1}{2},1)$
\begin{align}\label{eq:pinsker}
\frac{1-\alpha}{4\alpha} \|\rho^\alpha-\sigma^\alpha \|_{1/\alpha}^2
\leq 1-Q_\alpha(\rho\| \sigma).
\end{align}

The Petz divergence satisfies the following quantum Sibson identity~\cite{hayashi2016correlation}. 
For any $\alpha\in (0,\infty),\rho_{AB}\in \mathcal{S}(AB),\sigma_A\in \mathcal{S}(A),\tau_B\in \mathcal{S}(B)$ such that $\rho_A\not\perp\sigma_A$ holds
\begin{align}\label{eq:sibson}
D_\alpha(\rho_{AB}\| \sigma_A\otimes \tau_B)
=D_\alpha (\rho_{AB}\| \sigma_A\otimes \hat{\tau}_B)
+D_\alpha (\hat{\tau}_B\| \tau_B)
\quad\text{where}\quad 
\hat{\tau}_B\coloneqq \frac{(\tr_A[\rho_{AB}^\alpha \sigma_A^{1-\alpha}])^{\frac{1}{\alpha}}}{\tr[(\tr_A[\rho_{AB}^\alpha \sigma_A^{1-\alpha}])^{\frac{1}{\alpha}}]}.
\end{align}

The \emph{doubly minimized PRMI} of order $\alpha\in [0,\infty)$ is defined for $\rho_{AB}\in \mathcal{S}(AB)$ as
\begin{align}\label{eq:def_2prmi}
I_\alpha^{\downarrow\downarrow}(A:B)_\rho 
\coloneqq \inf_{(\sigma_A,\tau_B)\in\mathcal{S}(A)\times \mathcal{S}(B) } D_\alpha (\rho_{AB}\| \sigma_A\otimes \tau_B).
\end{align}
Properties of the doubly minimized PRMI have been studied in~\cite{burri2024properties}. 

Due to the quantum Sibson identity in~\eqref{eq:sibson}, partial minimizers of the optimization problem in~\eqref{eq:def_2prmi} can be characterized as follows. 
For any $\rho_{AB}\in \mathcal{S}(AB),\sigma_A\in \mathcal{S}(A),\alpha\in (0,1)\cup (1,\infty)$ holds that if 
$(\alpha\in (0,1)\land \rho_A\not\perp\sigma_A)\lor \rho_A\ll \sigma_A$, then~\cite{gupta2014multiplicativity,hayashi2016correlation,burri2024properties}
\begin{align}\label{eq:tau-optimal}
\hat{\tau}_B\coloneqq\frac{(\tr_A[\rho_{AB}^\alpha \sigma_A^{1-\alpha}])^{\frac{1}{\alpha}}}{\tr[(\tr_A[\rho_{AB}^\alpha \sigma_A^{1-\alpha}])^{\frac{1}{\alpha}}]}
\end{align}
is such that $\argmin_{\tau_B\in \mathcal{S}(B)} D_\alpha (\rho_{AB}\| \sigma_A\otimes \tau_B)
=\{\hat{\tau}_B\}$ and 
\begin{align}\label{eq:inf-tau}
\inf_{\tau_B\in \mathcal{S}(B)} D_\alpha (\rho_{AB}\| \sigma_A\otimes \tau_B)
=D_\alpha (\rho_{AB}\| \sigma_A\otimes \hat{\tau}_B)
&=\frac{1}{\alpha-1}\log \|\tr_A[\rho_{AB}^\alpha \sigma_A^{1-\alpha}] \|_{1/\alpha}.
\end{align}

For $\alpha\in (\frac{1}{2},1]$, the optimization problem in~\eqref{eq:def_2prmi} has a unique minimizer: 
For any given $\alpha\in (\frac{1}{2},1]$, there exists $(\hat{\sigma}_A,\hat{\tau}_B)\in \mathcal{S}_{\sim \rho_A}(A)\times \mathcal{S}_{\sim \rho_B}(B)$ such that~\cite{burri2024properties}
\begin{align}\label{eq:unique}
\argmin_{(\sigma_A,\tau_B)\in\mathcal{S}(A)\times \mathcal{S}(B) } D_\alpha (\rho_{AB}\| \sigma_A\otimes \tau_B)
=\{(\hat{\sigma}_A,\hat{\tau}_B)\}.
\end{align}
In particular, for $\alpha=1$, we have~\cite{hayashi2016correlation,burri2024properties}
\begin{align}\label{eq:unique1}
\argmin_{(\sigma_A,\tau_B)\in\mathcal{S}(A)\times \mathcal{S}(B) } D_1 (\rho_{AB}\| \sigma_A\otimes \tau_B)
=\{(\rho_A,\rho_B)\}.
\end{align}

\subsection{Hilbert's projective metric}\label{ssec:hilbert}
In this section, we define Hilbert's projective metric, which plays a key role in our analysis of the convergence of alternating minimization for $\alpha\in (1,2]$. 

To define Hilbert's projective metric, we need to recall some concepts from the theory of cones~\cite{lemmens2012nonlinear,lemmens2013birkhoffs}. 
Let $V$ be a real vector space. 
A subset $C\subseteq V$ is called a \emph{cone} if 
$C$ is convex, $\lambda C\subseteq C$ for all $\lambda\in [0,\infty)$, and 
$C\cap (-C)=\{0\}$. 
From now on, suppose $C\subseteq V$ is a cone. 
The corresponding induced partial order on $V$ is defined by $x\leq_C y$ iff $y-x\in C$, for all $x,y\in V$. 
For $x,y\in C$, $y$ \emph{dominates} $x$ ($x\ll_C y$) if there exist $\alpha,\beta\in \mathbb{R}$ such that $\alpha y\leq_C x\leq_C \beta y$. 
In that case, we define
\begin{align}
M(x/y)\coloneqq \inf\{\beta \in \mathbb{R}: x\leq_C \beta y\}.
\end{align} 
$x,y\in C$ are \emph{equivalent} ($x\sim_C y$) if $y$ dominates $x$ and $x$ dominates $y$. 
This defines an equivalence relation on $C$. 

\emph{Hilbert's projective metric} is defined for $x,y\in C$ by
\begin{align}
d_H(x,y)\coloneqq \log (M(x/y)M(y/x))
\end{align}
if $x\sim_C y$ and $y\neq 0$, and $d_H(x,y)\coloneqq 0$ if $x=y=0$, and $d_H(x,y)\coloneqq \infty$ else. 
For our analysis of the convergence of alternating minimization for $\alpha\in (1,2]$, we will invoke several properties of Hilbert's projective metric that are enumerated in Appendix~\ref{app:hilbert}.

\emph{Quantum setting.} 
In the quantum setting, the relevant real vector space $V$ is given by the set of all self-adjoint linear operators on a Hilbert space $A$, i.e., 
$V=\{X\in \mathcal{L}(A):X^\dagger =X\}$. 
The cone $C$ is given by the subset of all positive semidefinite operators, i.e., 
$C=\{X\in \mathcal{L}(A):X\geq 0\}$. 
The dominance relation $X\ll_C Y$ is given by $X\ll Y$ for all $X,Y\in C$. 
The equivalence relation $X\sim_C Y$ is given by $X\sim Y$ for all $X,Y\in C$.
Moreover, for all $X,Y\in C\setminus \{0\}$ such that $X\ll Y$~\cite{reeb2011hilbert}
\begin{align}\label{eq:m-xy-quantum}
M(X/Y)=\lVert Y^{-1/2}XY^{-1/2}\rVert_\infty.
\end{align}

\section{Main results}\label{sec:main}

\emph{Problem formulation.} We are interested in whether alternating minimization of $D_\alpha (\rho_{AB}\| \sigma_A\otimes \tau_B)$ over states $\sigma_A$ and $\tau_B$ converges to the doubly minimized PRMI $I_\alpha^{\downarrow\downarrow}(A:B)_\rho$ as the number of iterations tends to infinity, and if so, how fast this convergence occurs. 
The following definition formalizes the alternating minimization problem under consideration. 
The central question can then be rephrased as whether $\lim_{n\rightarrow\infty}x_n= I_\alpha^{\downarrow\downarrow}(A:B)_\rho$, and if so, how fast the convergence occurs with respect to $n$. 

\begin{defn}[Alternating minimization: Iteration rule]\label{def:am-quantum}
For any given $\alpha\in (0,\infty)$, $\rho_{AB}\in \mathcal{S}(AB)$, we define the functions
\begin{subequations}\label{eq:def-N}
\begin{align}
\mathcal{N}_{A\rightarrow B}:\mathcal{S}_{\rho_A\ll}(A)\rightarrow\mathcal{S}_{\sim \rho_B}(B), \quad
&\sigma_A\mapsto \frac{(\tr_A[\rho_{AB}^\alpha \sigma_A^{1-\alpha}])^{\frac{1}{\alpha}}}{\tr[(\tr_A[\rho_{AB}^\alpha \sigma_A^{1-\alpha}])^{\frac{1}{\alpha}}]},
\\
\mathcal{N}_{B\rightarrow A}:\mathcal{S}_{\rho_B\ll}(B)\rightarrow\mathcal{S}_{\sim \rho_A}(A), \quad
&\tau_B\mapsto \frac{(\tr_B[\rho_{AB}^\alpha \tau_B^{1-\alpha}])^{\frac{1}{\alpha}}}{\tr[(\tr_B[\rho_{AB}^\alpha \tau_B^{1-\alpha}])^{\frac{1}{\alpha}}]},
\end{align}
\end{subequations}
and for any given $\sigma_A^{(0)}\in \mathcal{S}_{\rho_A\ll}(A)$, we define for all $n\in\mathbb{N}$
\begin{align}
\tau_B^{(n)}
&\coloneqq \mathcal{N}_{A\rightarrow B}(\sigma_A^{(n)}),\\
\sigma_A^{(n+1)}
&\coloneqq \mathcal{N}_{B\rightarrow A}(\tau_B^{(n)}),\\
x_{n}
&\coloneqq D_\alpha (\rho_{AB}\| \sigma_A^{(n)}\otimes \tau_B^{(n)}).
\end{align}
\end{defn}

\begin{rem}[Correctness of iteration rule]\label{rem:correctness}
The iteration rule specified in Definition~\ref{def:am-quantum} yields a sequence of states that realizes alternating minimization, 
as~\eqref{eq:tau-optimal} and~\eqref{eq:unique1} imply that 
$\sigma_A^{(n+1)}
\in \argmin_{\sigma_A\in \mathcal{S}(A)} D_\alpha (\rho_{AB}\| \sigma_A\otimes \tau_B^{(n)})$ 
and 
$\tau_B^{(n)}
\in \argmin_{\tau_B\in \mathcal{S}(B)} D_\alpha (\rho_{AB}\| \sigma_A^{(n)}\otimes \tau_B)$ 
for any fixed $\alpha\in (0,\infty)$ and $\rho_{AB}\in \mathcal{S}(AB)$, and all $n\in \mathbb{N}$. 
\end{rem}

\begin{rem}[Alternating minimization for $\alpha=1$ is trivial]
For $\alpha=1$, alternating minimization trivially reaches the global minimizer after one iteration. 
More formally, let $\alpha\coloneqq 1$, 
let $\sigma_A^{(0)}\in \mathcal{S}_{\rho_A\ll}(A)$, 
and let $\tau_B^{(n)},\sigma_A^{(n+1)},x_n$ be given by Definition~\ref{def:am-quantum} for all $n\in \mathbb{N}$. 
Then, for all $n\in \mathbb{N}$, 
$\tau^{(n)}_B=\rho_B$ and $\sigma_A^{(n+1)}=\rho_A$ due to~\eqref{eq:def-N} and 
$x_{n+1}=I_\alpha^{\downarrow\downarrow}(A:B)_\rho$ due to~\eqref{eq:unique1}.
\end{rem}

\begin{rem}[Alternating minimization for $\alpha\in (0,\frac{1}{2})$ does not necessarily converge to $I_\alpha^{\downarrow\downarrow}(A:B)_\rho$]
For $\alpha\in (0,\frac{1}{2})$, alternating minimization with an arbitrary initialization $\sigma_A^{(0)}\in \mathcal{S}_{\rho_A\ll}(A)$ does not generally converge to a global minimizer. 
For instance, let $d_B\coloneqq d_A$, 
let $\{\ket{a_x}_A\}_{x\in [d_A]},\{\ket{b_y}_B\}_{y\in [d_B]}$ be orthonormal bases for $A,B$, 
and let 
$\rho_{AB}\coloneqq \frac{1}{d_A} \sum_{x\in [d_A]} \proj{a_x,b_x}_{AB}$. 
Let $\sigma_A^{(0)}\coloneqq 1_A/d_A$. 
Then, the states in Definition~\ref{def:am-quantum} are 
$\tau_B^{(n)}=1_B/d_B$ and $\sigma_A^{(n+1)}=1_A/d_A$ for all $n\in \mathbb{N}$. 
Thus, $(1_A/d_A,1_B/d_B)$ is a fixed point of $(\mathcal{N}_{B\rightarrow A}\circ \mathcal{N}_{A\rightarrow B},\mathcal{N}_{A\rightarrow B}\circ \mathcal{N}_{B\rightarrow A})$. 
By construction, this fixed point is a partial minimizer of $\min_{(\sigma_A,\tau_B)\in \mathcal{S}(A)\times \mathcal{S}(B)}D_\alpha (\rho_{AB}\| \sigma_A\otimes \tau_B)$, see Remark~\ref{rem:correctness}. 
However, it is not a global minimizer because~\cite{burri2024properties}
\begin{align}
\min_{(\sigma_A,\tau_B)\in \mathcal{S}(A)\times \mathcal{S}(B)}D_\alpha (\rho_{AB}\| \sigma_A\otimes \tau_B)
=\frac{\alpha}{1-\alpha}H_\infty(A)_\rho=\frac{\alpha}{1-\alpha}\log d_A
\end{align}
is strictly less than 
$D_\alpha (\rho_{AB}\|1_A/d_A\otimes 1_B/d_B)=\log d_A$. 
\end{rem}

\begin{rem}[Restriction of domain of $\mathcal{N}_{A\rightarrow B}$ to $\mathcal{S}_{\sim \rho_A}(A)$]\label{rem:restriction}
Let $\alpha\in (0,\infty)$, $\rho_{AB}\in \mathcal{S}(AB)$, 
$\sigma_A\in \mathcal{S}_{\rho_A\ll}(A)$, and let 
$\tilde{\sigma}_A\coloneqq \rho_A^0 \sigma_A\rho_A^0/\tr[\rho_A^0 \sigma_A]\in \mathcal{S}_{\sim \rho_A}(A)$. 
Then, $\mathcal{N}_{A\rightarrow B}(\sigma_A)=\mathcal{N}_{A\rightarrow B}(\tilde{\sigma}_A)$. 
This line of reasoning shows that 
$\mathcal{N}_{A\rightarrow B}(\mathcal{S}_{\rho_A\ll}(A))=\mathcal{N}_{A\rightarrow B}(\mathcal{S}_{\sim \rho_A}(A))$. 
Thus, it suffices to restrict attention to the case where the input of $\mathcal{N}_{A\rightarrow B}$ lies in $\mathcal{S}_{\sim \rho_A}(A)\subseteq \mathcal{S}_{\rho_A\ll }(A)$. 
For simplicity, several of the following results are therefore restricted to the case where the input of $\mathcal{N}_{A\rightarrow B}$ lies in $\mathcal{S}_{\sim \rho_A}(A)$, 
and similarly, the input of $\mathcal{N}_{B\rightarrow A}$ lies in $\mathcal{S}_{\sim \rho_B}(B)$.
\end{rem}

\subsection{Linear convergence for $\alpha\in (1,2]$}\label{ssec:linear}

The following theorem is the main result of this section. 
Its proof is given in Appendix~\ref{app:linear}.
The theorem provides sufficient conditions for $\mathcal{N}_{A\rightarrow B}$ and $\mathcal{N}_{B\rightarrow A}$ to be contractions with respect to Hilbert's projective metric. 

\begin{thm}[Contraction]\label{thm:linear}
Let $\alpha\in (1,2]$, 
$\rho_{AB}\in \mathcal{S}(AB)$, 
and let $\mathcal{N}_{A\rightarrow B},\mathcal{N}_{B\rightarrow A}$ be given by Definition~\ref{def:am-quantum}. 
Let $\gamma \coloneqq \lvert 1-\frac{1}{\alpha}\rvert$. 
\begin{enumerate}[label=(\alph*)]
\item For all $\sigma_A,\tilde{\sigma}_A\in \mathcal{S}_{\sim\rho_A}(A)$ 
and all $\tau_B,\tilde{\tau}_B\in \mathcal{S}_{\sim\rho_B}(B)$
\begin{align}
d_H(\mathcal{N}_{A\rightarrow B}(\sigma_A), \mathcal{N}_{A\rightarrow B}(\tilde{\sigma}_A))
&\leq \gamma d_H(\sigma_A,\tilde{\sigma}_A),
\label{eq:thm-linear-sigma1}\\
d_H(\mathcal{N}_{B\rightarrow A}(\tau_B), \mathcal{N}_{B\rightarrow A}(\tilde{\tau}_B))
&\leq \gamma d_H(\tau_B,\tilde{\tau}_B).
\label{eq:thm-linear-tau1}
\end{align}
\item If $\rho_{AB}>0:$ 
For all $\sigma_A,\tilde{\sigma}_A\in \mathcal{S}_{\sim\rho_A}(A)$ 
and all $\tau_B,\tilde{\tau}_B\in \mathcal{S}_{\sim\rho_B}(B)$
\begin{align}
d_H(\mathcal{N}_{A\rightarrow B}(\sigma_A), \mathcal{N}_{A\rightarrow B}(\tilde{\sigma}_A))
&\leq \gamma \kappa d_H(\sigma_A,\tilde{\sigma}_A),
\label{eq:thm-linear-sigma2}
\\
d_H(\mathcal{N}_{B\rightarrow A}(\tau_B), \mathcal{N}_{B\rightarrow A}(\tilde{\tau}_B))
&\leq \gamma\kappa d_H(\tau_B,\tilde{\tau}_B)
\label{eq:thm-linear-tau2}
\end{align}
where $\kappa \coloneqq \tanh(\delta/4)\in [0,1)$ and
\begin{align}
\delta \coloneqq \log \max_{\substack{\ket{\phi},\ket{\psi}\in A:\\ \brak{\phi}=1,\brak{\psi}=1 }} 
&\lVert \bra{\phi}\rho_{AB}^\alpha \ket{\phi}^{-\frac{1}{2}} \bra{\psi}\rho_{AB}^\alpha \ket{\psi} \bra{\phi}\rho_{AB}^\alpha \ket{\phi}^{-\frac{1}{2}} \rVert_\infty 
\\
&\cdot
\lVert \bra{\psi}\rho_{AB}^\alpha \ket{\psi}^{-\frac{1}{2}} \bra{\phi}\rho_{AB}^\alpha \ket{\phi} \bra{\psi}\rho_{AB}^\alpha \ket{\psi}^{-\frac{1}{2}} \rVert_\infty.
\end{align}
\end{enumerate}
\end{thm}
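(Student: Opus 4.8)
\emph{Proof sketch.} The plan is to factor $\mathcal{N}_{A\rightarrow B}$ into elementary maps and track how each one acts on Hilbert's projective metric. Since $d_H$ is scale-invariant (Appendix~\ref{app:hilbert}), the normalization in~\eqref{eq:def-N} may be dropped, so up to a positive scalar $\mathcal{N}_{A\rightarrow B}$ is the composition $\sigma_A\mapsto\sigma_A^{1-\alpha}\mapsto\Phi(\sigma_A^{1-\alpha})\mapsto\Phi(\sigma_A^{1-\alpha})^{1/\alpha}$, where $\Phi\colon\mathcal{L}(A)\to\mathcal{L}(B)$, $\Phi(Y)\coloneqq\tr_A[\rho_{AB}^\alpha(Y\otimes 1_B)]$, is a positive linear map satisfying $\Phi(\proj{\phi})=\bra{\phi}\rho_{AB}^\alpha\ket{\phi}$. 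The hypotheses $\sigma_A,\tilde{\sigma}_A\in\mathcal{S}_{\sim\rho_A}(A)$, together with the fact that $\mathcal{N}_{A\rightarrow B}$ maps into $\mathcal{S}_{\sim\rho_B}(B)$, keep every operator that arises along the way of constant support, so all projective distances below are finite; otherwise the asserted inequalities are vacuous.

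For part~(a) I would invoke three properties of $d_H$ from Appendix~\ref{app:hilbert}: $d_H(X^p,Y^p)\le|p|\,d_H(X,Y)$ for every $p\in[-1,1]$; positive linear maps are non-expansive with respect to $d_H$; and $d_H$ is scale-invariant. Applying the power estimate with $p=1-\alpha$ and with $p=1/\alpha$ — both have modulus at most $1$ exactly because $\alpha\in(1,2]$ — and non-expansiveness to $\Phi$, the three contraction factors compose to $(\alpha-1)\cdot 1\cdot\tfrac1\alpha=1-\tfrac1\alpha=\gamma$, which is~\eqref{eq:thm-linear-sigma1}; inequality~\eqref{eq:thm-linear-tau1} is the same statement with $A$ and $B$ interchanged.

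For part~(b) only the middle factor improves. By Birkhoff's contraction theorem (Appendix~\ref{app:hilbert}), $\Phi$ contracts $d_H$ by $\tanh(\Delta(\Phi)/4)$, where $\Delta(\Phi)\coloneqq\sup\{d_H(\Phi X,\Phi Y):X,Y\ge 0,\,X,Y\ne 0\}$ is the projective diameter of the image of $\Phi$. Using~\eqref{eq:m-xy-quantum} and the definition of $d_H$, the quantity in the statement is exactly $\delta=\max_{\ket{\phi},\ket{\psi}}d_H(\Phi\proj{\phi},\Phi\proj{\psi})$, so $\delta\le\Delta(\Phi)$ trivially and the content is the reverse inequality: the diameter of the image is attained on rank-one inputs. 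To prove it, write $M(U/V)=\max_{\ket{w}:\brak{w}=1}\bra{w}U\ket{w}/\bra{w}V\ket{w}$ (from~\eqref{eq:m-xy-quantum}) and note $\bra{w}\Phi X\ket{w}=\tr[R_wX]$ with $R_w\coloneqq(1_A\otimes\bra{w})\rho_{AB}^\alpha(1_A\otimes\ket{w})\ge 0$; then $X$ and $Y$ may be optimized independently, and since $X$ occurs in $\tr[R_wX]$ as well as in $\tr[R_{w'}X]$, the ratio $\tr[R_wX]/\tr[R_{w'}X]$ is a weighted mediant of the numbers $\bra{\phi_i}R_w\ket{\phi_i}/\bra{\phi_i}R_{w'}\ket{\phi_i}$ over the eigenvectors $\ket{\phi_i}$ of $X$, hence bounded by — and, at a rank-one $X=\proj{\phi}$, equal to — $\max_{\ket{\phi}}\bra{\phi}R_w\ket{\phi}/\bra{\phi}R_{w'}\ket{\phi}$; likewise for $Y$. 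Carrying this out and using $\bra{\phi}R_w\ket{\phi}=\bra{\phi,w}\rho_{AB}^\alpha\ket{\phi,w}$ gives
\begin{align*}
\Delta(\Phi)=\log\max_{\substack{\ket{\phi},\ket{\psi}\in A,\ \ket{w},\ket{w'}\in B:\\ \brak{\phi}=\brak{\psi}=\brak{w}=\brak{w'}=1}}\frac{\bra{\phi,w}\rho_{AB}^\alpha\ket{\phi,w}\;\bra{\psi,w'}\rho_{AB}^\alpha\ket{\psi,w'}}{\bra{\psi,w}\rho_{AB}^\alpha\ket{\psi,w}\;\bra{\phi,w'}\rho_{AB}^\alpha\ket{\phi,w'}}=\delta ,
\end{align*}
the last equality being the same computation run backwards. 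Since $\rho_{AB}>0$ makes each $\bra{\phi,w}\rho_{AB}^\alpha\ket{\phi,w}\ge\lambda_{\min}(\rho_{AB}^\alpha)>0$, compactness forces $\delta<\infty$, whence $\kappa=\tanh(\delta/4)\in[0,1)$. Composing as in part~(a) then yields the factor $(\alpha-1)\cdot\kappa\cdot\tfrac1\alpha=\gamma\kappa$, which is~\eqref{eq:thm-linear-sigma2}; and~\eqref{eq:thm-linear-tau2} follows by interchanging $A$ and $B$, the relevant diameter being this same $\delta$ because the displayed expression is invariant (after relabeling) under swapping the roles of $A$ and $B$.

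The step I expect to be the main obstacle is the identity $\Delta(\Phi)=\delta$ — the reduction of the image's projective diameter to rank-one inputs. This is the genuinely quantum point: in the classical analysis of~\cite{tsai2024linearconvergencehilbertsprojective} the corresponding diameter is over a simplex and is trivially attained at its vertices, whereas here it requires the mediant argument above. The power-map and positive-map estimates, and the bookkeeping of supports, are routine once the properties in Appendix~\ref{app:hilbert} are in hand.
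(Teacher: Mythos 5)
Your proposal is correct and follows essentially the same route as the paper: the same factorization of $\mathcal{N}_{A\rightarrow B}$ into the power map $(\cdot)^{1-\alpha}$, the positive linear map $X_A\mapsto\tr_A[\rho_{AB}^\alpha X_A]$, and the power map $(\cdot)^{1/\alpha}$, combined with the Hilbert-metric contraction bounds for homogeneous order-preserving/order-reversing maps and the Birkhoff--Hopf theorem, yielding the composite factor $\lvert 1-\alpha\rvert\cdot\kappa(L)\cdot\frac{1}{\alpha}$. The only divergence is the step you correctly single out as the crux --- the reduction of the projective diameter of the linear map to rank-one inputs --- which the paper obtains by citing \cite[Lemma~A.3.4]{lemmens2012nonlinear} (the supremum of $d_H(L(X),L(Y))$ over $\mathcal{S}(A)$, the convex hull of the pure states, equals the supremum over the pure states), whereas your mediant argument proves the same fact from scratch and is equally valid.
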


\begin{rem}[Subdivision of Theorem~\ref{thm:linear}]
Note that the assertion in~(a) holds in general and guarantees a contraction coefficient of $\gamma\in [0,1)$. 
In contrast,~(b) only holds under the assumption that $\rho_{AB}>0$, but guarantees an improved contraction coefficient of $\gamma\kappa\in [0,\gamma)\subseteq [0,1)$. 
This possibility of an improved coefficient will also appear analogously in subsequent results.
\end{rem}

The following result is a corollary of Theorem~\ref{thm:linear}. 
Its proof is given in Appendix~\ref{app:linear1}. 
The corollary implies linear convergence of the sequence of states obtained by alternating minimization with respect to Hilbert's projective metric. 
More specifically,~(a) implies that the corresponding error rate is $\mathcal{O}(\lvert 1-\frac{1}{\alpha}\rvert^{2n})$.
\begin{cor}[Linear convergence of states]\label{cor:linear1}
\sloppy
Let $\alpha\in (1,2]$, 
$\rho_{AB}\in \mathcal{S}(AB)$, 
and let 
$(\hat{\sigma}_A,\hat{\tau}_B)\in \argmin_{(\sigma_A,\tau_B)\in \mathcal{S}(A)\times \mathcal{S}(B)}D_\alpha (\rho_{AB}\| \sigma_A\otimes \tau_B)$. 
Let $\sigma_A^{(0)}\in \mathcal{S}_{\sim \rho_A}(A)$ 
and let $\tau^{(n)}_B,\sigma^{(n+1)}_A$ be given by Definition~\ref{def:am-quantum} for all $n\in \mathbb{N}$. 
Let $\gamma \coloneqq \lvert 1-\frac{1}{\alpha}\rvert$.
\begin{enumerate}[label=(\alph*)]
\item For all $n\in \mathbb{N}$
\begin{align}
d_H(\sigma_A^{(n)},\hat{\sigma}_A) &\leq \gamma^{2n} d_H(\sigma_A^{(0)},\hat{\sigma}_A),
\\
d_H(\tau_B^{(n)},\hat{\tau}_B) &\leq \gamma^{2n+1} d_H(\sigma_A^{(0)},\hat{\sigma}_A).
\end{align}
\item If $\rho_{AB}>0:$ 
For all $n\in \mathbb{N}$
\begin{align}
d_H(\sigma_A^{(n)},\hat{\sigma}_A) &\leq (\gamma\kappa)^{2n} d_H(\sigma_A^{(0)},\hat{\sigma}_A),
\\
d_H(\tau_B^{(n)},\hat{\tau}_B) &\leq (\gamma\kappa)^{2n+1} d_H(\sigma_A^{(0)},\hat{\sigma}_A)
\end{align}
where $\kappa$ is defined as in Theorem~\ref{thm:linear}~(b).
\end{enumerate}
\end{cor}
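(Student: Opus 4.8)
The plan is to derive the corollary as a direct consequence of Theorem~\ref{thm:linear} together with the fixed-point characterization of the global minimizer and the completeness properties of Hilbert's projective metric recorded in Appendix~\ref{app:hilbert}. First I would observe that, by~\eqref{eq:unique}, for $\alpha\in(1,2]$ the pair $(\hat\sigma_A,\hat\tau_B)$ lies in $\mathcal{S}_{\sim\rho_A}(A)\times\mathcal{S}_{\sim\rho_B}(B)$, and by Remark~\ref{rem:correctness} it is a joint fixed point in the sense that $\mathcal{N}_{A\to B}(\hat\sigma_A)=\hat\tau_B$ and $\mathcal{N}_{B\to A}(\hat\tau_B)=\hat\sigma_A$: indeed $\hat\tau_B$ must be the unique partial minimizer for fixed $\hat\sigma_A$ by~\eqref{eq:tau-optimal}, and by symmetry the same holds for $\hat\sigma_A$ given $\hat\tau_B$. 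This is what lets me substitute $\hat\sigma_A,\hat\tau_B$ into the images of the maps in Theorem~\ref{thm:linear}.

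Next I would set up the induction on $n$. Since $\sigma_A^{(0)}\in\mathcal{S}_{\sim\rho_A}(A)$ and the maps $\mathcal{N}_{A\to B},\mathcal{N}_{B\to A}$ take $\mathcal{S}_{\sim\rho_A}$ into $\mathcal{S}_{\sim\rho_B}$ and vice versa (by Definition~\ref{def:am-quantum}), every iterate $\sigma_A^{(n)},\tau_B^{(n)}$ stays in the relevant support-matching set, so Theorem~\ref{thm:linear} applies at each step. For part~(a): applying~\eqref{eq:thm-linear-sigma1} with the fixed point gives $d_H(\tau_B^{(n)},\hat\tau_B)=d_H(\mathcal{N}_{A\to B}(\sigma_A^{(n)}),\mathcal{N}_{A\to B}(\hat\sigma_A))\le\gamma\,d_H(\sigma_A^{(n)},\hat\sigma_A)$, and applying~\eqref{eq:thm-linear-tau1} gives $d_H(\sigma_A^{(n+1)},\hat\sigma_A)\le\gamma\,d_H(\tau_B^{(n)},\hat\tau_B)$. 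Composing these two yields $d_H(\sigma_A^{(n+1)},\hat\sigma_A)\le\gamma^2\,d_H(\sigma_A^{(n)},\hat\sigma_A)$, and a trivial induction with base case $n=0$ gives $d_H(\sigma_A^{(n)},\hat\sigma_A)\le\gamma^{2n}\,d_H(\sigma_A^{(0)},\hat\sigma_A)$; combining once more with the first inequality produces the stated bound for $\tau_B^{(n)}$. Part~(b) is identical verbatim with $\gamma$ replaced by $\gamma\kappa$, invoking~\eqref{eq:thm-linear-sigma2} and~\eqref{eq:thm-linear-tau2} in place of~\eqref{eq:thm-linear-sigma1} and~\eqref{eq:thm-linear-tau1}, which is legitimate because the extra hypothesis $\rho_{AB}>0$ is assumed throughout~(b).

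There is essentially no analytic obstacle here, since all the work is in Theorem~\ref{thm:linear}; the only points requiring a little care are bookkeeping ones. One must confirm that $d_H(\sigma_A^{(0)},\hat\sigma_A)<\infty$ so the bounds are non-vacuous — this holds because both operators share the support of $\rho_A$, hence are equivalent in the cone sense, so $d_H$ is finite by the definition in Section~\ref{ssec:hilbert}. One must also be careful with the parity of exponents (the $\tau$-bound carries $\gamma^{2n+1}$ rather than $\gamma^{2n}$ because $\tau_B^{(n)}$ is obtained from $\sigma_A^{(n)}$ by one extra application of $\mathcal{N}_{A\to B}$), and with the fact that $\kappa\in[0,1)$ so that $\gamma\kappa<\gamma$, which is noted in Theorem~\ref{thm:linear}~(b) but should be restated when concluding the improved rate. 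If one wishes, the convergence claim $d_H(\sigma_A^{(n)},\hat\sigma_A)\to 0$ can be stated as an immediate corollary, but the inequalities themselves are the substance.
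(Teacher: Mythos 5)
Your proposal is correct and follows essentially the same route as the paper: establish that $(\hat\sigma_A,\hat\tau_B)$ is a fixed point of the partial-minimization maps with $\hat\sigma_A\sim\rho_A$, $\hat\tau_B\sim\rho_B$, then compose the two contractions from Theorem~\ref{thm:linear} and induct. One small slip: \eqref{eq:unique} is stated only for $\alpha\in(\frac12,1]$, so for $\alpha\in(1,2]$ the support condition should instead be derived (as the paper does, and as your own fixed-point argument already implies via the codomains in Definition~\ref{def:am-quantum}) from $\rho_A\ll\hat\sigma_A$ for $\alpha>1$ together with the fixed-point property.
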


The following result follows from Corollary~\ref{cor:linear1}, as shown in Appendix~\ref{app:linear2}. 
It implies linear convergence of the objective function values obtained by alternating minimization. 
More specifically,~(a) implies that 
$\lvert x_{n}-I_\alpha^{\downarrow\downarrow}(A:B)_\rho \rvert$ is $\mathcal{O}(\lvert 1- \frac{1}{\alpha}\rvert^{2n})$. 
\begin{cor}[Linear convergence of objective function values]\label{cor:linear2}
\sloppy
Let $\alpha\in (1,2]$, 
$\rho_{AB}\in \mathcal{S}(AB)$, 
$(\hat{\sigma}_A,\hat{\tau}_B)\in \argmin_{(\sigma_A,\tau_B)\in \mathcal{S}(A)\times \mathcal{S}(B)}D_\alpha (\rho_{AB}\| \sigma_A\otimes \tau_B)$. 
Let $\sigma_A^{(0)}\in \mathcal{S}_{\sim \rho_A}(A)$ and let $(x_n)_{n\in \mathbb{N}}$ be given by Definition~\ref{def:am-quantum}. 
Let $\gamma \coloneqq \lvert 1-\frac{1}{\alpha}\rvert$. 
\begin{enumerate}[label=(\alph*)]
\item For all $n\in \mathbb{N}$
\begin{align}
\lvert x_{n}-I_\alpha^{\downarrow\downarrow}(A:B)_\rho \rvert 
&\leq \frac{1}{\lvert \alpha-1\rvert} \left(
\exp\left(\lvert \alpha-1\rvert (1+\gamma)\gamma^{2n} d_H(\sigma_A^{(0)},\hat{\sigma}_A) \right)
-1\right).
\end{align}
\item If $\rho_{AB}>0:$ 
For all $n\in \mathbb{N}$
\begin{align}
\lvert x_{n}-I_\alpha^{\downarrow\downarrow}(A:B)_\rho \rvert 
&\leq \frac{1}{\lvert \alpha-1\rvert} \left(
\exp\left(\lvert \alpha-1\rvert (1+\gamma\kappa)(\gamma\kappa)^{2n} d_H(\sigma_A^{(0)},\hat{\sigma}_A) \right)
-1\right)
\end{align}
where $\kappa$ is defined as in Theorem~\ref{thm:linear}~(b).
\end{enumerate}
\end{cor}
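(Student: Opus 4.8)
\textbf{Proof proposal for Corollary~\ref{cor:linear2}.}

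The plan is to translate the linear convergence of the state sequence in Hilbert's projective metric (Corollary~\ref{cor:linear1}) into a bound on the objective function values by controlling how much $D_\alpha(\rho_{AB}\|\sigma_A\otimes\tau_B)$ can change when $(\sigma_A,\tau_B)$ moves a small $d_H$-distance away from the minimizer $(\hat\sigma_A,\hat\tau_B)$. The key observation is that $x_n = D_\alpha(\rho_{AB}\|\sigma_A^{(n)}\otimes\tau_B^{(n)}) \geq I_\alpha^{\downarrow\downarrow}(A:B)_\rho = D_\alpha(\rho_{AB}\|\hat\sigma_A\otimes\hat\tau_B)$, so the absolute value can be dropped and we only need an upper bound on $x_n - I_\alpha^{\downarrow\downarrow}(A:B)_\rho$. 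Since $D_\alpha(\rho_{AB}\|\sigma_A\otimes\tau_B) = \frac{1}{\alpha-1}\log Q_\alpha(\rho_{AB}\|\sigma_A\otimes\tau_B)$ with $Q_\alpha(\rho_{AB}\|\sigma_A\otimes\tau_B) = \tr[\rho_{AB}^\alpha(\sigma_A\otimes\tau_B)^{1-\alpha}] = \tr[\rho_{AB}^\alpha(\sigma_A^{1-\alpha}\otimes\tau_B^{1-\alpha})]$, it suffices to bound the ratio $Q_\alpha(\rho_{AB}\|\sigma_A^{(n)}\otimes\tau_B^{(n)})/Q_\alpha(\rho_{AB}\|\hat\sigma_A\otimes\hat\tau_B)$ in terms of the $d_H$-distances.

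First I would recall the standard relation (Appendix~\ref{app:hilbert}) between Hilbert's projective metric and operator inequalities: if $d_H(\sigma_A,\hat\sigma_A)=r$, then after suitable normalization $e^{-r}\hat\sigma_A \leq_C \sigma_A \leq_C e^{r}\hat\sigma_A$ is \emph{not} quite available because $d_H$ is scale-invariant, but one does get $\sigma_A \leq_C M(\sigma_A/\hat\sigma_A)\,\hat\sigma_A$ with $M(\sigma_A/\hat\sigma_A)M(\hat\sigma_A/\sigma_A)=e^r$; since both are trace-one states, $M(\sigma_A/\hat\sigma_A)\leq e^{r}$ and $M(\hat\sigma_A/\sigma_A)\leq e^{r}$ (because each $M$-factor is at least $1$ for distinct normalized states, or more carefully one uses that $\tr[\sigma_A]=\tr[\hat\sigma_A]=1$ forces each factor into $[1,e^r]$). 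Hence $e^{-r}\hat\sigma_A \leq \sigma_A \leq e^{r}\hat\sigma_A$ as positive operators. For $\alpha\in(1,2]$ we have $1-\alpha\in[-1,0)$, and $t\mapsto t^{1-\alpha}$ is operator monotone decreasing on $(0,\infty)$, so $e^{-|1-\alpha|r}\hat\sigma_A^{1-\alpha} \leq \sigma_A^{1-\alpha} \leq e^{|1-\alpha|r}\hat\sigma_A^{1-\alpha}$; likewise for $\tau_B^{(n)}$ versus $\hat\tau_B$ with distance $r' := d_H(\tau_B^{(n)},\hat\tau_B)$. Taking the tensor product, $(\sigma_A^{(n)})^{1-\alpha}\otimes(\tau_B^{(n)})^{1-\alpha} \leq e^{|1-\alpha|(r+r')}\,\hat\sigma_A^{1-\alpha}\otimes\hat\tau_B^{1-\alpha}$, and since $\rho_{AB}^\alpha\geq 0$, taking $\tr[\rho_{AB}^\alpha\,\cdot\,]$ preserves the inequality, giving $Q_\alpha(\rho_{AB}\|\sigma_A^{(n)}\otimes\tau_B^{(n)}) \leq e^{|1-\alpha|(r+r')} Q_\alpha(\rho_{AB}\|\hat\sigma_A\otimes\hat\tau_B)$. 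Dividing by $\alpha-1>0$ inside the logarithm yields $x_n - I_\alpha^{\downarrow\downarrow}(A:B)_\rho \leq \frac{1}{\alpha-1}\cdot|1-\alpha|(r+r') = \frac{|1-\alpha|}{\alpha-1}(r+r')$; but a cruder exponential bound $\log(1+u)\le u$ run the other way — actually here the bound is already linear, so to match the stated form $\frac{1}{|\alpha-1|}(\exp(|\alpha-1|(1+\gamma)\gamma^{2n}d_H(\sigma_A^{(0)},\hat\sigma_A))-1)$ I would instead keep the multiplicative estimate $Q_\alpha/\hat Q_\alpha \le e^{|1-\alpha|(r+r')}$ and plug into $x_n - I = \frac{1}{\alpha-1}\log(Q_\alpha/\hat Q_\alpha)$, then use $\log z \le z-1$ to write $x_n - I \le \frac{1}{\alpha-1}(e^{|1-\alpha|(r+r')}-1) = \frac{1}{|\alpha-1|}(e^{|\alpha-1|(r+r')}-1)$. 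Finally I substitute the Corollary~\ref{cor:linear1}(a) bounds $r = d_H(\sigma_A^{(n)},\hat\sigma_A)\le \gamma^{2n}d_H(\sigma_A^{(0)},\hat\sigma_A)$ and $r' = d_H(\tau_B^{(n)},\hat\tau_B)\le \gamma^{2n+1}d_H(\sigma_A^{(0)},\hat\sigma_A)$, so $r+r'\le(1+\gamma)\gamma^{2n}d_H(\sigma_A^{(0)},\hat\sigma_A)$, giving part (a); part (b) is identical with $\gamma$ replaced by $\gamma\kappa$ throughout, using Corollary~\ref{cor:linear1}(b).

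I expect the main obstacle to be the careful justification that $d_H(\sigma_A,\hat\sigma_A)=r$ together with $\tr[\sigma_A]=\tr[\hat\sigma_A]=1$ implies the two-sided operator bound $e^{-r}\hat\sigma_A\le\sigma_A\le e^{r}\hat\sigma_A$ (equivalently, that each normalizing factor $M(\sigma_A/\hat\sigma_A)$, $M(\hat\sigma_A/\sigma_A)$ lies in $[1,e^r]$); the scale-invariance of $d_H$ means this step genuinely uses the trace-normalization, and one must also confirm that $\sigma_A,\hat\sigma_A$ are equivalent in the cone (same support), which holds since both lie in $\mathcal{S}_{\sim\rho_A}(A)$ by Remark~\ref{rem:restriction} and~\eqref{eq:unique}. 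A secondary subtlety is invoking operator monotonicity of $t\mapsto t^{1-\alpha}$ for the exponent $1-\alpha\in[-1,0)$ on the supports of the operators in question — this is standard (it is operator monotone decreasing for exponents in $[-1,0]$), but I would state it explicitly. Everything else — preservation of $\leq$ under tensoring with a fixed positive operator and under $\tr[\rho_{AB}^\alpha\,\cdot\,]$, and the elementary inequality $\log z\le z-1$ — is routine.
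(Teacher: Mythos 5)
Your proof is correct and reaches the stated bounds, but it derives the key multiplicative estimate by a genuinely different (and more elementary) route than the paper, so a comparison is worthwhile. Both arguments reduce to showing $q_n/\hat q\le\exp\left(\lvert 1-\alpha\rvert\left(d_H(\sigma_A^{(n)},\hat\sigma_A)+d_H(\tau_B^{(n)},\hat\tau_B)\right)\right)$ for $q_n\coloneqq Q_\alpha(\rho_{AB}\|\sigma_A^{(n)}\otimes\tau_B^{(n)})$ and $\hat q\coloneqq Q_\alpha(\rho_{AB}\|\hat\sigma_A\otimes\hat\tau_B)$, after which Corollary~\ref{cor:linear1} and $\log(1+t)\le t$ finish the job in the same way. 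The paper obtains this estimate at the level of the powered product operators: it bounds $q_n/\hat q$ by $M\left((\sigma_A^{(n)}\otimes\tau_B^{(n)})^{1-\alpha}/(\hat\sigma_A\otimes\hat\tau_B)^{1-\alpha}\right)$, upgrades this to $\exp(d_H(\cdot,\cdot))$ via Lemma~\ref{lem:m1} (whose proof that the reverse $M$-factor is at least $1$ goes through the reverse H\"older inequality), and then contracts $d_H$ under the power map via Lemma~\ref{lem:quantum-r}~(a) together with additivity of $d_H$ under tensor products. You instead convert $d_H(\sigma_A^{(n)},\hat\sigma_A)=r$ plus trace normalization directly into the operator sandwich $e^{-r}\hat\sigma_A\le\sigma_A^{(n)}\le e^{r}\hat\sigma_A$ --- your observation that each $M$-factor lies in $[1,e^{r}]$ follows from taking traces of $\sigma\le M(\sigma/\hat\sigma)\,\hat\sigma$, a one-line replacement for the paper's reverse-H\"older lemma --- then push the sandwich through the operator anti-monotone map $t\mapsto t^{1-\alpha}$, tensor, and trace against $\rho_{AB}^\alpha\ge 0$. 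This needs neither Lemma~\ref{lem:m1} nor the abstract $d_H$-contraction property of operator powers, and, as you note, it actually yields the stronger linear bound $x_n-I_\alpha^{\downarrow\downarrow}(A:B)_\rho\le(1+\gamma)\gamma^{2n}d_H(\sigma_A^{(0)},\hat\sigma_A)$ before being deliberately weakened to the stated exponential form. The caveats you flag (common supports, use of trace normalization to defeat the scale invariance of $d_H$, operator anti-monotonicity of $t^{1-\alpha}$ for $1-\alpha\in[-1,0)$) are exactly the right ones and are all satisfied here, since $\sigma_A^{(n)}\sim\rho_A\sim\hat\sigma_A$ and $\tau_B^{(n)}\sim\rho_B\sim\hat\tau_B$ for $\alpha\in(1,2]$.
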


\begin{rem}[Asymptotic convergence, uniqueness of minimizer]\label{rem:linear-asymptotic}
We can conclude from Corollary~\ref{cor:linear1}~(a) and Corollary~\ref{cor:linear2}~(a) that alternating minimization converges to the doubly minimized PRMI of order $\alpha$ for any $\alpha\in (1,2]$, and that the minimizer of the underlying optimization problem is unique. 
More formally, let $\alpha\in (1,2]$, 
$\rho_{AB}\in \mathcal{S}(AB)$, 
$\sigma_A^{(0)}\in \mathcal{S}_{\rho_A\ll}(A)$, 
and let $\tau_B^{(n)},\sigma_A^{(n+1)},x_n$ be given by Definition~\ref{def:am-quantum} for all $n\in \mathbb{N}$. 
Then, $(x_n)_{n\in \mathbb{N}}$ is monotonically decreasing and 
$\lim_{n\rightarrow \infty}x_n=I_\alpha^{\downarrow\downarrow}(A:B)_\rho$, 
the limits $\sigma_A^{(\infty)}\coloneqq \lim_{n\rightarrow\infty}\sigma_A^{(n)}$ and 
$\tau_B^{(\infty)}\coloneqq \lim_{n\rightarrow\infty}\tau_B^{(n)}$ exist, and
\begin{equation}
\argmin_{(\sigma_A,\tau_B)\in \mathcal{S}(A)\times\mathcal{S}(B)}D_\alpha (\rho_{AB}\| \sigma_A\otimes \tau_B)=\{(\sigma_A^{(\infty)},\tau_B^{(\infty)})\} 
\subseteq \mathcal{S}_{\sim \rho_A}(A)\times \mathcal{S}_{\sim \rho_B}(B).
\end{equation}
\end{rem}

\begin{rem}[Upper bound on $d_H(\sigma_A^{(0)},\hat{\sigma}_A)$ independent of $\hat{\sigma}_A$]\label{rem:dh-sigma}
The upper bounds in Corollary~\ref{cor:linear1} and Corollary~\ref{cor:linear2} depend on $\hat{\sigma}_A$ through the constant $d_H(\sigma_A^{(0)},\hat{\sigma}_A)$. 
From a computational standpoint, it is desirable to obtain upper bounds that do not depend on the global minimizer $\hat{\sigma}_A$, since $\hat{\sigma}_A$ is not known a priori. 
This issue is addressed in Appendix~\ref{app:dh-sigma}, where we derive an upper bound on $d_H(\sigma_A^{(0)},\hat{\sigma}_A)$ that is independent of $\hat{\sigma}_A$. 
\end{rem}

\begin{rem}[Alternating minimization algorithm for $\alpha\in {(1,2]}$]\label{rem:am}
Based on Corollary~\ref{cor:linear2}, we propose an alternating minimization algorithm for computing the doubly minimized PRMI of order $\alpha\in (1,2]$: Algorithm~1. 
For any given inputs 
$\alpha\in (1,2]$, $\rho_{AB}\in \mathcal{S}(AB)$, $\varepsilon_0\in (0,\infty)$, $\sigma_A^{(0)}\in \mathcal{S}_{\rho_A\ll}(A)$, 
Corollary~\ref{cor:linear2} and Remark~\ref{rem:dh-sigma} (see Proposition~\ref{prop:bound} in Appendix~\ref{app:dh-sigma}) imply that 
Algorithm~1 terminates after a finite number of steps, with the output $x$ satisfying 
$\lvert x-I_\alpha^{\downarrow\downarrow}(A:B)_\rho\rvert \leq \varepsilon_0$. 
The input $\sigma_A^{(0)}\in \mathcal{S}_{\rho_A\ll}(A)$ can, in principle, be chosen arbitrarily. 
However, since the optimal choice for $\alpha=1$ is $\rho_A$, see~\eqref{eq:unique1}, a particularly natural choice for the initial state is $\sigma_A^{(0)}\coloneqq \rho_A$.
\end{rem}

\begin{center}
\begin{NiceTabular}{ll}[code-before = \rowcolor{gray!20}{1}]
\toprule
\textbf{Algorithm~1}
&Output $x$ satisfies $\lvert x-I_\alpha^{\downarrow\downarrow}(A:B)_\rho \rvert \leq \varepsilon_0$ for $\alpha\in (1,2]$.
\\ \midrule
\textbf{Input} &
$\alpha\in (1,2],\rho_{AB}\in \mathcal{S}(AB),\varepsilon_0 \in (0,\infty),\sigma_A^{(0)}\in \mathcal{S}_{\rho_A\ll}(A)$
\\ \hline
\textbf{Definitions}&
$\mathcal{N}_{A\rightarrow B}$, $\mathcal{N}_{B\rightarrow A}$ as in~\eqref{eq:def-N}
\\
&$c_0$ as in~\eqref{eq:def-linear-c0}
\\
&$\gamma\coloneqq 1-\frac{1}{\alpha}$
\\ \hline
\textbf{Alternating} &
$\tau_B\coloneqq \mathcal{N}_{A\rightarrow B}(\sigma_A^{(0)})$; 
$x\coloneqq D_\alpha (\rho_{AB}\| \sigma_A^{(0)}\otimes \tau_B)$
\\
\textbf{minimization} &
$n\coloneqq 0$; 
$\varepsilon\coloneqq \frac{1}{\alpha-1} \left(
\exp((\alpha -1) (1+\gamma)\gamma^{2n} c_0 )
-1\right)$
\\
&while $\varepsilon \geq \varepsilon_0$:
\\
&\hspace*{2em}
$\sigma_A\coloneqq \mathcal{N}_{B\rightarrow A}(\tau_B)$; 
$\tau_B\coloneqq \mathcal{N}_{A\rightarrow B}(\sigma_A)$; 
$x\coloneqq D_\alpha (\rho_{AB}\| \sigma_A\otimes \tau_B)$
\\
&\hspace*{2em}
$n\coloneqq n+1$; 
$\varepsilon\coloneqq \frac{1}{\alpha -1} \left(
\exp((\alpha -1) (1+\gamma)\gamma^{2n} c_0 )
-1\right)$
\\ \hline
\textbf{Output}&
$x$
\\ \bottomrule
\end{NiceTabular}
\end{center}

\subsection{Sublinear convergence for $\alpha\in (\frac{1}{2},1)$}\label{ssec:sublinear}

The following theorem is the main result of this section. 
Its proof is given in Appendix~\ref{app:proof-frechet}. 

\begin{thm}[Sublinear convergence of objective function values]\label{thm:sublinear}
Let $\alpha\in (\frac{1}{2},1)$, 
$\rho_{AB}\in \mathcal{S}(AB)$, 
$\sigma_A^{(0)}\in \mathcal{S}_{\rho_A\ll}(A)$, and let $(x_n)_{n\in \mathbb{N}}$ be given by Definition~\ref{def:am-quantum}. 
Moreover, let
\begin{subequations} \label{eq:def-sublinear-c0}
\begin{align}
\lambda_A&\coloneqq \min(\spec(\tr_B[\rho_{AB}^\alpha])\setminus\{0\}),
\label{eq:def-lam_a}\\
\lambda_B&\coloneqq \min(\spec(\tr_A[\rho_{AB}^\alpha])\setminus\{0\}),
\\
\lambda_{A,0}&\coloneqq \min (\spec(\tilde{\sigma}_A^{(0)} )\setminus \{0\}) 
\quad \text{where}\quad 
\tilde{\sigma}_A^{(0)}\coloneqq \rho_A^0\sigma_A^{(0)}\rho_A^0/\tr[\rho_A^0\sigma_A^{(0)}],
\\
c_0
&\coloneqq 2\sqrt{5} \max(\lambda_B^{-1}, \lambda_A^{\frac{\alpha(1-\alpha)}{1-2\alpha}}\lambda_B^{\frac{\alpha^2}{1-2\alpha}} ) \lambda_{A,0}^{\alpha-1}.
\end{align}
\end{subequations} 
Then, for all $n\in \mathbb{N}_{>0}$ 
\begin{align}
\lvert x_{n}-I_\alpha^{\downarrow\downarrow}(A:B)_\rho\rvert
&\leq c_0 \sqrt{x_{n-1}-x_{n}},
\label{eq:thm-n1}
\end{align}
and as a consequence,
\begin{align}
\lvert x_{n}-I_\alpha^{\downarrow\downarrow}(A:B)_\rho\rvert
&\leq \max\left(\frac{3}{2} c_0^{2},2x_0\right) \frac{1}{n}.
\label{eq:thm-n2}
\end{align}
\end{thm}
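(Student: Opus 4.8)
The plan is to establish \eqref{eq:thm-n1} first and then derive \eqref{eq:thm-n2} from it by an elementary recursion argument. For \eqref{eq:thm-n1}, the key idea is to exploit the two properties emphasized in the introduction: the explicit form of the partial minimizers and the uniqueness of the global minimizer for $\alpha\in(\frac12,1]$. Let $(\hat\sigma_A,\hat\tau_B)$ be the unique global minimizer from \eqref{eq:unique}, so that $I_\alpha^{\downarrow\downarrow}(A:B)_\rho=D_\alpha(\rho_{AB}\|\hat\sigma_A\otimes\hat\tau_B)$. I would bound $x_n-I_\alpha^{\downarrow\downarrow}(A:B)_\rho$ by converting the difference of Petz divergences into a difference of $Q_\alpha$-values, using that $D_\alpha=\frac{1}{\alpha-1}\log Q_\alpha$ and that $\log$ is Lipschitz on intervals bounded away from zero (the relevant $Q_\alpha$-values lie in a compact subinterval of $(0,\infty)$ thanks to the spectral bounds $\lambda_A,\lambda_B,\lambda_{A,0}$, which is precisely why these quantities enter the constant $c_0$). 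The heart of the argument is then a Pinsker-type inequality: using \eqref{eq:pinsker} (the Rényi Pinsker inequality for $\alpha\in[\frac12,1)$) I would control how far $\sigma_A^{(n)}$ and $\tau_B^{(n)}$ are from $\hat\sigma_A,\hat\tau_B$ in a Schatten quasi-norm in terms of $1-Q_\alpha$ of appropriate pairs, and conversely relate the decrease $x_{n-1}-x_n$ (which equals a difference of two consecutive objective values, each a partial minimum) to the same kind of quantity via the Sibson identity \eqref{eq:sibson}. Concatenating these estimates gives $\lvert x_n-I_\alpha^{\downarrow\downarrow}(A:B)_\rho\rvert\le c_0\sqrt{x_{n-1}-x_n}$ after absorbing all dimension- and spectrum-dependent constants into $c_0$.

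More concretely, I would proceed as follows. First, record that all iterates $\sigma_A^{(n)}$ (for $n\ge1$) lie in $\mathcal{S}_{\sim\rho_A}(A)$ and $\tau_B^{(n)}$ in $\mathcal{S}_{\sim\rho_B}(B)$ by the mapping properties in \eqref{eq:def-N}, and obtain uniform lower bounds on their smallest nonzero eigenvalues in terms of $\lambda_A,\lambda_B$ (and $\lambda_{A,0}$ for the $n=0$ step); this is the bookkeeping that makes every subsequent $\log$ or power map Lipschitz with an explicit constant. Second, using the Sibson identity twice, write $x_{n-1}-x_n$ as a sum of two nonnegative ``Bregman-like'' terms measuring the improvement from replacing $\sigma_A^{(n-1)}$ by $\sigma_A^{(n)}$ and $\tau_B^{(n-1)}$ by $\tau_B^{(n)}$; lower bound each by a multiple of $\|(\cdot)^\alpha-(\cdot)^\alpha\|_{1/\alpha}^2$ via \eqref{eq:pinsker}. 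Third, use the joint concavity of $Q_\alpha(\rho_{AB}\|\sigma_A\otimes\tau_B)$ together with the optimality conditions satisfied by $(\hat\sigma_A,\hat\tau_B)$ to bound $Q_\alpha(\rho_{AB}\|\hat\sigma_A\otimes\hat\tau_B)-Q_\alpha(\rho_{AB}\|\sigma_A^{(n)}\otimes\tau_B^{(n)})$ linearly in terms of the distances $\|(\hat\sigma_A)^\alpha-(\sigma_A^{(n)})^\alpha\|$ and $\|(\hat\tau_B)^\alpha-(\tau_B^{(n)})^\alpha\|$ (here a first-order/convexity estimate replaces a gradient Lipschitz assumption). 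Fourth, close the loop: since $\hat\tau_B=\mathcal{N}_{A\rightarrow B}(\hat\sigma_A)$ and $\tau_B^{(n)}=\mathcal{N}_{A\rightarrow B}(\sigma_A^{(n)})$, and $\mathcal{N}_{A\rightarrow B}$ is built from the continuous maps $X\mapsto X^{1-\alpha}$, partial trace, $X\mapsto X^{1/\alpha}$, and normalization, control $\|(\hat\tau_B)^\alpha-(\tau_B^{(n)})^\alpha\|$ by $\|(\hat\sigma_A)^\alpha-(\sigma_A^{(n)})^\alpha\|$ up to the explicit constant, so that everything is ultimately governed by the single quantity $\|(\hat\sigma_A)^\alpha-(\sigma_A^{(n)})^\alpha\|^2$, which by step two and \eqref{eq:pinsker} is $O(x_{n-1}-x_n)$. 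Taking square roots yields \eqref{eq:thm-n1}.

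For \eqref{eq:thm-n2}, set $\delta_n\coloneqq x_n-I_\alpha^{\downarrow\downarrow}(A:B)_\rho\ge0$; the sequence $(x_n)$ is monotonically nonincreasing (each step is a partial minimization), so $(\delta_n)$ is nonincreasing as well, and \eqref{eq:thm-n1} reads $\delta_n\le c_0\sqrt{\delta_{n-1}-\delta_n}$, i.e. $\delta_n^2\le c_0^2(\delta_{n-1}-\delta_n)$. This is the standard recursion that yields an $O(1/n)$ rate: dividing by $\delta_{n-1}\delta_n$ gives $\frac{1}{\delta_n}-\frac{1}{\delta_{n-1}}\ge\frac{\delta_n}{c_0^2\delta_{n-1}}\ge\frac{1}{c_0^2}\cdot\frac{\delta_n}{\delta_{n-1}}$, and combined with the elementary observation that either $\delta_n\ge\frac12\delta_{n-1}$ (giving a constant-factor telescoping gain) or $\delta_n<\frac12\delta_{n-1}$ (in which case $\delta_n$ has already halved), a short case analysis produces $\delta_n\le\frac{C}{n}$ with $C=\max(\frac32 c_0^2,2x_0)$; the $2x_0$ term handles the first few iterations before the recursion ``kicks in'' (using $\delta_0\le x_0$). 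I would present this as a self-contained lemma on real sequences.

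The main obstacle I anticipate is step three combined with step four: turning the (known) joint concavity of $Q_\alpha$ and the uniqueness of the minimizer into a quantitative, constant-explicit bound of the form ``objective suboptimality $\lesssim$ distance of iterates to the minimizer'' without invoking strong convexity or a gradient Lipschitz hypothesis — this is exactly the gap the paper says prevents citing \cite{both2021rate} directly. The delicate points are (i) getting the right power of the Schatten quasi-norm so that the Pinsker bound on $x_{n-1}-x_n$ and the concavity bound on $\delta_n$ combine to an honest square-root relation rather than something weaker, and (ii) keeping every implicit constant effective, which forces careful tracking of the smallest nonzero eigenvalues of all iterates through the maps $\mathcal{N}_{A\rightarrow B},\mathcal{N}_{B\rightarrow A}$ and is the reason $c_0$ has the somewhat intricate form in \eqref{eq:def-sublinear-c0}.
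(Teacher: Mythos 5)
Your reduction of \eqref{eq:thm-n2} to \eqref{eq:thm-n1} is fine and matches the paper's Lemma~\ref{lem:an} in substance, and your step one (eigenvalue bookkeeping through $\mathcal{N}_{A\rightarrow B},\mathcal{N}_{B\rightarrow A}$, which is the paper's Lemma~\ref{lem:spectrum-leq1}) and step two (Sibson plus the R\'enyi Pinsker inequality~\eqref{eq:pinsker} to lower bound $x_{n-1}-x_n$ by a squared quasi-norm distance) are also on target. The problem is the way you propose to close the loop in steps three and four.

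The gap is your claim that $\lVert \hat{\sigma}_A^{\alpha}-\sigma_A^{(n)\alpha}\rVert^2=\mathcal{O}(x_{n-1}-x_n)$ ``by step two and~\eqref{eq:pinsker}''. Step two only controls distances between \emph{consecutive iterates}: the Sibson identity gives $x_{n+1/2}-x_{n+1}=D_\alpha(\tau_B^{(n+1)}\Vert\tau_B^{(n)})$, and Pinsker converts this into a bound on $\lVert\tau_B^{(n+1)\alpha}-\tau_B^{(n)\alpha}\rVert_{1/\alpha}^2$. Nothing in these ingredients bounds the distance of an iterate to the \emph{global} minimizer $(\hat{\sigma}_A,\hat{\tau}_B)$; deducing ``iterate is within $\mathcal{O}(\sqrt{x_{n-1}-x_n})$ of the optimum'' from ``consecutive iterates are within $\mathcal{O}(\sqrt{x_{n-1}-x_n})$ of each other'' would require precisely the kind of error-bound or contraction property that is unavailable here for $\alpha\in(\tfrac12,1)$ (the paper could not extend the Hilbert-metric contraction to this range), so the argument is circular: it assumes a quantitative convergence of the iterates that is stronger than the theorem being proved. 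The paper's proof avoids this entirely. It applies the first-order convexity inequality to $f=-Q_\alpha$ at the \emph{current} point, $\hat{q}-q_{n+1/2}\leq Df(\sigma_A^{(n+1)},\tau_B^{(n)})(\sigma_A^{(n+1)}-\hat{\sigma}_A,\tau_B^{(n)}-\hat{\tau}_B)$, kills the first partial derivative by optimality of $\sigma_A^{(n+1)}$, and then \emph{subtracts} $D_2f(\sigma_A^{(n)},\tau_B^{(n)})(\tau_B^{(n)}-\hat{\tau}_B)$, which vanishes by optimality of $\tau_B^{(n)}$. What remains is a difference of partial derivatives at consecutive iterates, $\tr[(q_n\tau_B^{(n)\alpha}-q_{n+1}\tau_B^{(n+1)\alpha})\,Dg(\tau_B^{(n)})(\tau_B^{(n)}-\hat{\tau}_B)]$, estimated by Cauchy--Schwarz: the factor involving $\tau_B^{(n)}-\hat{\tau}_B$ is only bounded crudely by an $\mathcal{O}(\lambda_{B,n}^{-\alpha})$ constant (no closeness to $\hat{\tau}_B$ is needed), and all the smallness comes from the consecutive-iterate difference, which Pinsker does control. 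You would need to restructure your steps three and four along these lines; as written, the square-root relation \eqref{eq:thm-n1} is not established.
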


\begin{rem}[Asymptotic convergence, uniqueness of minimizer]\label{rem:sublinear-asymptotic}
We can conclude from Theorem~\ref{thm:sublinear} that alternating minimization converges to the doubly minimized PRMI of order $\alpha$ for any $\alpha\in (\frac{1}{2},1)$. 
The underlying minimizer has been shown to be unique and to be contained in $\mathcal{S}_{\sim\rho_A}(A)\times \mathcal{S}_{\sim \rho_B}(B)$ in previous work~\cite{burri2024properties}, see~\eqref{eq:unique}. 
More formally, 
let $\alpha\in (\frac{1}{2},1)$, 
$\rho_{AB}\in \mathcal{S}(AB)$, 
$\sigma_A^{(0)}\in \mathcal{S}_{\rho_A\ll}(A)$, 
and let $\tau_B^{(n)},\sigma_A^{(n+1)},x_n$ be given by Definition~\ref{def:am-quantum} for all $n\in \mathbb{N}$. 
Then, $(x_n)_{n\in \mathbb{N}}$ is monotonically decreasing and 
$\lim_{n\rightarrow \infty}x_n=I_\alpha^{\downarrow\downarrow}(A:B)_\rho$, 
the limits $\sigma_A^{(\infty)}\coloneqq \lim_{n\rightarrow\infty}\sigma_A^{(n)}$ and 
$\tau_B^{(\infty)}\coloneqq \lim_{n\rightarrow\infty}\tau_B^{(n)}$ exist, and
\begin{equation}
\argmin_{(\sigma_A,\tau_B)\in \mathcal{S}(A)\times\mathcal{S}(B)}D_\alpha (\rho_{AB}\| \sigma_A\otimes \tau_B)=\{(\sigma_A^{(\infty)},\tau_B^{(\infty)})\} 
\subseteq \mathcal{S}_{\sim \rho_A}(A)\times \mathcal{S}_{\sim \rho_B}(B).
\end{equation}
\end{rem}

\begin{rem}[Alternating minimization algorithm for $\alpha\in (\frac{1}{2},1)$]
Based on Theorem~\ref{thm:sublinear}, we propose an alternating minimization algorithm for computing the doubly minimized PRMI of order $\alpha\in (\frac{1}{2},1)$: Algorithm~2. 
For any given inputs 
$\alpha\in (\frac{1}{2},1)$, $\rho_{AB}\in \mathcal{S}(AB)$, $\varepsilon_0\in (0,\infty)$, $\sigma_A^{(0)}\in \mathcal{S}_{\rho_A\ll}(A)$, 
Theorem~\ref{thm:sublinear} implies that 
Algorithm~2 terminates after a finite number of steps, with the output $x$ satisfying 
$\lvert x-I_\alpha^{\downarrow\downarrow}(A:B)_\rho\rvert \leq \varepsilon_0$.
The input $\sigma_A^{(0)}\in \mathcal{S}_{\rho_A\ll}(A)$ can be chosen arbitrarily, but a particularly natural choice is $\sigma_A^{(0)}\coloneqq \rho_A$ (cf. Remark~\ref{rem:am}).
\end{rem}

\begin{center}
\begin{NiceTabular}{ll}[code-before = \rowcolor{gray!20}{1}]
\toprule
\textbf{Algorithm~2}
&Output $x$ satisfies $\lvert x-I_\alpha^{\downarrow\downarrow}(A:B)_\rho \rvert \leq \varepsilon_0$ for $\alpha\in (\frac{1}{2},1)$.
\\ \midrule
\textbf{Input} &
$\alpha\in (\frac{1}{2},1),\rho_{AB}\in \mathcal{S}(AB),\varepsilon_0 \in (0,\infty),\sigma_A^{(0)}\in \mathcal{S}_{\rho_A\ll}(A)$
\\ \hline
\textbf{Definitions}&
$\mathcal{N}_{A\rightarrow B}$, $\mathcal{N}_{B\rightarrow A}$ as in~\eqref{eq:def-N}
\\
&$c_0$ as in~\eqref{eq:def-sublinear-c0}
\\ \hline
\textbf{Alternating} &
$\tau_B\coloneqq \mathcal{N}_{A\rightarrow B}(\sigma_A^{(0)})$; 
$x'\coloneqq D_\alpha (\rho_{AB}\| \sigma_A^{(0)}\otimes \tau_B)$
\\
\textbf{minimization} &
$\sigma_A\coloneqq \mathcal{N}_{B\rightarrow A}(\tau_B)$; 
$\tau_B\coloneqq \mathcal{N}_{A\rightarrow B}(\sigma_A)$; 
$x\coloneqq D_\alpha (\rho_{AB}\| \sigma_A\otimes \tau_B)$
\\ 
&$\varepsilon\coloneqq c_0 \sqrt{ x'-x }$
\\
&while $\varepsilon \geq \varepsilon_0$:
\\
&\hspace*{2em}
$x'\coloneqq x$
\\
&\hspace*{2em}
$\sigma_A\coloneqq \mathcal{N}_{B\rightarrow A}(\tau_B)$; 
$\tau_B\coloneqq \mathcal{N}_{A\rightarrow B}(\sigma_A)$; 
$x\coloneqq D_\alpha (\rho_{AB}\| \sigma_A\otimes \tau_B)$
\\
&\hspace*{2em}
$\varepsilon\coloneqq c_0 \sqrt{ x'-x }$
\\ \hline
\textbf{Output}&
$x$
\\ \bottomrule
\end{NiceTabular}
\end{center}

\section{Conclusion}\label{sec:conclusion}
We have analyzed the convergence of alternating minimization of $D_\alpha (\rho_{AB}\| \sigma_A\otimes \tau_B)$ over states $\sigma_A$ and $\tau_B$ to the doubly minimized PRMI $I_\alpha^{\downarrow\downarrow}(A:B)_\rho$ for any fixed $\alpha\in (\frac{1}{2},1)\cup (1,2]$ and $\rho_{AB}\in \mathcal{S}(AB)$. 
Our main results address the non-asymptotic convergence of the objective function values after $n$ iterations of alternating minimization, denoted by $x_n$. 
Specifically, we have shown that 
\begin{itemize}[noitemsep]
\item for $\alpha \in (1,2]$, 
$\lvert x_n-I_\alpha^{\downarrow\downarrow}(A:B)_\rho \rvert$ is 
$\mathcal{O}(\lvert 1-\frac{1}{\alpha}\rvert^{2n})$, see Corollary~\ref{cor:linear2}~(a), and
\item for $\alpha\in (\frac{1}{2},1)$, 
$\lvert x_n-I_\alpha^{\downarrow\downarrow}(A:B)_\rho \rvert$ is
$\mathcal{O}(1/n)$ as $n\rightarrow\infty$, see Theorem~\ref{thm:sublinear}. 
\end{itemize}
These results prove linear and sublinear convergence of the objective function value, respectively. 
Based on these results, we have proposed two alternating minimization algorithms for computing the doubly minimized PRMI of order $\alpha$ up to an arbitrarily small error, for $\alpha \in (1,2]$ and $\alpha \in (\frac{1}{2},1)$, respectively, see Algorithm~1 and Algorithm~2. 

As explained in the introduction, previous work~\cite{tsai2024linearconvergencehilbertsprojective} in classical information theory implies that if $\rho_{AB}$ is a CC state, then 
$\lvert x_n-I_\alpha^{\downarrow\downarrow}(A:B)_\rho \rvert$ is 
$\mathcal{O}(\lvert 1-\frac{1}{\alpha}\rvert^{2n})$ for any $\alpha\in (\frac{1}{2},1)\cup (1,\infty)$. 
Our findings show that this result can be generalized to all quantum states if $\alpha\in (1,2]$ by using similar proof techniques. 
Whether linear convergence holds for all quantum states over an even larger range of $\alpha$ remains an open question for future research.

This work focuses on alternating minimization as a specific numerical method for computing the doubly minimized PRMI. 
How alternating minimization compares to other numerical methods~\cite{bauschke2017convex,zaslavski2020projected,zaslavski2022optimization} in this context is left as an open question.

Another potential direction for future research is to analyze the convergence of alternating minimization for computing the doubly minimized \emph{sandwiched} R\'enyi mutual information (SRMI) of order $\alpha$~\cite{burri2024properties2}. 
This measure is defined in a similar way to the doubly minimized PRMI, except that the sandwiched divergence $\widetilde{D}_\alpha$ is used instead of the Petz divergence $D_\alpha$. 
It is known that $\widetilde{D}_\alpha(\rho_{AB}\| \sigma_A\otimes\tau_B)$ is jointly convex in $\sigma_A$ and $\tau_B$~\cite{cheng2023tight} for $\alpha\in (1,\infty)$ 
(just as $D_\alpha(\rho_{AB}\| \sigma_A\otimes\tau_B)$ is jointly convex for $\alpha\in (\frac{1}{2},1)$~\cite{burri2024properties}), suggesting that alternating minimization is a suitable numerical method for its computation~\cite{both2021rate}. 
However, analyzing the convergence of alternating minimization for computing the doubly minimized SRMI is likely to be more challenging than for the doubly minimized PRMI, since no explicit expression is currently known for the partial minimizers associated with the doubly minimized SRMI~\cite{hayashi2016correlation,burri2024properties2}. 

\begin{acknowledgments}
I am grateful to Renato Renner for helpful discussions and comments, and to Martin Sandfuchs for discussions. 
This work was supported by 
the Swiss National Science Foundation via project No.\ 20QU-1\_225171, 
the CHIST-ERA project No.\ 20CH21\_218782,
and
the National Centre of Competence in Research SwissMAP, 
and the Quantum Center at ETH Zurich.
\end{acknowledgments}

\appendix
\section{Classical setting}\label{app:cc}
In this appendix, we review the results on alternating minimization for computing the doubly minimized RMI in~\cite{tsai2024linearconvergencehilbertsprojective}. 
We then show how these results imply linear convergence of the objective function values to the doubly minimized RMI (Corollary~\ref{cor:cc-linear2}). 
To this end, we first explain our notation for the classical setting (\ref{ssec:notation-cc}). 
Then, we review the findings of~\cite{tsai2024linearconvergencehilbertsprojective} and extend them by the aforementioned corollary (\ref{ssec:linear-cc}).

\subsection{Notation for classical setting}\label{ssec:notation-cc}
\subsubsection{Probability mass functions}

Let $\mathcal{X}$ be a finite set.
The support of a function $P:\mathcal{X}\rightarrow\mathbb{R}$ is denoted as $\supp (P)\coloneqq\{x\in \mathcal{X}:P(x)\neq 0\}$.
For two functions $P,Q:\mathcal{X}\rightarrow\mathbb{R}$, 
$P\ll Q$ is true iff $\supp(P)\subseteq \supp(Q)$. 
$P\sim Q$ is true iff $\supp(P)=\supp(Q)$. 
$P\perp Q$ is true iff $\supp(P)\cap \supp(Q)=\emptyset$.

The set of PMFs over $\mathcal{X}$ is 
$\mathcal{P}(\mathcal{X})\coloneqq \{P:\mathcal{X}\rightarrow [0,\infty) \text{ such that } \sum_{x\in \mathcal{X}}P(x)=1\}$. 
The set of PMFs that have the same support as $P:\mathcal{X}\rightarrow\mathbb{R}$ is denoted as 
$\mathcal{P}_{\sim P}(\mathcal{X})\coloneqq \{Q\in \mathcal{P}(\mathcal{X}):Q\sim P\}$.

\subsubsection{R\'enyi divergence}
The \emph{R\'enyi divergence} of order $\alpha\in [0,1)\cup (1,\infty)$ is defined for 
$P,Q\in \mathcal{P}(\mathcal{X})$ as
\begin{align}
D_\alpha (P\| Q) \coloneqq \frac{1}{\alpha-1}\log \sum_{x\in \supp(P)} 
P(x)^\alpha Q(x)^{1-\alpha}
\end{align}
where for $\alpha > 1$, we read $P(x)^\alpha Q(x)^{1-\alpha}$ as $P(x)^\alpha /Q(x)^{\alpha-1}$ and we adopt the convention $y/0 = \infty$ for $y > 0$. 
Moreover, $D_1$ is defined as the limit of $D_\alpha$ for $\alpha\rightarrow 1$. 

The \emph{doubly minimized RMI} of order $\alpha\in [0,\infty)$ is defined for $P_{XY}\in \mathcal{P}(\mathcal{X}\times \mathcal{Y})$ as
\begin{align}
I_\alpha^{\downarrow\downarrow}(X:Y)_P \coloneqq 
\inf_{(Q_X,R_Y)\in \mathcal{P}(\mathcal{X})\times \mathcal{P}(\mathcal{Y})} D_\alpha (P_{XY}\| Q_X R_Y).
\label{eq:cc-2prmi}
\end{align}
Properties of the doubly minimized RMI have been studied in~\cite{lapidoth2019two}. 

Partial minimizers of the optimization problem in~\eqref{eq:cc-2prmi} can be characterized as follows. 
For any $P_{XY}\in \mathcal{P}(\mathcal{X}\times \mathcal{Y}),Q_X\in \mathcal{P}(\mathcal{X}),\alpha\in (0,\infty)$ holds that if 
$(\alpha\in (0,1)\land P_X\not\perp R_X)\lor P_X\ll R_X$, then~\cite{lapidoth2019two}
\begin{align}\label{eq:qx-optimal}
\hat{R}_Y(\cdot)
&\coloneqq \frac{(\sum_{x\in \mathcal{X}}P_{XY}(x,\cdot)^\alpha Q_X(x)^{1-\alpha} )^{\frac{1}{\alpha}}}{\sum_{y\in \mathcal{Y}}(\sum_{x\in \mathcal{X}}P_{XY}(x,y)^\alpha Q_X(x)^{1-\alpha} )^{\frac{1}{\alpha}}}
\end{align}
is such that $\argmin_{R_Y\in \mathcal{P}(\mathcal{Y})} D_\alpha (P_{XY}\| Q_X R_Y)
=\{\hat{R}_Y\}$.

For $\alpha\in (\frac{1}{2},\infty)$, the optimization problem in~\eqref{eq:cc-2prmi} has a unique minimizer: 
For any given $\alpha\in (\frac{1}{2},\infty)$, there exists $(\hat{Q}_X,\hat{R}_Y)\in \mathcal{P}_{\sim P_X}(X)\times \mathcal{P}_{\sim P_Y}(Y)$ such that~\cite{lapidoth2019two,burri2024properties}
\begin{align}\label{eq:cc-unique}
\argmin_{(Q_X,R_Y)\in\mathcal{P}(X)\times \mathcal{P}(Y)} D_\alpha (P_{XY}\| Q_X R_Y)
=\{(\hat{Q}_X,\hat{R}_Y)\}.
\end{align}

\subsubsection{Hilbert's projective metric}
Recall the notation associated with Hilbert's projective metric from Section~\ref{ssec:hilbert}.

\emph{Classical setting.} 
In the classical setting, the relevant real vector space $V$ is given by the set of all functions from a finite set $\mathcal{X}$ to $\mathbb{R}$. 
The cone $C$ is given by the subset of all functions from $\mathcal{X}$ to $[0,\infty)$. 
The dominance relation $P\ll_C Q$ is given by $P\ll Q$ for all $P,Q\in C$. 
The equivalence relation $P\sim_C Q$ is given by $P\sim Q$ for all $P,Q\in C$. 
Moreover, for all $P,Q\in C\setminus \{0\}$ such that $P\ll Q$
\begin{align}\label{eq:m-xy-classical}
M(P/Q)=\max_{x\in \supp(Q)}\frac{P(x)}{Q(x)}.
\end{align}

\subsection{Linear convergence for $\alpha\in (\frac{1}{2},\infty)$ for classical setting}\label{ssec:linear-cc}
Alternating minimization of $D_\alpha (P_{XY}\| Q_XR_Y)$ over PMFs $Q_X$ and $R_Y$ proceeds as described in the following definition, see~\eqref{eq:qx-optimal}.

\begin{defn}[Alternating minimization: Iteration rule]\label{def:am-classical}
For any given $\alpha\in (0,\infty)$, $P_{XY}\in \mathcal{P}(\mathcal{X}\times \mathcal{Y})$, we define the functions
\begin{subequations}\label{eq:def-N-cc}
\begin{align}
\mathcal{N}_{X\rightarrow Y}: \mathcal{P}_{P_X\ll }(\mathcal{X}) \rightarrow \mathcal{P}_{\sim P_Y}(\mathcal{Y}),
\quad
&Q_X\mapsto \frac{(\sum_{x\in \mathcal{X}}P_{XY}(x,\cdot)^\alpha Q_X(x)^{1-\alpha} )^{\frac{1}{\alpha}}}{\sum_{y\in \mathcal{Y}}(\sum_{x\in \mathcal{X}}P_{XY}(x,y)^\alpha Q_X(x)^{1-\alpha} )^{\frac{1}{\alpha}}},
\\
\mathcal{N}_{Y\rightarrow X}: \mathcal{P}_{P_Y\ll }(\mathcal{Y}) \rightarrow \mathcal{P}_{\sim P_X}(\mathcal{X}),
\quad
&R_Y\mapsto \frac{(\sum_{y\in \mathcal{Y}}P_{XY}(\cdot,y)^\alpha R_Y(y)^{1-\alpha} )^{\frac{1}{\alpha}}}{\sum_{x\in \mathcal{X}}(\sum_{y\in \mathcal{Y}}P_{XY}(x,y)^\alpha R_Y(y)^{1-\alpha} )^{\frac{1}{\alpha}}},
\end{align}
\end{subequations}
and for any given $Q_X^{(0)}\in \mathcal{P}_{P_X\ll }(\mathcal{X})$, we define for all $n\in \mathbb{N}$
\begin{align}
R_Y^{(n)} &\coloneqq \mathcal{N}_{X\rightarrow Y}(Q_X^{(n)}),
\\
Q_X^{(n+1)} &\coloneqq \mathcal{N}_{Y\rightarrow X}(R_Y^{(n)}),
\\
x_n &\coloneqq D_\alpha (P_{XY}\| Q_X^{(n)}R_Y^{(n)}).
\end{align}
\end{defn}

The following theorem provides sufficient conditions for $\mathcal{N}_{X\rightarrow Y}$ and $\mathcal{N}_{Y\rightarrow X}$ to be contractions with respect to Hilbert’s projective metric. 
\begin{thm}[Contraction]\label{thm:cc-linear}
\cite[Theorem~13]{tsai2024linearconvergencehilbertsprojective}
Let $\alpha\in [\frac{1}{2},1)\cup (1,\infty)$, 
$P_{XY}\in \mathcal{P}(\mathcal{X}\times \mathcal{Y})$, and 
let $\mathcal{N}_{X\rightarrow Y},\mathcal{N}_{Y\rightarrow X}$ be given by Definition~\ref{def:am-classical}. 
Let $\gamma \coloneqq \lvert 1-\frac{1}{\alpha}\rvert$.
\begin{enumerate}[label=(\alph*)]
\item 
If $\alpha >\frac{1}{2}$: 
For all $Q_X,\tilde{Q}_X\in \mathcal{P}(\mathcal{X}),R_Y,\tilde{R}_Y\in \mathcal{P}(\mathcal{Y})$
\begin{align}
d_H(\mathcal{N}_{X\rightarrow Y}(Q_X),\mathcal{N}_{X\rightarrow Y}(\tilde{Q}_X))
&\leq \gamma d_H(Q_X,\tilde{Q}_X) ,
\\
d_H(\mathcal{N}_{Y\rightarrow X}(R_Y),\mathcal{N}_{Y\rightarrow X}(\tilde{R}_Y))
&\leq \gamma d_H(R_Y,\tilde{R}_Y).
\end{align}
\item 
If $P_{XY}(x,y)>0$ for all $x\in \mathcal{X},y\in \mathcal{Y}:$ 
For all $Q_X,\tilde{Q}_X\in \mathcal{P}(\mathcal{X}),R_Y,\tilde{R}_Y\in \mathcal{P}(\mathcal{Y})$
\begin{align}
d_H(\mathcal{N}_{X\rightarrow Y}(Q_X),\mathcal{N}_{X\rightarrow Y}(\tilde{Q}_X))
&\leq \gamma\kappa d_H(Q_X,\tilde{Q}_X),
\\
d_H(\mathcal{N}_{Y\rightarrow X}(R_Y),\mathcal{N}_{Y\rightarrow X}(\tilde{R}_Y))
&\leq \gamma\kappa d_H(R_Y,\tilde{R}_Y),
\end{align}
where $\kappa\coloneqq \tanh(\delta /4)\in [0,1)$ and 
\begin{align}
\delta \coloneqq \log \max_{\substack{x,x'\in \mathcal{X},\\ y,y'\in \mathcal{Y}}} \frac{P_{XY}(x,y)^\alpha P_{XY}(x',y')^\alpha}{P_{XY}(x',y)^\alpha P_{XY}(x,y')^\alpha}.
\end{align}
\end{enumerate}
\end{thm}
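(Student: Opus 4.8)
\textbf{Proof plan for Theorem~\ref{thm:cc-linear}.} Since this is the classical analogue, the strategy is to work entirely in the cone $C$ of nonnegative functions on $\mathcal{X}$ (resp.\ $\mathcal{Y}$) and exploit the explicit formula \eqref{eq:m-xy-classical} for $M(P/Q)$. The map $\mathcal{N}_{X\rightarrow Y}$ decomposes into three steps: (i) raise $Q_X(x)$ to the power $1-\alpha$; (ii) apply the linear map $Q_X \mapsto \big(\sum_x P_{XY}(x,\cdot)^\alpha Q_X(x)\big)$, which is a nonnegative (entrywise) linear operator on the cones; (iii) raise the resulting function to the power $\frac{1}{\alpha}$; and finally normalize. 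Since $d_H$ is scale-invariant, the normalization is irrelevant. So I would establish the contraction factor by composing the Lipschitz constants of these three pieces with respect to $d_H$.

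The key facts I would invoke from the cone theory (Appendix~\ref{app:hilbert}): first, for any $p\in\mathbb{R}$, the map $x \mapsto x^{|p|}$ applied entrywise scales $d_H$ by exactly $|p|$ (this follows from $M(P^{p}/Q^{p}) = M(P/Q)^{p}$ for $p>0$, which is immediate from \eqref{eq:m-xy-classical}, together with $d_H(P^{-1},Q^{-1}) = d_H(P,Q)$ for the reciprocal). Applying this to steps (i) and (iii) contributes a factor $|1-\alpha|$ and $\frac{1}{\alpha}$ respectively, whose product is exactly $\gamma = |1-\frac{1}{\alpha}|$. Second, any nonnegative linear map is $d_H$-nonexpansive (this is the elementary direction of the Birkhoff--Hopf theorem, coming from $P \leq_C \beta Q \implies TP \leq_C \beta TQ$ for $T$ nonnegative), which handles step (ii) with factor $1$. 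Composing gives part~(a). For part~(b), when $P_{XY}$ is strictly positive the linear map $T$ in step (ii) is strictly positive (maps the whole cone minus $\{0\}$ into the interior), so the Birkhoff--Hopf theorem gives the sharp contraction ratio $\tanh(\Delta(T)/4)$ where $\Delta(T)$ is the projective diameter of the image; computing $\Delta(T)$ for $T: Q_X \mapsto \big(\sum_x P_{XY}(x,y)^\alpha Q_X(x)\big)_y$ via \eqref{eq:m-xy-classical} yields exactly the quantity $\delta$ in the statement, and the $\tanh$ factor multiplies the $\gamma$ from steps (i) and (iii).

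I would organize the write-up as: (1) reduce to showing the stated bounds for $\mathcal{N}_{X\rightarrow Y}$ alone, by symmetry; (2) note scale-invariance of $d_H$ removes the normalization; (3) state and cite the three lemmas above (power map rescaling, nonexpansiveness of nonnegative maps, Birkhoff--Hopf contraction ratio for strictly positive maps); (4) chain them: $d_H(\mathcal{N}_{X\rightarrow Y}(Q_X),\mathcal{N}_{X\rightarrow Y}(\tilde Q_X)) \le \tfrac{1}{\alpha}\, d_H(TQ_X^{1-\alpha}, T\tilde Q_X^{1-\alpha}) \le \tfrac{1}{\alpha}\cdot c(T)\cdot d_H(Q_X^{1-\alpha},\tilde Q_X^{1-\alpha}) \le \tfrac{1}{\alpha}\cdot c(T)\cdot|1-\alpha|\cdot d_H(Q_X,\tilde Q_X)$, with $c(T)=1$ in general and $c(T)=\kappa$ in the positive case; (5) verify $\Delta(T)=\delta$ by direct computation with the explicit $M$ formula. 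The main obstacle is the careful bookkeeping in step (5): one must check that the projective diameter of the range of $T$ equals $\delta$, i.e.\ that $\sup_{Q,\tilde Q} d_H(TQ, T\tilde Q)$ is attained at the appropriate extreme points (point masses) and evaluates to $\log \max_{x,x',y,y'} \frac{P_{XY}(x,y)^\alpha P_{XY}(x',y')^\alpha}{P_{XY}(x',y)^\alpha P_{XY}(x,y')^\alpha}$; a minor subtlety is that in part~(a) one only needs $\alpha > \tfrac12$ to guarantee $\gamma < 1$ (for $\alpha = \tfrac12$ one has $\gamma = 1$, giving only nonexpansiveness), so the hypothesis split is natural and requires no extra work beyond noting $\gamma\in[0,1)$ iff $\alpha>\tfrac12$.
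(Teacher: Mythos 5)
Your proposal is correct and follows essentially the same route as the source: note that the paper does not prove Theorem~\ref{thm:cc-linear} itself (it is quoted from~\cite[Theorem~13]{tsai2024linearconvergencehilbertsprojective}), but its proof of the quantum analogue, Theorem~\ref{thm:linear} in Appendix~\ref{app:linear}, uses exactly your decomposition --- scale invariance to drop the normalization, the power maps contributing $\lvert 1-\alpha\rvert$ and $\frac{1}{\alpha}$ via the order-preserving/order-reversing homogeneity lemmas, and the Birkhoff--Hopf contraction ratio of the intermediate positive linear map, with the projective diameter evaluated at extreme points of the cone. The only point deserving care, which you correctly flag, is justifying that the projective diameter is attained at point masses (the classical counterpart of~\eqref{eq:delta-l-sup}); your observation that the classical power map rescales $d_H$ \emph{exactly} by $\lvert p\rvert$ for all $p$ is what lets the classical statement cover all $\alpha>1$, unlike the quantum case where operator monotonicity restricts the range to $(1,2]$.
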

\begin{rem}[Subdivision of Theorem~\ref{thm:cc-linear}]
Note that the assertion in~(a) holds in general and guarantees a contraction coefficient of $\gamma \in [0, 1)$. 
In contrast,~(b) only holds under the assumption that $P_{XY}$ is strictly positive, but guarantees an improved contraction coefficient of $\gamma\kappa \in [0,\gamma) \subseteq [0, 1)$. 
This possibility of an improved coefficient will also appear analogously in subsequent results.
\end{rem}

The following result is a corollary of Theorem~\ref{thm:cc-linear}. 
It implies linear convergence of the PMFs obtained by alternating minimization with respect to Hilbert's projective metric. 
More specifically, (a) implies that the corresponding error rate is $\mathcal{O}(\lvert 1-\frac{1}{\alpha}\rvert^{2n})$.
\begin{cor}[Linear convergence of PMFs]\label{cor:cc-linear1}
\cite[Corollary~14]{tsai2024linearconvergencehilbertsprojective}
\sloppy
Let $\alpha\in [\frac{1}{2},1)\cup (1,\infty)$, 
$P_{XY}\in \mathcal{P}(\mathcal{X}\times \mathcal{Y})$, and 
let $(\hat{Q}_X,\hat{R}_Y)\in \argmin_{(Q_X,R_Y)\in \mathcal{P}(\mathcal{X})\times \mathcal{P}(\mathcal{Y})} D_\alpha (P_{XY}\| Q_XR_Y)$. 
Let $Q_X^{(0)}\in \mathcal{P}_{\sim P_X}(\mathcal{X})$ and let $R_Y^{(n)},Q_X^{(n+1)}$ be given by Definition~\ref{def:am-classical} for all $n\in \mathbb{N}$. 
Let $\gamma \coloneqq \lvert 1-\frac{1}{\alpha}\rvert$. 
\begin{enumerate}[label=(\alph*)]
\item 
If $\alpha >\frac{1}{2}$: 
For all $n\in \mathbb{N}$
\begin{align}
d_H(Q_X^{(n)},\hat{Q}_X) &\leq \gamma^{2n} d_H(Q_X^{(0)},\hat{Q}_X),
\\
d_H(R_Y^{(n)},\hat{R}_Y) &\leq \gamma^{2n+1} d_H(Q_X^{(0)},\hat{Q}_X).
\end{align}
\item 
If $P_{XY}(x,y)>0$ for all $x\in \mathcal{X},y\in \mathcal{Y}:$ 
For all $n\in \mathbb{N}$
\begin{align}
d_H(Q_X^{(n)},\hat{Q}_X) &\leq (\gamma\kappa)^{2n} d_H(Q_X^{(0)},\hat{Q}_X),
\\
d_H(R_Y^{(n)},\hat{R}_Y) &\leq (\gamma\kappa)^{2n+1} d_H(Q_X^{(0)},\hat{Q}_X)
\end{align}
where $\kappa$ is defined as in Theorem~\ref{thm:cc-linear}~(b).
\end{enumerate}
\end{cor}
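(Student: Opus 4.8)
The plan is to identify $(\hat{Q}_X,\hat{R}_Y)$ as a common fixed point of the two partial-minimization maps $\mathcal{N}_{X\rightarrow Y}$ and $\mathcal{N}_{Y\rightarrow X}$, and then to iterate the contraction estimate of Theorem~\ref{thm:cc-linear}.

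First I would check that every PMF appearing below lies in a space on which Theorem~\ref{thm:cc-linear} applies and on which Hilbert's projective metric is finite. By~\eqref{eq:cc-unique}, the global minimizer satisfies $\hat{Q}_X\in \mathcal{P}_{\sim P_X}(\mathcal{X})$ and $\hat{R}_Y\in \mathcal{P}_{\sim P_Y}(\mathcal{Y})$; by hypothesis $Q_X^{(0)}\in \mathcal{P}_{\sim P_X}(\mathcal{X})$; and by the codomains specified in Definition~\ref{def:am-classical}, $R_Y^{(n)}\in \mathcal{P}_{\sim P_Y}(\mathcal{Y})$ and $Q_X^{(n+1)}\in \mathcal{P}_{\sim P_X}(\mathcal{X})$ for all $n\in\mathbb{N}$. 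In particular $d_H(Q_X^{(0)},\hat{Q}_X)<\infty$, and each pair of arguments appearing in the estimates below has finite $d_H$-distance.

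Second, I would argue that $\hat{R}_Y=\mathcal{N}_{X\rightarrow Y}(\hat{Q}_X)$ and $\hat{Q}_X=\mathcal{N}_{Y\rightarrow X}(\hat{R}_Y)$. Since $(\hat{Q}_X,\hat{R}_Y)$ minimizes $D_\alpha(P_{XY}\| Q_XR_Y)$ jointly, fixing the first argument to $\hat{Q}_X$ shows that $\hat{R}_Y$ minimizes $R_Y\mapsto D_\alpha(P_{XY}\|\hat{Q}_XR_Y)$; the partial-minimizer characterization~\eqref{eq:qx-optimal} (whose side condition $P_X\not\perp\hat{Q}_X$, resp.\ $P_X\ll\hat{Q}_X$, is met because $\hat{Q}_X\sim P_X$) says this minimizer is unique and equals $\mathcal{N}_{X\rightarrow Y}(\hat{Q}_X)$, so $\hat{R}_Y=\mathcal{N}_{X\rightarrow Y}(\hat{Q}_X)$. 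The symmetric argument gives $\hat{Q}_X=\mathcal{N}_{Y\rightarrow X}(\hat{R}_Y)$.

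Third, I would combine the two bounds of Theorem~\ref{thm:cc-linear}. For part~(a), using that $\alpha>\frac12$, for every $n\in\mathbb{N}$
\begin{align}
d_H(Q_X^{(n+1)},\hat{Q}_X)
&= d_H\bigl(\mathcal{N}_{Y\rightarrow X}(R_Y^{(n)}),\mathcal{N}_{Y\rightarrow X}(\hat{R}_Y)\bigr)
\leq \gamma\, d_H(R_Y^{(n)},\hat{R}_Y) \nonumber\\
&= \gamma\, d_H\bigl(\mathcal{N}_{X\rightarrow Y}(Q_X^{(n)}),\mathcal{N}_{X\rightarrow Y}(\hat{Q}_X)\bigr)
\leq \gamma^2\, d_H(Q_X^{(n)},\hat{Q}_X).\nonumber
\end{align}
An induction on $n$ then yields $d_H(Q_X^{(n)},\hat{Q}_X)\leq \gamma^{2n} d_H(Q_X^{(0)},\hat{Q}_X)$, and one more application of the contraction of $\mathcal{N}_{X\rightarrow Y}$ gives $d_H(R_Y^{(n)},\hat{R}_Y)\leq \gamma\, d_H(Q_X^{(n)},\hat{Q}_X)\leq \gamma^{2n+1} d_H(Q_X^{(0)},\hat{Q}_X)$. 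Part~(b) is word-for-word the same with $\gamma$ replaced by $\gamma\kappa$ throughout, now invoking the strictly-positive branch of Theorem~\ref{thm:cc-linear}. The only genuinely delicate point is the fixed-point identification in the second step: one must use the \emph{uniqueness} of the partial minimizer so that the components of the joint minimizer coincide with the outputs of $\mathcal{N}_{X\rightarrow Y}$ and $\mathcal{N}_{Y\rightarrow X}$, and one must verify the support/non-orthogonality hypotheses needed to apply~\eqref{eq:qx-optimal} and~\eqref{eq:cc-unique}; everything else is a routine composition of contractions.
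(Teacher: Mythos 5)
Your proposal is correct, and it is essentially the argument the paper itself relies on: the paper does not reprove Corollary~\ref{cor:cc-linear1} (it imports it from \cite[Corollary~14]{tsai2024linearconvergencehilbertsprojective}), but the proof of the quantum analogue, Corollary~\ref{cor:linear1} in Appendix~\ref{app:linear1}, proceeds exactly as you do --- identify $(\hat{Q}_X,\hat{R}_Y)$ as a fixed point of the partial-minimization maps via the uniqueness in~\eqref{eq:qx-optimal}, then compose the two contraction bounds of Theorem~\ref{thm:cc-linear} and recurse. The only cosmetic remark is that at the boundary case $\alpha=\tfrac12$ (relevant only for part~(b)) you cannot invoke~\eqref{eq:cc-unique} to get $\hat{Q}_X\sim P_X$, but your own fixed-point step already supplies the needed support property, since finiteness of the minimum for $\alpha<1$ gives $P_X\not\perp\hat{Q}_X$ and then~\eqref{eq:qx-optimal} forces $\hat{Q}_X=\mathcal{N}_{Y\rightarrow X}(\hat{R}_Y)\in\mathcal{P}_{\sim P_X}(\mathcal{X})$.
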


The following result follows from Corollary~\ref{cor:cc-linear1}, as shown in Appendix~\ref{app:cc-linear2}. 
It implies linear convergence of the objective function values obtained by alternating minimization.
More specifically, (a) implies that 
$\lvert x_{n}-I_\alpha^{\downarrow\downarrow}(X:Y)_P \rvert$ is 
$\mathcal{O}(\lvert 1-\frac{1}{\alpha}\rvert^{2n})$ for any $\alpha\in (\frac{1}{2},\infty)$.
\begin{cor}[Linear convergence of objective function values]\label{cor:cc-linear2}
\sloppy
Let $\alpha\in [\frac{1}{2},1)\cup (1,\infty)$, 
$P_{XY}\in \mathcal{P}(\mathcal{X}\times \mathcal{Y})$, and 
let $(\hat{Q}_X,\hat{R}_Y)\in \argmin_{(Q_X,R_Y)\in \mathcal{P}(\mathcal{X})\times \mathcal{P}(\mathcal{Y})} D_\alpha (P_{XY}\| Q_XR_Y)$. 
Let $Q_X^{(0)}\in \mathcal{P}_{\sim P_X}(\mathcal{X})$ and let $(x_n)_{n\in \mathbb{N}}$ be given by Definition~\ref{def:am-classical}. 
Let $\gamma \coloneqq \lvert 1-\frac{1}{\alpha}\rvert$. 
\begin{enumerate}[label=(\alph*)]
\item 
If $\alpha >\frac{1}{2}$: 
For all $n\in \mathbb{N}$
\begin{align}
\lvert x_{n}-I_\alpha^{\downarrow\downarrow}(X:Y)_P \rvert 
&\leq \frac{1}{\lvert \alpha-1\rvert} \left(
\exp\left(\lvert \alpha-1\rvert (1+\gamma)\gamma^{2n} d_H(Q_X^{(0)},\hat{Q}_X) \right)
-1\right).
\end{align}
\item 
If $P_{XY}(x,y)>0$ for all $x\in \mathcal{X},y\in \mathcal{Y}:$ 
For all $n\in \mathbb{N}$
\begin{align}
\lvert x_{n}-I_\alpha^{\downarrow\downarrow}(X:Y)_P \rvert 
&\leq \frac{1}{\lvert \alpha-1\rvert} \left(
\exp\left(\lvert \alpha-1\rvert (1+\gamma\kappa)(\gamma\kappa)^{2n} d_H(Q_X^{(0)},\hat{Q}_X) \right)
-1\right)
\end{align}
where $\kappa$ is defined as in Theorem~\ref{thm:cc-linear}~(b).
\end{enumerate}
\end{cor}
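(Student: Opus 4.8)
The plan is to relate the objective value $x_n = D_\alpha(\rho_{AB}\|\sigma_A^{(n)}\otimes\tau_B^{(n)})$ at an intermediate iterate to the optimal value $I_\alpha^{\downarrow\downarrow}(X:Y)_P = D_\alpha(P_{XY}\|\hat Q_X\hat R_Y)$ (in the classical notation of this appendix) using the Sibson identity together with the Hilbert-metric contraction estimates of Corollary~\ref{cor:cc-linear1}. The key conceptual point is that the Petz/R\'enyi divergence between two states that are close in Hilbert's projective metric is itself small; more precisely, if $d_H(P,Q)$ is small then $|D_\alpha(\rho\|P) - D_\alpha(\rho\|Q)|$ is controlled by $|1-\alpha|\,d_H(P,Q)$ up to the usual $(e^t-1)$ type envelope, because $D_\alpha(\rho\|\cdot)$ is, up to the factor $1/(\alpha-1)$, a logarithm of a quantity that is sandwiched multiplicatively between $M(P/Q)^{1-\alpha}$ and $M(Q/P)^{\alpha-1}$ once one writes $Q(x)^{1-\alpha}$ against $P(x)^{1-\alpha}$ ratiowise. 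This is exactly the classical analogue of the estimate that later feeds Corollary~\ref{cor:linear2}, so I would mirror that argument.

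First I would fix the iterates. By Definition~\ref{def:am-classical} and the Sibson identity (the classical version of~\eqref{eq:sibson}), $x_n = D_\alpha(P_{XY}\|Q_X^{(n)}R_Y^{(n)})$ with $R_Y^{(n)} = \mathcal{N}_{X\rightarrow Y}(Q_X^{(n)})$, so $R_Y^{(n)}$ is the exact partial minimizer for $Q_X^{(n)}$ and $x_n = \inf_{R_Y} D_\alpha(P_{XY}\|Q_X^{(n)}R_Y)$. Likewise $I_\alpha^{\downarrow\downarrow}(X:Y)_P = D_\alpha(P_{XY}\|\hat Q_X\hat R_Y) = \inf_{R_Y}D_\alpha(P_{XY}\|\hat Q_X R_Y)$ with $\hat R_Y = \mathcal{N}_{X\rightarrow Y}(\hat Q_X)$. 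Using~\eqref{eq:inf-tau}-type closed forms, both quantities become $\frac{1}{\alpha-1}\log(\text{a }1/\alpha\text{-norm of }\sum_x P_{XY}(x,\cdot)^\alpha Q_X(x)^{1-\alpha})$ evaluated at $Q_X^{(n)}$ and at $\hat Q_X$ respectively. Second, I would bound the difference of these two logarithms: since $d_H$ on the positive cone dominates the multiplicative oscillation of $Q_X^{(n)}(x)/\hat Q_X(x)$ over $\supp(\hat Q_X)$, and since $Q_X^{(n)}, \hat Q_X$ have the same support by Corollary~\ref{cor:cc-linear1}, one gets $|x_n - I_\alpha^{\downarrow\downarrow}(X:Y)_P| \le \tfrac{1}{|\alpha-1|}\big(\exp(|\alpha-1|\,(1+\gamma)\,d_H(Q_X^{(n)},\hat Q_X)) - 1\big)$; here the extra $(1+\gamma)$ absorbs the fact that after composing with $\mathcal{N}_{X\rightarrow Y}$ the $R$-component has moved by at most $\gamma\,d_H(Q_X^{(n)},\hat Q_X)$, and both the $Q$- and $R$-discrepancies enter the divergence estimate. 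Third, I would insert the bound $d_H(Q_X^{(n)},\hat Q_X) \le \gamma^{2n} d_H(Q_X^{(0)},\hat Q_X)$ from Corollary~\ref{cor:cc-linear1}~(a) (respectively $(\gamma\kappa)^{2n}$ from~(b) in the strictly positive case), which yields precisely the two displayed inequalities. For~(b) one keeps track that the $R$-shift is now bounded by $\gamma\kappa$ times the $Q$-shift, giving the factor $(1+\gamma\kappa)$.

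The step I expect to be the main obstacle is the second one: carefully converting the Hilbert-metric bound on $d_H(Q_X^{(n)},\hat Q_X)$ into a bound on $|x_n - I_\alpha^{\downarrow\downarrow}(X:Y)_P|$ with the correct constants, in particular justifying the $(1+\gamma)$ factor. The subtlety is that $x_n$ and $I_\alpha^{\downarrow\downarrow}(X:Y)_P$ are divergences against \emph{different} product PMFs $Q_X^{(n)}R_Y^{(n)}$ and $\hat Q_X\hat R_Y$, so one must control both marginal displacements; the $R$-displacement is not free data but must be re-derived from the $Q$-displacement via the contraction property of $\mathcal{N}_{X\rightarrow Y}$ (Theorem~\ref{thm:cc-linear}), and then both contributions must be combined additively inside the single logarithm defining the R\'enyi divergence. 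Handling the sign of $\alpha-1$ uniformly (so that the same formula covers $\alpha\in(\tfrac12,1)$ and $\alpha\in(1,\infty)$) and checking that the $y/0=\infty$ convention causes no trouble — which it does not, precisely because the relevant supports coincide — are the remaining routine points.
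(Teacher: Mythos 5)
Your proposal follows essentially the same route as the paper's proof: compare $Q_\alpha(P_{XY}\|Q_X^{(n)}R_Y^{(n)})$ with $Q_\alpha(P_{XY}\|\hat Q_X\hat R_Y)$ via a summand-wise ratio bound, convert that ratio into $\exp\left(\lvert 1-\alpha\rvert\, d_H(Q_X^{(n)}R_Y^{(n)},\hat Q_X\hat R_Y)\right)$, use additivity of $d_H$ over products together with the contraction of $\mathcal{N}_{X\rightarrow Y}$ applied to $\hat R_Y=\mathcal{N}_{X\rightarrow Y}(\hat Q_X)$ (the source of the $(1+\gamma)$, resp.\ $(1+\gamma\kappa)$, factor), insert Corollary~\ref{cor:cc-linear1}, and finish with $\log(1+t)\le t$. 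The one step you assert without justification is the passage from the single factor $M\bigl((Q_X^{(n)}R_Y^{(n)})^{1-\alpha}/(\hat Q_X\hat R_Y)^{1-\alpha}\bigr)$ to $e^{d_H}$: since $d_H$ is the logarithm of the \emph{product} of the two opposite $M$-factors, your ``sandwiched multiplicatively'' argument only closes if the discarded factor is at least $1$, which is not automatic for arbitrary cone elements; the paper isolates exactly this as Lemma~\ref{lem:cc-m1} (proved via the reverse H\"older inequality, and elementary in the classical case for equal supports). Also, your opening suggestion to compare the partially minimized closed forms (the $1/\alpha$-norm expressions depending on $Q_X$ alone) is not what you actually execute --- the computation you describe compares the full product PMFs and tracks both marginal displacements, exactly as the paper does --- so that detour should either be dropped or carried through as a genuinely different (and then separately justified) argument.
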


\subsection{Proof of Corollary~\ref{cor:cc-linear2}}\label{app:cc-linear2}
\begin{lem}\label{lem:cc-m1}
Let $\alpha\in (0,1)\cup (1,\infty)$ and let $Q,R\in \mathcal{P}(\mathcal{X})$ be such that $Q\sim R$. 
Then, $M(Q^{1-\alpha}/R^{1-\alpha})\geq 1$.
\end{lem}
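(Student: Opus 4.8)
The plan is to exploit the fact that $Q$ and $R$ are both probability mass functions with the \emph{same} support, so that their ratio cannot be uniformly less than $1$ on that support. Concretely, recall from \eqref{eq:m-xy-classical} that for $P_1,P_2\in C\setminus\{0\}$ with $P_1\ll P_2$ one has $M(P_1/P_2)=\max_{x\in\supp(P_2)} P_1(x)/P_2(x)$. Applying this with $P_1=Q^{1-\alpha}$ and $P_2=R^{1-\alpha}$ (both have support equal to $\supp(Q)=\supp(R)$, and neither is the zero function since $Q,R\in\mathcal{P}(\mathcal{X})$), I get
\begin{align}
M(Q^{1-\alpha}/R^{1-\alpha})=\max_{x\in\supp(R)}\left(\frac{Q(x)}{R(x)}\right)^{1-\alpha}.
\end{align}

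The main step is then the elementary observation that one cannot have $Q(x)<R(x)$ for every $x\in\supp(Q)$ (nor $Q(x)>R(x)$ for every such $x$), because summing either strict inequality over $\supp(Q)=\supp(R)$ would give $1=\sum_x Q(x)<\sum_x R(x)=1$ (resp.\ $1>1$), a contradiction. Hence there exists $x_*\in\supp(Q)$ with $Q(x_*)\geq R(x_*)$, i.e.\ $Q(x_*)/R(x_*)\geq 1$, and likewise there exists $x^*$ with $Q(x^*)/R(x^*)\leq 1$. I would split into the two cases for the sign of $1-\alpha$: if $\alpha<1$ then $1-\alpha>0$ and the function $t\mapsto t^{1-\alpha}$ is increasing, so $\bigl(Q(x_*)/R(x_*)\bigr)^{1-\alpha}\geq 1$, giving $M(Q^{1-\alpha}/R^{1-\alpha})\geq 1$; if $\alpha>1$ then $1-\alpha<0$ and $t\mapsto t^{1-\alpha}$ is decreasing, so I use $x^*$ instead: $\bigl(Q(x^*)/R(x^*)\bigr)^{1-\alpha}\geq 1$, again giving the bound. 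In both cases the maximum defining $M$ is at least $1$.

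There is essentially no obstacle here — the only point requiring a little care is the case distinction on the sign of $1-\alpha$, together with making sure the two ``pigeonhole'' indices $x_*$ and $x^*$ are picked appropriately for each case (one wants the ratio $\geq 1$ when the exponent is positive and the ratio $\leq 1$ when the exponent is negative). A slightly slicker alternative that avoids the case split: by the weighted AM–GM / Jensen inequality applied to the convex or concave map, $\sum_x R(x)\,(Q(x)/R(x))^{1-\alpha}$ is $\geq$ or $\leq \bigl(\sum_x Q(x)\bigr)^{1-\alpha}=1$ depending on convexity, but since $M$ picks out the \emph{maximum} ratio rather than this average, one still needs the observation that the max of the ratios $(Q(x)/R(x))^{1-\alpha}$ dominates their $R$-weighted average only when that average is $\geq 1$; so the direct pigeonhole argument above is cleaner and I would use that. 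This lemma will presumably be combined with an analogous lower bound to control $d_H$ of the relevant $(1-\alpha)$-powers from below, which is why the one-sided bound $M\geq 1$ is exactly what is needed.
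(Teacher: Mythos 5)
Your proof is correct, but it takes a genuinely different route from the paper's. You argue by pigeonhole: since $Q$ and $R$ are PMFs with the same support, there must exist a point where $Q\geq R$ and a point where $Q\leq R$, and depending on the sign of the exponent $1-\alpha$ one of these points directly witnesses a ratio $\bigl(Q(x)/R(x)\bigr)^{1-\alpha}\geq 1$ inside the maximum defining $M$. The paper instead lower-bounds the maximum by a weighted average, $\max_x \frac{Q(x)^{1-\alpha}}{R(x)^{1-\alpha}} \geq \sum_x W(x)\frac{Q(x)^{1-\alpha}}{R(x)^{1-\alpha}}$ with $W=R$ for $\alpha>1$ and $W=Q$ for $\alpha<1$, and then applies the reverse H\"older inequality to show that this average is at least $\bigl(\sum_x R(x)\bigr)^{a}\bigl(\sum_x Q(x)\bigr)^{b}=1$. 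Your argument is more elementary and avoids any appeal to reverse H\"older; the paper's version has the advantage that it transfers essentially verbatim to the noncommutative setting (Lemma~\ref{lem:m1}, where $M(\sigma^{1-\alpha}/\tau^{1-\alpha})=\lVert\tau^{(\alpha-1)/2}\sigma^{1-\alpha}\tau^{(\alpha-1)/2}\rVert_\infty$ is bounded below by $\tr[\tau^{\alpha}\sigma^{1-\alpha}]$ and then by reverse H\"older), whereas a pointwise pigeonhole argument has no direct analogue for non-commuting density operators. Your closing remarks about the Jensen alternative and the intended use of the bound are accurate but not needed for the proof.
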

\begin{proof}
\emph{Case 1: $\alpha>1$.} 
\begin{align}
M(Q^{1-\alpha}/R^{1-\alpha})
&= \max_{x\in \supp(R)} \frac{Q(x)^{1-\alpha}}{R(x)^{1-\alpha}}
\label{eq:proof-cc-m1}\\
&\geq \sum_{x\in \supp(R)} R(x) \frac{Q(x)^{1-\alpha}}{R(x)^{1-\alpha}}
=\sum_{x\in \supp(R)} R(x)^{\alpha} Q(x)^{1-\alpha}
\\
&\geq \left(\sum_{x\in \supp(R)} (R(x)^\alpha )^{1/\alpha}\right)^\alpha
\left(\sum_{x\in \supp(R)} (Q(x)^{\alpha-1})^{1/(\alpha-1)}\right)^{1-\alpha}
\label{eq:proof-cc-m3}\\
&= \left(\sum_{x\in \supp(R)} R(x)\right)^\alpha
\left(\sum_{x\in \supp(R)} Q(x)\right)^{1-\alpha}
= 1
\label{eq:proof-cc-m4}
\end{align}
\eqref{eq:proof-cc-m1} follows from~\eqref{eq:m-xy-classical}. 
\eqref{eq:proof-cc-m3} follows from the reverse H\"older inequality~\cite[Lemma~3.1]{tomamichel2016quantum}.
The last equality in~\eqref{eq:proof-cc-m4} follows from $\sum_{x\in \mathcal{X}}R(x)=1=\sum_{x\in \mathcal{X}}Q(x)$.

\emph{Case 2: $\alpha<1$.} 
\begin{align}
M(Q^{1-\alpha}/R^{1-\alpha})
&= \max_{x\in \supp(R)} \frac{Q(x)^{1-\alpha}}{R(x)^{1-\alpha}}
\label{eq:proof-ccm1}\\
&\geq \sum_{x\in \supp(R)} Q(x) \frac{Q(x)^{1-\alpha}}{R(x)^{1-\alpha}}
=\sum_{x\in \supp(R)} Q(x)^{2-\alpha} R(x)^{\alpha-1}
\\
&\geq \left(\sum_{x\in \supp(R)} (Q(x)^{2-\alpha} )^{1/(2-\alpha)}\right)^{2-\alpha}
\left(\sum_{x\in \supp(R)} (R(x)^{1-\alpha})^{1/(1-\alpha)}\right)^{\alpha-1}
\label{eq:proof-ccm3}\\
&= \left(\sum_{x\in \supp(R)} Q(x)\right)^{2-\alpha}
\left(\sum_{x\in \supp(R)} R(x)\right)^{\alpha-1}
= 1
\label{eq:proof-ccm4}
\end{align}
\eqref{eq:proof-ccm1} follows from~\eqref{eq:m-xy-classical}. 
\eqref{eq:proof-ccm3} follows from the reverse H\"older inequality~\cite[Lemma~3.1]{tomamichel2016quantum}.
The last equality in~\eqref{eq:proof-ccm4} follows from $\sum_{x\in \mathcal{X}}R(x)=1=\sum_{x\in \mathcal{X}}Q(x)$. 
\end{proof}

\begin{proof}[Proof of Corollary~\ref{cor:cc-linear2}]
We will first prove~(a). 
Let $\hat{q}\coloneqq \exp((\alpha-1)I_\alpha^{\downarrow\downarrow}(X:Y)_P)$ 
and $q_n\coloneqq \exp((\alpha-1)x_n)$ for all $n\in \mathbb{N}$. 
Let $R_Y^{(n)},Q_X^{(n+1)}$ be given by Definition~\ref{def:am-classical} for all $n\in \mathbb{N}$. 

\emph{Case 1: $\alpha>1$.} 
For all $n\in \mathbb{N}$
\begin{align}
q_n-\hat{q}
&=\sum_{\substack{x\in \supp (P_X), \\ y\in \supp(P_Y)}}
P_{XY}(x,y)^\alpha \left((Q_X^{(n)}(x)R_Y^{(n)}(y) )^{1-\alpha}-(\hat{Q}_X(x)\hat{R}_Y(y) )^{1-\alpha}\right)
\\
&=\sum_{\substack{x\in \supp (P_X), \\ y\in \supp(P_Y)}}
P_{XY}(x,y)^\alpha (\hat{Q}_X(x) \hat{R}_Y(y) )^{1-\alpha} 
\left(\frac{(Q_X^{(n)}(x)R_Y^{(n)}(y) )^{1-\alpha}}{(\hat{Q}_X(x)\hat{R}_Y(y) )^{1-\alpha}}
-1\right)
\\
&\leq \hat{q} \left(
M((Q_X^{(n)} R_Y^{(n)})^{1-\alpha} / (\hat{Q}_X \hat{R}_Y)^{1-\alpha}) 
-1\right).
\label{eq:cc-2}
\end{align}

For all $n\in \mathbb{N}$
\begin{align}
\frac{q_n}{\hat{q}}
&\leq M( (Q_X^{(n)}R_Y^{(n)} )^{1-\alpha} / (\hat{Q}_X\hat{R}_Y )^{1-\alpha} )
\label{eq:cc-21}\\
&\leq M( (Q_X^{(n)}R_Y^{(n)} )^{1-\alpha} / (\hat{Q}_X\hat{R}_Y )^{1-\alpha} )
M( (\hat{Q}_X\hat{R}_Y )^{1-\alpha} / (Q_X^{(n)}R_Y^{(n)} )^{1-\alpha})
\label{eq:cc-22}
\\
&= \exp\left(
d_H( (Q_X^{(n)} R_Y^{(n)})^{1-\alpha}, (\hat{Q}_X \hat{R}_Y)^{1-\alpha} ) 
\right)
\label{eq:cc-23}\\
&\leq \exp\left(
\lvert 1-\alpha\rvert d_H( Q_X^{(n)} R_Y^{(n)}, \hat{Q}_X \hat{R}_Y ) 
\right)
\label{eq:cc-24}\\
&= \exp\left(
\lvert 1-\alpha\rvert (d_H( Q_X^{(n)}, \hat{Q}_X ) 
+d_H(R_Y^{(n)}, \hat{R}_Y ) )
\right)
\label{eq:cc-3}\\
&\leq \exp\left(\lvert 1-\alpha\rvert (1+\gamma)\gamma^{2n} d_H(Q_X^{(0)},\hat{Q}_X) \right).
\label{eq:cc-4}
\end{align}
\eqref{eq:cc-4} follows from Theorem~\ref{thm:cc-linear}. 
\eqref{eq:cc-21} follows from~\eqref{eq:cc2} because $\hat{q}=\exp((\alpha-1)I_\alpha^{\downarrow\downarrow}(X:Y)_P)>0$. 
\eqref{eq:cc-22} follows from Lemma~\ref{lem:cc-m1}.
\eqref{eq:cc-3} follows from the additivity of Hilbert’s projective metric with respect to the product of PMFs. 
\eqref{eq:cc-4} follows from Corollary~\ref{cor:cc-linear1}~(a). 
We conclude that for all $n\in \mathbb{N}$
\begin{align}
0\leq x_{n}-I_\alpha^{\downarrow\downarrow}(X:Y)_P 
&=\frac{1}{\alpha-1}\log\left(1+\frac{q_n}{\hat{q}}-1\right)
\\
&\leq \frac{1}{\alpha-1} \left(\frac{q_n}{\hat{q}}-1\right)
\label{eq:cc-6}\\
&\leq \frac{1}{\alpha-1 } \left(
\exp\left(\lvert\alpha-1\rvert (1+\gamma)\gamma^{2n} d_H(Q_X^{(0)},\hat{Q}_X) \right)
-1\right)
\label{eq:cc-7}
\\
&= \frac{1}{\lvert\alpha-1 \rvert} \left(
\exp\left(\lvert\alpha-1\rvert (1+\gamma)\gamma^{2n} d_H(Q_X^{(0)},\hat{Q}_X) \right)
-1\right).
\label{eq:cc-8}
\end{align}
\eqref{eq:cc-6} holds because 
$(q_n)_{n\in \mathbb{N}}$ is monotonically decreasing
and $\log(1+t)\leq t$ for all $t\in [0,\infty)$. 
\eqref{eq:cc-7} follows from~\eqref{eq:cc-4}. 
\eqref{eq:cc-8} holds since $\alpha>1$. 

\emph{Case 2: $\alpha<1$.} 
For all $n\in \mathbb{N}$
\begin{align}
\hat{q}-q_n
&=\sum_{\substack{x\in \supp (P_X), \\ y\in \supp(P_Y)}}
P_{XY}(x,y)^\alpha \left((\hat{Q}_X(x)\hat{R}_Y(y) )^{1-\alpha}-(Q_X^{(n)}(x)R_Y^{(n)}(y) )^{1-\alpha}\right)
\\
&=\sum_{\substack{x\in \supp (P_X), \\ y\in \supp(P_Y)}}
P_{XY}(x,y)^\alpha (Q_X^{(n)}(x)R_Y^{(n)}(y) )^{1-\alpha} 
\left(\frac{(\hat{Q}_X(x)\hat{R}_Y(y) )^{1-\alpha}}{(Q_X^{(n)}(x)R_Y^{(n)}(y) )^{1-\alpha}}-1\right)
\\
&\leq q_n \left(
M((\hat{Q}_X \hat{R}_Y)^{1-\alpha}/(Q_X^{(n)} R_Y^{(n)})^{1-\alpha}) 
-1\right).
\label{eq:cc2}
\end{align}

For all $n\in \mathbb{N}$
\begin{align}
\frac{\hat{q}}{q_n}
&\leq M((\hat{Q}_X \hat{R}_Y)^{1-\alpha}/(Q_X^{(n)} R_Y^{(n)})^{1-\alpha}) 
\label{eq:cc21}\\
&\leq M((\hat{Q}_X \hat{R}_Y)^{1-\alpha}/(Q_X^{(n)} R_Y^{(n)})^{1-\alpha}) 
M((Q_X^{(n)} R_Y^{(n)})^{1-\alpha}/(\hat{Q}_X \hat{R}_Y)^{1-\alpha}) 
\label{eq:cc22}\\
&= \exp\left(
d_H( (Q_X^{(n)} R_Y^{(n)})^{1-\alpha}, (\hat{Q}_X \hat{R}_Y)^{1-\alpha} ) 
\right)
\label{eq:cc23}\\
&\leq \exp\left( \lvert 1-\alpha\rvert 
d_H( Q_X^{(n)} R_Y^{(n)}, \hat{Q}_X \hat{R}_Y) 
\right)
\label{eq:cc24}\\
&=\exp\left( \lvert 1-\alpha\rvert
( d_H( Q_X^{(n)}, \hat{Q}_X) 
+ d_H( R_Y^{(n)}, \hat{R}_Y)  )
\right)
\label{eq:cc3}\\
&\leq \exp\left(\lvert 1-\alpha\rvert (1+\gamma)\gamma^{2n} d_H(Q_X^{(0)},\hat{Q}_X) \right).
\label{eq:cc4}
\end{align}
\eqref{eq:cc21} follows from~\eqref{eq:cc2} because $\hat{q}=\exp((\alpha-1)I_\alpha^{\downarrow\downarrow}(X:Y)_P)>0$. 
\eqref{eq:cc22} follows from Lemma~\ref{lem:cc-m1}.
\eqref{eq:cc3} follows from the additivity of Hilbert’s projective metric with respect to the product of PMFs. 
\eqref{eq:cc4} follows from Theorem~\ref{thm:cc-linear}. 
We conclude that for all $n\in \mathbb{N}$
\begin{align}
0\leq x_{n}-I_\alpha^{\downarrow\downarrow}(X:Y)_P 
&=\frac{1}{1-\alpha}\log\left(1+\frac{\hat{q}}{q_{n}}-1\right)
\\
&\leq \frac{1}{1-\alpha} \left(\frac{\hat{q}}{q_{n}}-1\right)
\label{eq:cc6}\\
&\leq \frac{1}{1-\alpha} \left(
\exp\left(\lvert 1-\alpha\rvert (1+\gamma)\gamma^{2n} d_H(Q_X^{(0)},\hat{Q}_X) \right)
-1\right)
\label{eq:cc7}
\\
&= \frac{1}{\lvert\alpha-1\rvert } \left(
\exp\left(\lvert\alpha-1\rvert (1+\gamma)\gamma^{2n} d_H(Q_X^{(0)},\hat{Q}_X) \right)
-1\right).
\label{eq:cc8}
\end{align}
\eqref{eq:cc6} holds because 
$(q_n)_{n\in \mathbb{N}}$ is monotonically increasing 
and $\log(1+t)\leq t$ for all $t\in [0,\infty)$. 
\eqref{eq:cc7} follows from~\eqref{eq:cc4}. 
\eqref{eq:cc8} holds since $\alpha<1$.
This completes the proof of~(a).

(b) can be proven analogously by applying Corollary~\ref{cor:cc-linear1}~(b) in place of Corollary~\ref{cor:cc-linear1}~(a). 
\end{proof}

\section{Properties of Hilbert's projective metric}\label{app:hilbert}
Let $V,W$ be real vector spaces and let $C\subseteq V,K\subseteq W$ be cones. 
A map $f:C\rightarrow K$ is \emph{order-preserving} 
if $x\leq_C y$ implies $f(x)\leq_K f(y)$ for all $x,y\in C$. 
A map $f:C\rightarrow K$ is \emph{order-reversing} 
if $x\leq_C y$ implies $f(y)\leq_K f(x)$ for all $x,y\in C$. 
A map $f:C\rightarrow K$ is \emph{homogeneous of degree $r\in \mathbb{R}$} 
if $f(\alpha x)=\alpha^rf(x)$ for all $x\in C,\alpha \in (0,\infty)$.

Hilbert's projective metric has the following properties.~\cite{lemmens2012nonlinear,lemmens2013birkhoffs}
\begin{enumerate}[label=(\alph*)]
\item \emph{Non-negativity:} 
$d_H(x,y)\in [0,\infty]$ for all $x,y\in C$.
\item \emph{Symmetry:} 
$d_H(x,y)=d_H(y,x)$ for all $x,y\in C$.
\item \emph{Scale invariance:} 
$d_H(\alpha x,\beta y)=d_H(x,y)$ for all $\alpha,\beta \in (0,\infty),x,y\in C$.
\item \emph{Projective definiteness:} If $C$ is a closed cone in a Banach space $(V,\lVert\cdot \rVert)$, then for all $x,y\in C$: $d_H(x,y)=0$ iff there exists $\alpha\in (0,\infty)$ such that $x=\alpha y$.
\item \emph{Homogeneous, order-preserving maps:}~\cite[Corollary~2.1.4]{lemmens2012nonlinear} 
Suppose $C\subseteq V$ and $K\subseteq W$ are closed cones.  
Let $f:C\rightarrow K$ be a map that is homogeneous of degree $r\in (0,\infty)$ and order-preserving. Then,  
\begin{align}
d_H(f(x),f(y))\leq r d_H(x,y)
\end{align}
for all $x,y\in C$ such that $x\sim_C y$.
\item \emph{Homogeneous, order-reversing maps:}~\cite[Corollary~2.1.5]{lemmens2012nonlinear} 
Suppose $C\subseteq V$ and $K\subseteq W$ are closed cones.  
Let $f:C\rightarrow K$ be a map that is homogeneous of degree $r\in (-\infty,0)$ and order-reversing. Then, 
\begin{align}
d_H(f(x),f(y))\leq \lvert r\rvert d_H(x,y)
\end{align} 
for all $x,y\in C$ such that $x\sim_C y$.
\item \emph{Birkhoff-Hopf theorem:}~\cite{birkhoff1957extensions,hopf1963inequality,lemmens2012nonlinear,lemmens2013birkhoffs} 
Let $L:V\rightarrow W$ be a linear map such that $L(C)\subseteq K$. 
The \emph{Birkhoff contraction ratio of $L$} is defined as 
\begin{align}
\kappa (L)\coloneqq \inf \{\alpha\in [0,\infty): d_H(L(x),L(y))\leq \alpha d_H(x,y) \text{ for all } x,y\in C \text{ such that }x\sim_C y\},
\end{align}
and the \emph{projective diameter of $L$} is defined as
\begin{align}
\delta (L)\coloneqq \sup \{d_H(L(x),L(y)):x,y\in C \text{ such that }L(x)\sim_K L(y) \}.
\end{align}

The Birkhoff-Hopf theorem then asserts that 
\begin{align}
\kappa (L)=\tanh \left(\frac{\delta (L)}{4}\right)
\end{align}
where $\tanh(\infty)= 1$.
\end{enumerate}

\section{Proofs for Section~\ref{ssec:linear}}
\subsection{Proof of Theorem~\ref{thm:linear}}\label{app:linear}
\begin{lem}\label{lem:quantum-r}
All of the following hold.
\begin{enumerate}[label=(\alph*)]
\item Let $r\in [-1,1]$. 
Then, for all positive semidefinite $X,Y\in \mathcal{L}(A)$ such that $X\sim Y$
\begin{align}
d_H(X^r,Y^r)\leq \lvert r\rvert d_H(X,Y).
\end{align}
\item Let $Z_{AB}\in \mathcal{L}(AB)$ be positive semidefinite. 
Let $L:\{X_A\in \mathcal{L}(A):X_A^\dagger=X_A\}\rightarrow \{Y_B\in \mathcal{L}(B):Y_B^\dagger=Y_B\},X_A\mapsto \tr_A[Z_{AB}X_A]$. 
Then, for all positive semidefinite $X_A,Y_A\in \mathcal{L}(A)$ such that $X_A\sim Y_A$
\begin{align}\label{eq:lem-z}
d_H(L(X_A),L(Y_A))\leq \kappa (L) d_H(X_A,Y_A).
\end{align}
Moreover, 
\begin{align}\label{eq:delta-l-sup}
\delta (L)
&=\sup_{\substack{\ket{\phi},\ket{\psi}\in A:\\ \brak{\phi}=1,\brak{\psi}=1}} 
d_H(L(\proj{\phi}),L(\proj{\psi})).
\end{align}
\end{enumerate}
\end{lem}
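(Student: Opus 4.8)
The plan is to treat the two parts separately, in each case reducing to the properties of Hilbert's projective metric collected in Appendix~\ref{app:hilbert}. For part~(a) I would split on the sign of $r$. If $r\in(0,1]$, the map $X\mapsto X^r$ sends the closed cone of positive semidefinite operators on $A$ into itself, is homogeneous of degree $r$, and is order-preserving (operator monotonicity of $t\mapsto t^r$ on $[0,\infty)$), so Appendix~\ref{app:hilbert}(e) gives $d_H(X^r,Y^r)\le r\,d_H(X,Y)$ for $X\sim Y$. The case $r=0$ is immediate, since $X^0$ and $Y^0$ both equal the projection onto the common support of $X$ and $Y$. For $r\in[-1,0)$ I would pass to $\mathcal{L}(S)$ with $S\coloneqq\supp(X)=\supp(Y)$, where $X,Y$ are positive definite and $d_H$ is unchanged, and exploit that operator inversion is a $d_H$-isometry: by~\eqref{eq:m-xy-quantum}, $M(X^{-1}/Y^{-1})=\lVert Y^{1/2}X^{-1}Y^{1/2}\rVert_\infty=\lVert X^{-1/2}YX^{-1/2}\rVert_\infty=M(Y/X)$, the middle equality because $Y^{1/2}X^{-1}Y^{1/2}$ and $X^{-1/2}YX^{-1/2}$ are $M^\dagger M$ and $MM^\dagger$ for $M=X^{-1/2}Y^{1/2}$ and hence cospectral; therefore $d_H(X^{-1},Y^{-1})=d_H(X,Y)$. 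Writing $X^r=(X^{|r|})^{-1}$ and combining this isometry with the bound already obtained for the exponent $|r|\in(0,1]$ yields $d_H(X^r,Y^r)=d_H(X^{|r|},Y^{|r|})\le|r|\,d_H(X,Y)$.

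For part~(b), the contraction estimate~\eqref{eq:lem-z} follows from the Birkhoff--Hopf framework once we check that $L$ is a positive linear map from the positive semidefinite cone $C$ on $A$ into the positive semidefinite cone $K$ on $B$. Linearity is clear; $L(X_A)$ is self-adjoint by the cyclicity of $\tr_A$ with respect to operators on $A$; and for $X_A\ge0$ and any $\ket\phi\in B$ one has $\bra\phi L(X_A)\ket\phi=\tr[(X_A^{1/2}\otimes\proj\phi)\,Z_{AB}\,(X_A^{1/2}\otimes\proj\phi)]\ge0$, so $L(X_A)\ge0$. Then~\eqref{eq:lem-z} is essentially the definition of the Birkhoff contraction ratio $\kappa(L)$ in Appendix~\ref{app:hilbert}(g): for $X_A\sim Y_A$ one has $d_H(X_A,Y_A)<\infty$, so letting $\alpha\downarrow\kappa(L)$ in the defining inequality gives the bound with constant $\kappa(L)$ itself, the case $d_H(X_A,Y_A)=0$ being trivial since $L$ maps proportional operators to proportional operators.

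It remains to prove~\eqref{eq:delta-l-sup}, which I expect to be the main technical step. The inequality ``$\ge$'' is clear, as the rank-one operators $\proj\phi$ with $\brak{\phi}=1$ lie in $C$. For ``$\le$'', I would use that~\eqref{eq:m-xy-quantum} gives, for $X\sim Y$,
\[
d_H(X,Y)=\max_{\ket\eta,\ket{\eta'}}\log\frac{\bra\eta X\ket\eta\,\bra{\eta'}Y\ket{\eta'}}{\bra\eta Y\ket\eta\,\bra{\eta'}X\ket{\eta'}},
\]
with the maxima over unit vectors in the common support. Given $x,y\in C\setminus\{0\}$, decompose $x=\sum_k\lambda_k\proj{\phi_k}$ and $y=\sum_l\mu_l\proj{\psi_l}$ with $\lambda_k,\mu_l>0$, observe $\bra\eta L(\proj{\phi_k})\ket\eta=\bra{\phi_k,\eta}Z_{AB}\ket{\phi_k,\eta}$, and invoke the elementary implication that $a_kd_l\le R\,b_kc_l$ for all $k,l$ forces $(\sum_k\lambda_ka_k)(\sum_l\mu_ld_l)\le R\,(\sum_k\lambda_kb_k)(\sum_l\mu_lc_l)$; this gives $d_H(L(x),L(y))\le\max_{k,l}d_H(L(\proj{\phi_k}),L(\proj{\psi_l}))$, and taking suprema over $x,y$ yields~\eqref{eq:delta-l-sup}. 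The delicate point, and the main obstacle, is the handling of supports: one must verify that the two sides of~\eqref{eq:delta-l-sup} stay consistent when some $d_H(L(\proj\phi),L(\proj\psi))$ equals $\infty$, which amounts to showing that an inequivalent pair of rank-one images forces $\delta(L)=\infty$ as well --- this one sees by perturbing the two rank-one inputs and estimating $M(\cdot/\cdot)$ via~\eqref{eq:m-xy-quantum}.
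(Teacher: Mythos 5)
Your proof is correct and reaches the same conclusions as the paper's, but by a partly different route, so a brief comparison is in order. For part~(a) with $r\in(0,1]$ and $r=0$ you argue exactly as the paper does (Appendix~\ref{app:hilbert}~(e) plus projective definiteness). For $r\in[-1,0)$ the paper applies the order-reversing version, Appendix~\ref{app:hilbert}~(f), directly to the operator anti-monotone, degree-$r$ homogeneous map $X\mapsto X^r$; you instead factor $X^r=(X^{|r|})^{-1}$ and verify by hand that inversion is a $d_H$-isometry via the cospectrality of $Y^{1/2}X^{-1}Y^{1/2}$ and $X^{-1/2}YX^{-1/2}$, then reuse the $r\in(0,1]$ bound. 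Both are valid; yours trades one appeal to a standard corollary for a short explicit computation. For part~(b), the contraction estimate~\eqref{eq:lem-z} is obtained identically from the Birkhoff--Hopf framework, Appendix~\ref{app:hilbert}~(g); your extra verifications that $L$ is positivity- and self-adjointness-preserving, and your remark on passing from the infimum defining $\kappa(L)$ to the inequality itself, are correct and harmless. The genuine divergence is in~\eqref{eq:delta-l-sup}: after reducing to states by scale invariance, the paper collapses the supremum from $\mathcal{S}(A)$ to pure states in one step by citing~\cite[Lemma~A.3.4]{lemmens2012nonlinear} (the projective diameter over a convex set is controlled by its extreme points), whereas you give a self-contained argument via spectral decomposition, the variational formula $M(X/Y)=\max_\eta \bra{\eta}X\ket{\eta}/\bra{\eta}Y\ket{\eta}$ from~\eqref{eq:m-xy-quantum}, and the elementary product inequality for ratios of positive sums. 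Your version is more transparent and avoids the external reference, at the cost of having to handle the support/infinity cases (pairs with $L(\proj{\phi})\not\sim L(\proj{\psi})$) by hand --- you correctly identify this as the delicate point, and your sketch of it is plausible, though it is the one place where your writeup stops short of a complete argument while the paper's citation subsumes it.
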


\begin{proof}[Proof of (a)]
Let $f(X)\coloneqq X^r$ for all positive semidefinite $X\in \mathcal{L}(A)$. 
$f$ is homogeneous of degree $r$.

\emph{Case 1: $r\in (0,1]$.} 
Then $f$ is operator monotone (i.e., order-preserving). 
The assertion then follows from Appendix~\ref{app:hilbert}~(e).

\emph{Case 2: $r\in [-1,0)$.} 
Then $f$ is operator anti-monotone (i.e., order-reversing). 
The assertion then follows from Appendix~\ref{app:hilbert}~(f).

\emph{Case 3: $r=0$.} 
Then $X^r=Y^r$ because $X\sim Y$. Hence, $d_H(X^r,Y^r)=d_H(X^r,X^r)=0$ by the projective definiteness of Hilbert's projective metric, see Appendix~\ref{app:hilbert}~(d).
\end{proof}
\begin{proof}[Proof of (b)]

$L$ is an $\mathbb{R}$-linear map (i.e., homogeneous of degree $1$) on the vector space of self-adjoint operators on $A$, 
and $L$ is operator monotone (i.e., order-preserving) on the cone of positive semidefinite operators on $A$. 
The assertion in~\eqref{eq:lem-z} then follows from Appendix~\ref{app:hilbert}~(g).

It remains to prove~\eqref{eq:delta-l-sup}. 
Let $S\coloneqq \{\proj{\phi}: \ket{\phi}\in A,\brak{\phi}=1\}$. Then,
\begin{align}
\delta (L)
&=\sup_{\substack{X,Y\in \mathcal{L}(A)\setminus \{0\}: \\ X\geq 0,Y\geq 0}} d_H(L(X),L(Y))
\label{eq:proof-lem-1}\\
&=\sup_{X,Y\in \mathcal{S}(A)} d_H(L(X),L(Y))
\label{eq:proof-lem-2}\\
&=\sup_{X,Y\in S} d_H(L(X),L(Y))
\label{eq:proof-lem-3}\\
&=\sup_{\substack{\ket{\phi},\ket{\psi}\in A:\\ \brak{\phi}=1,\brak{\psi}=1}} 
d_H(L(\proj{\phi}),L(\proj{\psi})).
\end{align}
\eqref{eq:proof-lem-2} follows from the scale invariance of Hilbert's projective metric, see Appendix~\ref{app:hilbert}~(c). 
\eqref{eq:proof-lem-3} holds because the convex hull of $S$ is $\mathcal{S}(A)$~\cite[Lemma~A.3.4]{lemmens2012nonlinear}.
\end{proof}

\begin{proof}[Proof of Theorem~\ref{thm:linear}]
Let $\sigma_A,\tilde{\sigma}_A\in \mathcal{S}_{\sim\rho_A}(A)$. 
Let us define the linear map 
$L:\{X_A\in \mathcal{L}(A):X_A^\dagger=X_A\}\rightarrow \{Y_B\in \mathcal{L}(B):Y_B^\dagger=Y_B\},X_A\mapsto \tr_A[\rho_{AB}^\alpha X_A]$. 
\begin{align}
d_H(\mathcal{N}_{A\rightarrow B}(\sigma_A), \mathcal{N}_{A\rightarrow B}(\tilde{\sigma}_A))
&=d_H((\tr_A[\rho_{AB}^\alpha \sigma_A^{1-\alpha}])^{\frac{1}{\alpha}},(\tr_A[\rho_{AB}^\alpha \tilde{\sigma}_A^{1-\alpha}])^{\frac{1}{\alpha}} )
\label{eq:proof-thml1}\\
&\leq \frac{1}{\alpha} d_H(\tr_A[\rho_{AB}^\alpha \sigma_A^{1-\alpha}],\tr_A[\rho_{AB}^\alpha \tilde{\sigma}_A^{1-\alpha}])
\label{eq:proof-thml2}
\\
&= \frac{1}{\alpha} d_H(L(\sigma_A^{1-\alpha}),L(\tilde{\sigma}_A^{1-\alpha}))
\\
&\leq \frac{1}{\alpha} \kappa (L) d_H(\sigma_A^{1-\alpha},\tilde{\sigma}_A^{1-\alpha})
\label{eq:proof-thml3}
\\
&\leq \frac{\lvert 1-\alpha\rvert}{\alpha}\kappa (L) d_H(\sigma_A,\tilde{\sigma}_A)
=\gamma \kappa (L) d_H(\sigma_A,\tilde{\sigma}_A)
\label{eq:proof-thml4}
\\
&\leq \gamma d_H(\sigma_A,\tilde{\sigma}_A)
\label{eq:proof-thml5}
\end{align}
\eqref{eq:proof-thml1} follows from Definition~\ref{def:am-quantum} and the scale invariance of Hilbert's projective metric, see Appendix~\ref{app:hilbert}~(c). 
\eqref{eq:proof-thml2} follows from Lemma~\ref{lem:quantum-r}~(a) because $\alpha\in (1,2]\subseteq [1,\infty)$ and $\rho_A\ll \sigma_A$, $\rho_A\ll \tilde{\sigma}_A$. 
\eqref{eq:proof-thml3} follows from Lemma~\ref{lem:quantum-r}~(b) because $\sigma_A\sim \tilde{\sigma}_A$. 
The inequality in~\eqref{eq:proof-thml4} follows from Lemma~\ref{lem:quantum-r}~(a) because $\alpha\in (1,2]\subseteq [0,2]$. 
\eqref{eq:proof-thml4} holds because $\kappa(L)\in [0,1]$, see Appendix~\ref{app:hilbert}~(g). 
This completes the proof of~\eqref{eq:thm-linear-sigma1}. 
The proof of~\eqref{eq:thm-linear-tau1} is analogous.
This completes the proof of~(a).

The assertion in~(b) follows from the proof of~(a), see~\eqref{eq:proof-thml4}, and Lemma~\ref{lem:quantum-r}~(b), see~\eqref{eq:delta-l-sup}.
\end{proof}

\subsection{Proof of Corollary~\ref{cor:linear1}}\label{app:linear1}
\begin{proof}
By the fixed-point property of minimizers for the doubly minimized PRMI~\cite{burri2024properties}, 
\begin{align}
\hat{\sigma}_A&=\mathcal{N}_{B\rightarrow A}\circ \mathcal{N}_{A\rightarrow B}(\hat{\sigma}_A)
\in \mathcal{S}_{\ll \rho_A}(A),
\label{eq:fixed-sigma}\\
\hat{\tau}_B&=\mathcal{N}_{A\rightarrow B}(\hat{\sigma}_A)
\in \mathcal{S}_{\ll \rho_B}(B).
\label{eq:fixed-tau}
\end{align}
Since $\alpha\geq 1$, $\hat{\sigma}_A\in \mathcal{S}_{\rho_A\ll}(A)$ and $\hat{\tau}_B\in \mathcal{S}_{\rho_B\ll}(B)$~\cite{burri2024properties}. 
We conclude that $\hat{\sigma}_A\sim \rho_A$ and $\hat{\tau}_B\sim \rho_B$.

For all $n\in \mathbb{N}$,
\begin{align}
d_H(\sigma_A^{(n+1)},\hat{\sigma}_A) 
&=d_H(\mathcal{N}_{B\rightarrow A}\circ \mathcal{N}_{A\rightarrow B}(\sigma_A^{(n)}),\mathcal{N}_{B\rightarrow A}\circ \mathcal{N}_{A\rightarrow B}(\hat{\sigma}_A))
\label{eq:proof-cor-1}\\
&\leq \gamma d_H(\mathcal{N}_{A\rightarrow B}(\sigma_A^{(n)}), \mathcal{N}_{A\rightarrow B}(\hat{\sigma}_A))
\label{eq:proof-cor-2}\\
&\leq \gamma^2 d_H(\sigma_A^{(n)},\hat{\sigma}_A).
\label{eq:proof-cor-3}
\end{align}
\eqref{eq:proof-cor-1} follows from Definition~\ref{def:am-quantum} and~\eqref{eq:fixed-sigma}.
\eqref{eq:proof-cor-2} and~\eqref{eq:proof-cor-3} follow from Theorem~\ref{thm:linear}~(a).

By recursion, we conclude that for all $n\in \mathbb{N}$
\begin{align}
d_H(\sigma_A^{(n)},\hat{\sigma}_A) \leq \gamma^{2n} d_H(\sigma_A^{(0)},\hat{\sigma}_A).
\label{eq:proof-cor-31}
\end{align}

For all $n\in \mathbb{N}$,
\begin{align}
d_H(\tau_B^{(n)},\hat{\tau}_B) 
&=d_H(\mathcal{N}_{A\rightarrow B}(\sigma_A^{(n)}),\mathcal{N}_{A\rightarrow B}(\hat{\sigma}_A))
\label{eq:proof-cor-4}\\
&\leq \gamma d_H(\sigma_A^{(n)},\hat{\sigma}_A)
\label{eq:proof-cor-5}\\
&\leq \gamma^{2n+1} d_H(\sigma_A^{(0)},\hat{\sigma}_A) .
\label{eq:proof-cor-6}
\end{align}
\eqref{eq:proof-cor-4} follows from Definition~\ref{def:am-quantum} and~\eqref{eq:fixed-tau}. 
\eqref{eq:proof-cor-5} follows from Theorem~\ref{thm:linear}~(a).
\eqref{eq:proof-cor-6} follows from~\eqref{eq:proof-cor-31}.
This completes the proof of~(a). 

(b) can be proven analogously by applying Theorem~\ref{thm:linear}~(b) in place of Theorem~\ref{thm:linear}~(a).
\end{proof}

\subsection{Proof of Corollary~\ref{cor:linear2}}\label{app:linear2}
\begin{lem}\label{lem:m1}
Let $\alpha\in (1,\infty)$ and let $\sigma,\tau\in \mathcal{S}(A)$ be such that $\sigma\sim \tau$. 
Then, $M(\sigma^{1-\alpha}/\tau^{1-\alpha})\geq 1$.
\end{lem}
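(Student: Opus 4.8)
The plan is to prove the quantum analogue of Lemma~\ref{lem:cc-m1} (Case~1) by the same strategy: bound the quantity $M(\sigma^{1-\alpha}/\tau^{1-\alpha})$ from below via the formula~\eqref{eq:m-xy-quantum} for $M$ in the quantum setting, then turn the resulting operator inequality into a trace inequality by testing against $\tau$, and finally apply a reverse H\"older (or reverse Araki–Lieb–Thirring type) inequality together with $\tr[\sigma]=\tr[\tau]=1$. Concretely, since $\sigma\sim\tau$ and $\alpha>1$ means $1-\alpha<0$, we have $\sigma^{1-\alpha}\ll\tau^{1-\alpha}$ (both have support equal to that of $\sigma=\tau$), so $M(\sigma^{1-\alpha}/\tau^{1-\alpha})=\lVert \tau^{(\alpha-1)/2}\sigma^{1-\alpha}\tau^{(\alpha-1)/2}\rVert_\infty$.

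First I would note the elementary bound $\lVert X\rVert_\infty\geq \tr[\omega X]$ for any state $\omega$ and any $X\geq 0$ supported where $\omega$ is; applying this with $\omega=\tau$ and $X=\tau^{(\alpha-1)/2}\sigma^{1-\alpha}\tau^{(\alpha-1)/2}$ gives
\begin{align}
M(\sigma^{1-\alpha}/\tau^{1-\alpha})
&\geq \tr\!\left[\tau^{(\alpha-1)/2}\,\sigma^{1-\alpha}\,\tau^{(\alpha-1)/2}\,\tau\right]
= \tr\!\left[\tau^{\alpha}\,\sigma^{1-\alpha}\right]
= Q_\alpha(\tau\|\sigma).
\end{align}
Wait — $\tr[\tau^{(\alpha-1)/2}\sigma^{1-\alpha}\tau^{(\alpha-1)/2}\tau]$ is not literally $\tr[\tau^\alpha\sigma^{1-\alpha}]$ unless the operators commute, so here one must be more careful: by cyclicity it equals $\tr[\tau^{(\alpha+1)/2}\sigma^{1-\alpha}\tau^{(\alpha-1)/2}]$, which is not manifestly $Q_\alpha(\tau\|\sigma)$. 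The cleaner route, mirroring the classical proof, is to instead invoke the reverse H\"older inequality for the Schatten quasi-norm directly: $\lVert \tau^{(\alpha-1)/2}\sigma^{1-\alpha}\tau^{(\alpha-1)/2}\rVert_\infty\geq$ something like $\tr[\tau^\alpha\sigma^{1-\alpha}]$ by a suitable variational/duality characterization. Then I would lower-bound $\tr[\tau^\alpha\sigma^{1-\alpha}]=Q_\alpha(\tau\|\sigma)$. For $\alpha>1$, $Q_\alpha(\tau\|\sigma)\geq 1$ is exactly the statement that the Petz divergence $D_\alpha(\tau\|\sigma)=\frac{1}{\alpha-1}\log Q_\alpha(\tau\|\sigma)$ is non-negative, which holds for all states with $\tau\ll\sigma$ (here $\tau\sim\sigma$ so $\tau\ll\sigma$). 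Chaining these gives $M(\sigma^{1-\alpha}/\tau^{1-\alpha})\geq Q_\alpha(\tau\|\sigma)\geq 1$, as desired.

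The main obstacle I expect is the non-commutativity in going from the $\lVert\cdot\rVert_\infty$ expression for $M$ to a trace functional that is recognizably $Q_\alpha(\tau\|\sigma)$: unlike the classical case, where $\max_x Q(x)^{1-\alpha}/R(x)^{1-\alpha}\geq \sum_x R(x)^\alpha Q(x)^{1-\alpha}$ is immediate, here one needs a genuine operator inequality — either the reverse Araki–Lieb–Thirring inequality or an argument via $\lVert Y^{-1/2}XY^{-1/2}\rVert_\infty = \lVert X^{1/2}Y^{-1}X^{1/2}\rVert_\infty \geq \tr[X^{1/2}Y^{-1}X^{1/2}\omega]$ for a cleverly chosen $\omega$, or simply the fact that $X\leq M(X/Y)\,Y$ implies (by operator monotonicity of $t\mapsto t^{1/\alpha}$ or by taking traces against $\tau^\alpha$ after manipulation) the needed scalar bound. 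I would aim to phrase it as: $\sigma^{1-\alpha}\leq M\,\tau^{1-\alpha}$ with $M\coloneqq M(\sigma^{1-\alpha}/\tau^{1-\alpha})$, hence $\tr[\tau^\alpha\sigma^{1-\alpha}]\leq M\,\tr[\tau^\alpha\tau^{1-\alpha}]=M\,\tr[\tau]=M$, so $M\geq \tr[\tau^\alpha\sigma^{1-\alpha}]=Q_\alpha(\tau\|\sigma)\geq 1$ by non-negativity of $D_\alpha$ for $\alpha>1$. This avoids reverse H\"older entirely and is the shortest path; the only subtlety is that $\tr[\tau^\alpha\sigma^{1-\alpha}]\leq M\tr[\tau]$ uses that $\tau^\alpha\geq 0$ and the operator inequality $\sigma^{1-\alpha}\leq M\tau^{1-\alpha}$, which is legitimate.
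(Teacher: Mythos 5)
Your final argument is correct and essentially the paper's proof: both reduce the claim to $M(\sigma^{1-\alpha}/\tau^{1-\alpha})\geq\tr[\tau^\alpha\sigma^{1-\alpha}]\geq 1$, the paper obtaining the first inequality from the variational form of $\lVert\cdot\rVert_\infty$ tested against $\tau$ (where you use the defining operator inequality $\sigma^{1-\alpha}\leq M\,\tau^{1-\alpha}$ traced against $\tau^\alpha$, which is equally valid) and the second from the reverse H\"older inequality, which is exactly the non-negativity of $D_\alpha(\tau\|\sigma)$ that you invoke. Incidentally, the worry you raise mid-proposal is unfounded: $\tr[\tau^{(\alpha-1)/2}\sigma^{1-\alpha}\tau^{(\alpha-1)/2}\tau]=\tr[\sigma^{1-\alpha}\tau^{(\alpha-1)/2}\tau\,\tau^{(\alpha-1)/2}]=\tr[\tau^{\alpha}\sigma^{1-\alpha}]$ by cyclicity of the trace, since $\tau$ commutes with its own powers, so no commutativity of $\sigma$ with $\tau$ is needed.
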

\begin{proof}
\begin{align}
M(\sigma^{1-\alpha}/\tau^{1-\alpha})
&=\lVert \tau^{\frac{\alpha-1}{2}}\sigma^{1-\alpha} \tau^{\frac{\alpha-1}{2}}\rVert_\infty
\label{eq:proof-m1}\\
&=\max_{\ket{\phi}\in A: \brak{\phi}=1} 
\tr[\tau^{\frac{\alpha-1}{2}}\sigma^{1-\alpha} \tau^{\frac{\alpha-1}{2}} \proj{\phi}]
\\
&=\max_{\rho\in \mathcal{S}(A)} 
\tr[\tau^{\frac{\alpha-1}{2}}\sigma^{1-\alpha} \tau^{\frac{\alpha-1}{2}} \rho]
\\
&\geq 
\tr[\tau^{\frac{\alpha-1}{2}}\sigma^{1-\alpha} \tau^{\frac{\alpha-1}{2}} \tau]
=\tr[\tau^{\alpha}\sigma^{1-\alpha}]
\\
&\geq \lVert \tau^\alpha \rVert_{1/\alpha} \cdot \lVert \sigma^{\alpha-1}\rVert_{1/(\alpha -1)}^{-1}
\label{eq:proof-m3}\\
&=\tr[\tau]^\alpha \cdot (\tr[\sigma]^{\alpha-1})^{-1}
=1
\label{eq:proof-m4}
\end{align}
\eqref{eq:proof-m1} follows from~\eqref{eq:m-xy-quantum}. 
\eqref{eq:proof-m3} follows from the reverse H\"older inequality~\cite[Lemma~3.1]{tomamichel2016quantum}.
The last equality in~\eqref{eq:proof-m4} follows from $\tr[\tau]=1=\tr[\sigma]$.
\end{proof}

\begin{proof}[Proof of Corollary~\ref{cor:linear2}]
Let $\hat{q}\coloneqq Q_\alpha (\rho_{AB}\| \hat{\sigma}_A\otimes \hat{\tau}_B)$. 
Let $\tau_B^{(n)},\sigma_A^{(n+1)}$ be given by Definition~\ref{def:am-quantum} for all $n\in \mathbb{N}$. 
Let $q_n\coloneqq Q_\alpha (\rho_{AB}\| \sigma_A^{(n)}\otimes \tau_B^{(n)})$ for all $n\in \mathbb{N}$. 

For all $n\in \mathbb{N}$
\begin{align}
q_n-\hat{q}
&= \tr[\rho_{AB}^\alpha ( (\sigma_A^{(n)}\otimes \tau_B^{(n)})^{1-\alpha} - (\hat{\sigma}_A\otimes \hat{\tau}_B)^{1-\alpha} )]
\\
&=\tr[(\hat{\sigma}_A\otimes \hat{\tau}_B)^{\frac{1-\alpha}{2}}\rho_{AB}^\alpha 
(\hat{\sigma}_A\otimes \hat{\tau}_B)^{\frac{1-\alpha}{2}} 
\nonumber\\
&\qquad \left( (\hat{\sigma}_A\otimes \hat{\tau}_B)^{-\frac{1}{2}(1-\alpha)} 
(\sigma_A^{(n)}\otimes \tau_B^{(n)})^{1-\alpha} 
(\hat{\sigma}_A\otimes \hat{\tau}_B)^{-\frac{1}{2}(1-\alpha)} 
-1\right)]
\label{eq:q1}\\
&\leq \tr[\rho_{AB}^\alpha 
(\hat{\sigma}_A\otimes \hat{\tau}_B)^{1-\alpha} ]
\left(M((\sigma_A^{(n)}\otimes \tau_B^{(n)})^{1-\alpha} / (\hat{\sigma}_A\otimes \hat{\tau}_B)^{1-\alpha}) -1 \right)
\label{eq:q2}\\
&= \hat{q} \left(M((\sigma_A^{(n)}\otimes \tau_B^{(n)})^{1-\alpha} / (\hat{\sigma}_A\otimes \hat{\tau}_B)^{1-\alpha}) -1 \right).
\label{eq:q3}
\end{align}
\eqref{eq:q1} holds because $\sigma_A^{(n)}\sim \rho_A\sim \hat{\sigma}_A$ and $\tau_B^{(n)}\sim \rho_B\sim \hat{\tau}_B$. 
\eqref{eq:q2} follows from~\eqref{eq:m-xy-quantum}. 

For all $n\in \mathbb{N}$
\begin{align}
\frac{q_n}{\hat{q}}
&\leq M((\sigma_A^{(n)}\otimes \tau_B^{(n)})^{1-\alpha} / (\hat{\sigma}_A\otimes \hat{\tau}_B)^{1-\alpha})
\label{eq:q30}\\
&\leq M((\sigma_A^{(n)}\otimes \tau_B^{(n)})^{1-\alpha} / (\hat{\sigma}_A\otimes \hat{\tau}_B)^{1-\alpha})
M( (\hat{\sigma}_A\otimes \hat{\tau}_B)^{1-\alpha} / (\sigma_A^{(n)}\otimes \tau_B^{(n)})^{1-\alpha} )
\label{eq:q4}\\
&= \exp \left(d_H( 
(\sigma_A^{(n)}\otimes \tau_B^{(n)})^{1-\alpha} , (\hat{\sigma}_A\otimes \hat{\tau}_B)^{1-\alpha}
)\right)
\\
&\leq \exp \left(\lvert 1-\alpha\rvert
d_H(\sigma_A^{(n)}\otimes \tau_B^{(n)},\hat{\sigma}_A\otimes \hat{\tau}_B) \right)
\label{eq:q40}\\
&= \exp \left(\lvert 1-\alpha\rvert
(d_H(\sigma_A^{(n)},\hat{\sigma}_A)
+d_H(\tau_B^{(n)},\hat{\tau}_B))\right)
\label{eq:q41}\\
&\leq \exp \left(\lvert 1-\alpha\rvert
(1+\gamma)\gamma^{2n}d_H(\sigma_A^{(0)},\hat{\sigma}_A) \right).
\label{eq:q42}
\end{align}
\eqref{eq:q30} follows from~\eqref{eq:q3} because $\hat{q}=\exp((\alpha-1)I_\alpha^{\downarrow\downarrow}(A:B)_\rho)>0$. 
\eqref{eq:q4} follows from Lemma~\ref{lem:m1}. 
\eqref{eq:q40} follows from Lemma~\ref{lem:quantum-r}~(a) because $\alpha\in [1,2]\subseteq [0,2]$. 
\eqref{eq:q41} follows from the additivity of Hilbert's projective metric with respect to the tensor product~\cite{reeb2011hilbert}. 
\eqref{eq:q42} follows from Corollary~\ref{cor:linear1}~(a). 
We conclude that for all $n\in \mathbb{N}$
\begin{align}
0\leq x_{n}-I_\alpha^{\downarrow\downarrow}(A:B)_\rho 
&=\frac{1}{\alpha-1}\log\left(1+\frac{q_n}{\hat{q}}-1\right)
\\
&\leq \frac{1}{\alpha-1} \left(\frac{q_n}{\hat{q}}-1\right)
\label{eq:q6}\\
&\leq \frac{1}{\alpha-1 } \left(
\exp\left(\lvert 1-\alpha\rvert (1+\gamma)\gamma^{2n} d_H(\sigma_A^{(0)},\hat{\sigma}_A) \right)
-1\right)
\label{eq:q7}
\\
&= \frac{1}{\lvert \alpha-1 \rvert} \left(
\exp\left(\lvert \alpha-1\rvert (1+\gamma)\gamma^{2n} d_H(\sigma_A^{(0)},\hat{\sigma}_A) \right)
-1\right).
\label{eq:q8}
\end{align}
\eqref{eq:q6} holds because 
$(q_n)_{n\in \mathbb{N}}$ is monotonically decreasing
and $\log(1+t)\leq t$ for all $t\in [0,\infty)$. 
\eqref{eq:q7} follows from~\eqref{eq:q42}. 
\eqref{eq:q8} holds since $\alpha>1$. 
This completes the proof of~(a).

(b) can be proven analogously by applying Corollary~\ref{cor:linear1}~(b) in place of Corollary~\ref{cor:linear1}~(a). 
\end{proof}

\subsection{Proposition for Remark~\ref{rem:dh-sigma}}\label{app:dh-sigma}
\begin{lem}[Upper bound on Hilbert's metric for states]\label{lem:dh-minspec}
Let $\sigma,\tau\in \mathcal{S}(A)$ be such that $\sigma\sim \tau$. Then,
\begin{align}
d_H(\sigma,\tau)
\leq -2\log \min\{\min (\spec(\sigma)\setminus \{0\}),\min(\spec(\tau)\setminus \{0\})\}.
\end{align}
\end{lem}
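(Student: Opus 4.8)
The plan is to bound the two quantities $M(\sigma/\tau)$ and $M(\tau/\sigma)$ appearing in the definition $d_H(\sigma,\tau)=\log\bigl(M(\sigma/\tau)M(\tau/\sigma)\bigr)$; this formula applies here since $\sigma\sim\tau$ (hence $\sigma\sim_C\tau$ in the cone of positive semidefinite operators) and $\tau\neq 0$. Set $m\coloneqq \min\{\min(\spec(\sigma)\setminus\{0\}),\min(\spec(\tau)\setminus\{0\})\}\in(0,1]$, and let $P$ denote the common support projection of $\sigma$ and $\tau$, which coincide because $\sigma\sim\tau$.

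First I would note that $\tr[\sigma]=1$ forces every eigenvalue of $\sigma$ to lie in $[0,1]$, so $\sigma\leq\lVert\sigma\rVert_\infty P\leq P$. On the other hand, every nonzero eigenvalue of $\tau$ is at least $m$, so $\tau\geq mP$, equivalently $P\leq\frac{1}{m}\tau$. Chaining these operator inequalities gives $\sigma\leq P\leq\frac{1}{m}\tau$, hence $M(\sigma/\tau)=\inf\{\beta\in\mathbb{R}:\sigma\leq_C\beta\tau\}\leq\frac{1}{m}$. (Alternatively, one may invoke \eqref{eq:m-xy-quantum} directly: $M(\sigma/\tau)=\lVert\tau^{-1/2}\sigma\tau^{-1/2}\rVert_\infty\leq\lVert\tau^{-1/2}\rVert_\infty^2\,\lVert\sigma\rVert_\infty\leq\frac{1}{m}$, where the inverse is taken on the support of $\tau$.) By the symmetric argument with the roles of $\sigma$ and $\tau$ exchanged, $M(\tau/\sigma)\leq\frac{1}{m}$ as well. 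Therefore
\begin{align}
d_H(\sigma,\tau)=\log\bigl(M(\sigma/\tau)M(\tau/\sigma)\bigr)\leq\log\frac{1}{m^2}=-2\log m,
\end{align}
which is the claim.

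There is no serious obstacle here; the only points requiring attention are checking that $\sigma\sim\tau$ makes both dominance relations hold — so that $M(\sigma/\tau)$, $M(\tau/\sigma)$, and hence $d_H(\sigma,\tau)$ are finite and the product formula for $d_H$ is valid — and being careful that powers and inverses of $\sigma$ and $\tau$ are taken on their common support.
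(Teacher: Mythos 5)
Your proof is correct and takes essentially the same route as the paper: both bound $M(\sigma/\tau)$ and $M(\tau/\sigma)$ by the reciprocal of the minimal nonzero eigenvalue, using \eqref{eq:m-xy-quantum} together with $\lVert\sigma\rVert_\infty\leq 1$ for states, and then combine the two bounds in $d_H(\sigma,\tau)=\log\bigl(M(\sigma/\tau)M(\tau/\sigma)\bigr)$. The only cosmetic difference is that the paper passes through $d_H\leq 2\log\max\{M(\sigma/\tau),M(\tau/\sigma)\}$ while you multiply the two bounds directly; the conclusion is identical.
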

\begin{proof}
\begin{align}
M(\sigma/\tau) &=\lVert \tau^{-\frac{1}{2}}\sigma\tau^{-\frac{1}{2}}\rVert_\infty 
\leq \lVert \tau^{-1}\rVert_\infty = \frac{1}{\min (\spec(\tau)\setminus \{0\})}
\\
M(\tau/\sigma) &=\lVert \sigma^{-\frac{1}{2}}\tau\sigma^{-\frac{1}{2}}\rVert_\infty 
\leq \lVert \sigma^{-1}\rVert_\infty = \frac{1}{\min (\spec(\sigma)\setminus \{0\})}
\end{align}
Therefore,
\begin{align}
d_H(\sigma,\tau)
\leq 2 \log \max\{M(\sigma/\tau),M(\tau/\sigma)\}
&\leq -2\log \min\{\min (\spec(\sigma)\setminus \{0\}),\min(\spec(\tau)\setminus \{0\})\}.
\end{align}
\end{proof}

\begin{lem}[Lower bound on spectrum of states from alternating minimization]\label{lem:spectrum-geq1}
Let $\alpha\in (1,\infty)$, 
$\rho_{AB}\in \mathcal{S}(AB)$, 
$\sigma_A^{(0)}\in \mathcal{S}_{\sim \rho_A}(A)$. 
Let $\tau_B^{(n)}$, $\sigma_A^{(n+1)}$ for $n\in \mathbb{N}$ be given by Definition~\ref{def:am-quantum}. 
Let us define the following real numbers. 
\begin{align}
\lambda_A&\coloneqq \min(\spec(\tr_B[\rho_{AB}^\alpha])\setminus\{0\})
\\
\lambda_B&\coloneqq \min(\spec(\tr_A[\rho_{AB}^\alpha])\setminus\{0\})
\\
\lambda_{A,n}
&\coloneqq \min (\spec(\sigma_A^{(n)})\setminus \{0\})
&&\forall n\in \mathbb{N}
\\
\lambda_{B,n}
&\coloneqq \min (\spec(\tau_B^{(n)})\setminus \{0\}) 
&&\forall n\in \mathbb{N}
\\
q_0 
&\coloneqq Q_\alpha(\rho_{AB}\| \sigma_A^{(0)}\otimes \tau_B^{(0)})
\\
c_A
&\coloneqq (\lambda_A/q_0)^{\frac{1}{\alpha}}
\\
c_B
&\coloneqq (\lambda_B/q_0)^{\frac{1}{\alpha}}
\end{align}
Then 
$\lambda_{A,n+1}\geq c_{A}>0$ and 
$\lambda_{B,n}\geq c_{B}>0$
for all $n\in \mathbb{N}$.
\end{lem}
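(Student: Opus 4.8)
The plan is to prove the two bounds in parallel; I spell out $\lambda_{B,n}\ge c_B$ in detail and indicate how $\lambda_{A,n+1}\ge c_A$ follows by interchanging $A$ and $B$. Write $T_B^{(n)}\coloneqq\tr_A[\rho_{AB}^\alpha(\sigma_A^{(n)})^{1-\alpha}]$, so that by Definition~\ref{def:am-quantum} one has $\tau_B^{(n)}=(T_B^{(n)})^{1/\alpha}/\tr[(T_B^{(n)})^{1/\alpha}]$ and hence $\lambda_{B,n}=(\min(\spec(T_B^{(n)})\setminus\{0\}))^{1/\alpha}/\tr[(T_B^{(n)})^{1/\alpha}]$. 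It thus suffices to lower-bound $\min(\spec(T_B^{(n)})\setminus\{0\})$ by $\lambda_B$ and upper-bound $\tr[(T_B^{(n)})^{1/\alpha}]$ by $q_0^{1/\alpha}$.

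Before that, some support bookkeeping: inductively $\sigma_A^{(n)}\in\mathcal{S}_{\sim\rho_A}(A)$ and $\tau_B^{(n)}\in\mathcal{S}_{\sim\rho_B}(B)$ (from the codomains of $\mathcal{N}_{B\rightarrow A},\mathcal{N}_{A\rightarrow B}$ in~\eqref{eq:def-N} and the hypothesis $\sigma_A^{(0)}\in\mathcal{S}_{\sim\rho_A}(A)$); by the standard marginal-support inclusion $\supp(\rho_{AB}^\alpha)=\supp(\rho_{AB})\subseteq\supp(\rho_A)\otimes B$, so $(\rho_A^0\otimes1_B)\rho_{AB}^\alpha=\rho_{AB}^\alpha$; and since the support of a partial trace of a positive semidefinite operator depends only on that operator's support, $\supp(\tr_A[\rho_{AB}^\alpha])=\supp(\tr_A[\rho_{AB}])=\supp(\rho_B)$, whence $\lambda_B$ is the smallest eigenvalue of $\tr_A[\rho_{AB}^\alpha]$ restricted to $\supp(\rho_B)$ and $\tr_A[\rho_{AB}^\alpha]\ge\lambda_B\rho_B^0$.

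The core step is the operator inequality $T_B^{(n)}\ge\tr_A[\rho_{AB}^\alpha]$. Since $\sigma_A^{(n)}$ is a state with $\supp(\sigma_A^{(n)})=\supp(\rho_A)$ and $\alpha>1$, every nonzero eigenvalue of $(\sigma_A^{(n)})^{1-\alpha}$ is at least $1$, so $W\coloneqq(\sigma_A^{(n)})^{1-\alpha}-\rho_A^0\ge0$. Using the standard identity $\tr_A[(X_A\otimes1_B)Z_{AB}]=\tr_A[(X_A^{1/2}\otimes1_B)Z_{AB}(X_A^{1/2}\otimes1_B)]$ for $X_A\ge0$ (which also shows $T_B^{(n)}\ge0$, consistent with Definition~\ref{def:am-quantum}), we get $T_B^{(n)}-\tr_A[(\rho_A^0\otimes1_B)\rho_{AB}^\alpha]=\tr_A[(W^{1/2}\otimes1_B)\rho_{AB}^\alpha(W^{1/2}\otimes1_B)]\ge0$, and the subtracted term equals $\tr_A[\rho_{AB}^\alpha]$ by the bookkeeping above. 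Combining with $\tr_A[\rho_{AB}^\alpha]\ge\lambda_B\rho_B^0$ gives $T_B^{(n)}\ge\lambda_B\rho_B^0$; since $\supp(T_B^{(n)})=\supp((T_B^{(n)})^{1/\alpha})=\supp(\tau_B^{(n)})=\supp(\rho_B)$, restricting to $\supp(\rho_B)$ yields $\min(\spec(T_B^{(n)})\setminus\{0\})\ge\lambda_B$. I expect this inequality, and the fact that it relies essentially on $\alpha>1$ through $(\sigma_A^{(n)})^{1-\alpha}\ge\rho_A^0$, to be the main content of the proof; the rest is routine.

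For the denominator, $\tr[(T_B^{(n)})^{1/\alpha}]=\|T_B^{(n)}\|_{1/\alpha}^{1/\alpha}$ (Schatten $\tfrac1\alpha$-quasi-norm of a positive semidefinite operator), and since $\tau_B^{(n)}=\mathcal{N}_{A\rightarrow B}(\sigma_A^{(n)})$ is the partial minimizer given $\sigma_A^{(n)}$ and $\rho_A\ll\sigma_A^{(n)}$, \eqref{eq:inf-tau} gives $x_n=D_\alpha(\rho_{AB}\|\sigma_A^{(n)}\otimes\tau_B^{(n)})=\tfrac{1}{\alpha-1}\log\|T_B^{(n)}\|_{1/\alpha}$, i.e.\ $\|T_B^{(n)}\|_{1/\alpha}=\exp((\alpha-1)x_n)$. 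The sequence $(x_n)$ is non-increasing by the usual alternating-minimization argument (each $\sigma_A^{(n+1)}$ minimizes over its first argument and each $\tau_B^{(n)}$ over its second), so $x_n\le x_0$; as $\alpha>1$ this gives $\|T_B^{(n)}\|_{1/\alpha}\le\exp((\alpha-1)x_0)=q_0$ and hence $\tr[(T_B^{(n)})^{1/\alpha}]\le q_0^{1/\alpha}$, where $q_0=\exp((\alpha-1)x_0)\in(0,\infty)$ is finite because $\rho_{AB}\ll\sigma_A^{(0)}\otimes\tau_B^{(0)}$ (as $\sigma_A^{(0)}\sim\rho_A$, $\tau_B^{(0)}\sim\rho_B$). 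Putting the two bounds together gives $\lambda_{B,n}\ge\lambda_B^{1/\alpha}/q_0^{1/\alpha}=c_B>0$. Finally, $\lambda_{A,n+1}\ge c_A$ follows by the same argument applied to $T_A^{(n)}\coloneqq\tr_B[\rho_{AB}^\alpha(\tau_B^{(n)})^{1-\alpha}]$: one uses $T_A^{(n)}\ge\tr_B[\rho_{AB}^\alpha]\ge\lambda_A\rho_A^0$ with $\supp(T_A^{(n)})=\supp(\sigma_A^{(n+1)})=\supp(\rho_A)$, the $A$-analogue of~\eqref{eq:inf-tau} for the partial minimizer $\sigma_A^{(n+1)}=\mathcal{N}_{B\rightarrow A}(\tau_B^{(n)})$, and $D_\alpha(\rho_{AB}\|\sigma_A^{(n+1)}\otimes\tau_B^{(n)})\le D_\alpha(\rho_{AB}\|\sigma_A^{(n)}\otimes\tau_B^{(n)})=x_n\le x_0$.
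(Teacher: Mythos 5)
Your proof is correct and follows essentially the same route as the paper's: the key ingredients in both are the identity $q_n\,\tau_B^{(n)\alpha}=\tr_A[\rho_{AB}^\alpha\,\sigma_A^{(n)1-\alpha}]$ from~\eqref{eq:inf-tau}, the operator bound $\tr_A[\rho_{AB}^\alpha\,\sigma_A^{(n)1-\alpha}]\geq\tr_A[\rho_{AB}^\alpha]\geq\lambda_B\rho_B^0$ (using $\sigma^{1-\alpha}\geq\sigma^0$ for $\alpha>1$), and the monotonicity $q_n\leq q_0$. The only difference is presentational—you split the estimate into a spectral bound on $T_B^{(n)}$ and a bound on the normalization, whereas the paper runs it as a single chain of operator inequalities—so no further comment is needed.
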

\begin{proof}
For all $n \in \mathbb{N}$, let
\begin{subequations}\label{eq:def-12}
\begin{align}
x_{n}&\coloneqq D_\alpha (\rho_{AB}\| \sigma_A^{(n)}\otimes \tau_B^{(n)}), \qquad 
q_n\coloneqq Q_\alpha (\rho_{AB}\| \sigma_A^{(n)}\otimes \tau_B^{(n)}),
\\
x_{n+1/2}&\coloneqq D_\alpha (\rho_{AB}\| \sigma_A^{(n+1)}\otimes \tau_B^{(n)}), \qquad 
q_{n+1/2}\coloneqq Q_\alpha (\rho_{AB}\| \sigma_A^{(n+1)}\otimes \tau_B^{(n)}).
\end{align}
\end{subequations}
The sequence $(x_n)_{n\in \frac{1}{2}\mathbb{N}}$ is monotonically decreasing by construction, see~\eqref{eq:tau-optimal} and~\eqref{eq:inf-tau}. 
$x_n\in [0,\infty)$ for all $n\in \frac{1}{2}\mathbb{N}$ due to the non-negativity of the Petz divergence. 
Hence, $q_n=\exp((\alpha-1)x_n)\in [1,\infty)$ for all $n\in \frac{1}{2}\mathbb{N}$, and 
$(q_n)_{n\in \frac{1}{2}\mathbb{N}}$ is monotonically decreasing. 

Let $\mathcal{N}_{A\rightarrow B}$, $\mathcal{N}_{B\rightarrow A}$ be defined as in~\eqref{eq:def-N}.
By~\eqref{eq:inf-tau}, we have for all $n\in \mathbb{N}$ 
\begin{align}
q_n&=\exp\left((\alpha-1)D_\alpha (\rho_{AB}\| \sigma_A^{(n)}\otimes \mathcal{N}_{A\rightarrow B}(\sigma_A^{(n)}))\right)
=\tr[(\tr_A[\rho_{AB}^\alpha \sigma_A^{(n)1-\alpha}])^{\frac{1}{\alpha}} ]^\alpha,
\label{eq:qn0}\\
q_{n+1/2}&=\exp\left((\alpha-1)D_\alpha (\rho_{AB}\| \mathcal{N}_{B\rightarrow A}(\tau_B^{(n)})\otimes \tau_B^{(n)} )\right)
=\tr[(\tr_B[\rho_{AB}^\alpha \tau_B^{(n)1-\alpha}])^{\frac{1}{\alpha}} ]^\alpha.
\label{eq:qn1}
\end{align} 

For all $n\in \mathbb{N}$
\begin{align}
q_0 \tau_B^{(n)\alpha}
&\geq q_n \tau_B^{(n)\alpha}
\label{eq:q0qn1}\\
&=q_{n}\left(\mathcal{N}_{A\rightarrow B}(\sigma_{A}^{(n)})\right)^\alpha
\stackrel{\eqref{eq:qn0}}{=} \tr_A[\rho_{AB}^\alpha \sigma_A^{(n)1-\alpha} ]
\\
&\geq \tr_A[\rho_{AB}^\alpha \sigma_A^{(n)0}] 
=\tr_A[\rho_{AB}^\alpha]
\geq \lambda_B\rho_B^0 =\lambda_B \tau_B^{(n)0}.
\end{align}
\eqref{eq:q0qn1} holds because $(q_n)_{n\in \frac{1}{2}\mathbb{N}}$ is monotonically decreasing. 
Therefore, for all $n\in \mathbb{N}$
\begin{align}
\lambda_{B,n}\geq \left(\frac{\lambda_B}{q_0}\right)^{\frac{1}{\alpha}}
=c_B.
\end{align}

For all $n\in \mathbb{N}$
\begin{align}
q_0 \sigma_A^{(n+1)\alpha}
&\geq q_{n+1/2} \sigma_A^{(n+1)\alpha}
\label{eq:q0qn2}\\
&=q_{n+1/2}\left(\mathcal{N}_{B\rightarrow A}(\tau_B^{(n)})\right)^\alpha
\stackrel{\eqref{eq:qn1}}{=} \tr_B[\rho_{AB}^\alpha \tau_B^{(n)1-\alpha} ]
\\
&\geq \tr_B[\rho_{AB}^\alpha \tau_B^{(n)0}] 
=\tr_B[\rho_{AB}^\alpha ]
\geq \lambda_A\rho_A^0 =\lambda_A \sigma_A^{(n+1)0}.
\end{align}
\eqref{eq:q0qn2} holds because $(q_n)_{n\in \frac{1}{2}\mathbb{N}}$ is monotonically decreasing. 
Therefore, for all $n\in \mathbb{N}$
\begin{align}
\lambda_{A,n+1}\geq \left(\frac{\lambda_A}{q_0}\right)^{\frac{1}{\alpha}}
=c_A.
\end{align}
\end{proof}

\begin{prop}[Upper bound independent of $\hat{\sigma}_A$]\label{prop:bound}
Let $\alpha\in (1,\infty)$, 
$\rho_{AB}\in \mathcal{S}(AB)$, 
$(\hat{\sigma}_A,\hat{\tau}_B)\in \argmin_{(\sigma_A,\tau_B)\in \mathcal{S}(A)\times \mathcal{S}(B)}D_\alpha (\rho_{AB}\| \sigma_A\otimes \tau_B)$. 
Let $\sigma_A^{(0)}\in \mathcal{S}_{\rho_A\ll}(A)$ 
and let $\mathcal{N}_{A\rightarrow B}$ be given by Definition~\ref{def:am-quantum}. 
Moreover, let us define the following objects.
\begin{subequations}\label{eq:def-linear-c0}
\begin{align}
\lambda_A&\coloneqq \min(\spec(\tr_B[\rho_{AB}^\alpha])\setminus\{0\})
\\
\tilde{\sigma}_A^{(0)}
&\coloneqq \frac{\rho_A^0 \sigma_A^{(0)} \rho_A^0}{\tr[\rho_A^0\sigma_A^{(0)}]}
\\
q_0 
&\coloneqq Q_\alpha(\rho_{AB}\| \tilde{\sigma}_A^{(0)}\otimes \mathcal{N}_{A\rightarrow B}(\tilde{\sigma}_A^{(0)}) )
\\
c_A
&\coloneqq (\lambda_A/q_0)^{\frac{1}{\alpha}}
\\
c_0
&\coloneqq -2\log \min\{\min ( \spec(\tilde{\sigma}_A^{(0)})\setminus \{0\} ),c_A\}
\end{align}
\end{subequations}
Then,
$d_H(\tilde{\sigma}_A^{(0)},\hat{\sigma}_A)
\leq c_0.$
\end{prop}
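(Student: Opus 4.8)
The plan is to bound $d_H(\tilde{\sigma}_A^{(0)},\hat{\sigma}_A)$ by feeding a lower bound on the smallest nonzero eigenvalue of $\hat{\sigma}_A$ into Lemma~\ref{lem:dh-minspec}. Since the statement covers all $\alpha\in(1,\infty)$ whereas the contraction results of Section~\ref{ssec:linear} are confined to $\alpha\in(1,2]$, I would \emph{not} obtain this bound as a limit of the per-iteration bounds of Lemma~\ref{lem:spectrum-geq1} via $\sigma_A^{(n)}\to\hat{\sigma}_A$; instead I would apply the same mechanism directly to the global minimizer, using that it is a fixed point of the alternating map.

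First I would record the support structure. By Remark~\ref{rem:restriction}, $\tilde{\sigma}_A^{(0)}\in\mathcal{S}_{\sim\rho_A}(A)$. By the fixed-point characterization of the minimizer~\cite{burri2024properties} (cf.\ \eqref{eq:fixed-sigma}--\eqref{eq:fixed-tau}), with $\hat{\tau}_B\coloneqq\mathcal{N}_{A\rightarrow B}(\hat{\sigma}_A)$ one has $\hat{\sigma}_A=\mathcal{N}_{B\rightarrow A}(\hat{\tau}_B)$, and the codomains prescribed in~\eqref{eq:def-N} force $\hat{\sigma}_A\in\mathcal{S}_{\sim\rho_A}(A)$ and $\hat{\tau}_B\in\mathcal{S}_{\sim\rho_B}(B)$. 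In particular $\tilde{\sigma}_A^{(0)}\sim\hat{\sigma}_A$, so $d_H(\tilde{\sigma}_A^{(0)},\hat{\sigma}_A)<\infty$ and Lemma~\ref{lem:dh-minspec} applies.

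Next, for the eigenvalue bound, set $\hat{q}\coloneqq Q_\alpha(\rho_{AB}\|\hat{\sigma}_A\otimes\hat{\tau}_B)=\exp((\alpha-1)I_\alpha^{\downarrow\downarrow}(A:B)_\rho)\in[1,\infty)$. Mimicking the chain in the proof of Lemma~\ref{lem:spectrum-geq1}: the $A\leftrightarrow B$ analogue of~\eqref{eq:inf-tau} together with $\hat{\sigma}_A=\mathcal{N}_{B\rightarrow A}(\hat{\tau}_B)$ gives $\hat{q}\,\hat{\sigma}_A^\alpha=\tr_B[\rho_{AB}^\alpha\hat{\tau}_B^{1-\alpha}]$; since $\alpha>1$ yields $\hat{\tau}_B^{1-\alpha}\geq\hat{\tau}_B^0=\rho_B^0$ on $\supp(\hat{\tau}_B)$, and since $\supp(\rho_{AB}^\alpha)\subseteq A\otimes\supp(\rho_B)$, this becomes
\begin{align}
\hat{q}\,\hat{\sigma}_A^\alpha
&=\tr_B[\rho_{AB}^\alpha\hat{\tau}_B^{1-\alpha}]
\geq\tr_B[\rho_{AB}^\alpha(1_A\otimes\rho_B^0)]\\
&=\tr_B[\rho_{AB}^\alpha]\geq\lambda_A\rho_A^0=\lambda_A\hat{\sigma}_A^0.
\end{align}
Applying $t\mapsto t^{1/\alpha}$ (operator monotone since $1/\alpha\in(0,1)$) on the common support gives $\hat{\sigma}_A\geq(\lambda_A/\hat{q})^{1/\alpha}\hat{\sigma}_A^0$, i.e.\ $\min(\spec(\hat{\sigma}_A)\setminus\{0\})\geq(\lambda_A/\hat{q})^{1/\alpha}$. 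Because $\tilde{\sigma}_A^{(0)}\otimes\mathcal{N}_{A\rightarrow B}(\tilde{\sigma}_A^{(0)})$ is a feasible product state and, for $\alpha>1$, $D_\alpha$ is increasing in $Q_\alpha$, one has $\hat{q}\leq q_0$, hence $\min(\spec(\hat{\sigma}_A)\setminus\{0\})\geq(\lambda_A/q_0)^{1/\alpha}=c_A$.

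Finally I would plug this into Lemma~\ref{lem:dh-minspec}; using that $t\mapsto-2\log t$ is decreasing,
\begin{align}
d_H(\tilde{\sigma}_A^{(0)},\hat{\sigma}_A)
&\leq-2\log\min\{\min(\spec(\tilde{\sigma}_A^{(0)})\setminus\{0\}),\min(\spec(\hat{\sigma}_A)\setminus\{0\})\}\\
&\leq-2\log\min\{\min(\spec(\tilde{\sigma}_A^{(0)})\setminus\{0\}),c_A\}=c_0.
\end{align}
The main obstacle is the eigenvalue step: transplanting the per-iteration estimate of Lemma~\ref{lem:spectrum-geq1} to the fixed point $\hat{\sigma}_A$ (which is why one cannot simply invoke Corollary~\ref{cor:linear1} here), and, inside it, justifying $\tr_B[\rho_{AB}^\alpha\hat{\tau}_B^{1-\alpha}]\geq\tr_B[\rho_{AB}^\alpha(1_A\otimes\rho_B^0)]$ and the accompanying support identities despite the non-commutativity of $\rho_{AB}^\alpha$ and $\hat{\tau}_B$; the remaining steps are routine.
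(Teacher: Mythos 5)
Your proof is correct and takes essentially the same route as the paper: combine Lemma~\ref{lem:dh-minspec} with a lower bound $\min(\spec(\hat{\sigma}_A)\setminus\{0\})\geq c_A$ obtained from the structure of $\mathcal{N}_{B\rightarrow A}$ and the feasibility bound $\hat{q}\leq q_0$. The only difference is cosmetic: the paper obtains the eigenvalue bound by citing Lemma~\ref{lem:spectrum-geq1} (implicitly instantiated at the fixed point), whereas you rederive the same chain of inequalities directly for $\hat{\sigma}_A$, which is if anything a more careful handling of that step since the lemma as stated bounds the iterates rather than the minimizer.
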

\begin{proof}
$\hat{\sigma}_A\in \mathcal{S}_{\sim \rho_A}(A)$ because $\alpha\in (1,\infty)$~\cite{burri2024properties}. 
Therefore, $\hat{\sigma}_A\sim \rho_A\sim \tilde{\sigma}_A^{(0)}$.
\begin{align}
d_H(\tilde{\sigma}_A^{(0)},\hat{\sigma}_A)
&\leq -2\log \min\{\min (\spec(\tilde{\sigma}_A^{(0)})\setminus \{0\}),\min(\spec(\hat{\sigma}_A)\setminus \{0\})\}
\label{eq:proof-prop1}\\
&\leq -2\log \min\{\min (\spec(\tilde{\sigma}_A^{(0)})\setminus \{0\}),c_A\} 
=c_0
\label{eq:proof-prop2}
\end{align}
\eqref{eq:proof-prop1} follows from Lemma~\ref{lem:dh-minspec}. 
\eqref{eq:proof-prop2} follows from Lemma~\ref{lem:spectrum-geq1}. 
\end{proof}

\section{Proofs for Section~\ref{ssec:sublinear}}\label{app:proof-frechet}
The proof of Theorem~\ref{thm:sublinear} invokes two lemmas established in advance.

\subsection{Lemmas for Theorem~\ref{thm:sublinear}}\label{app:lemmas}
\begin{lem}[Lower bound on spectrum of states from alternating minimization]\label{lem:spectrum-leq1}
Let $\alpha\in (\frac{1}{2},1)$, 
$\rho_{AB}\in \mathcal{S}(AB)$, 
$\sigma_A^{(0)}\in \mathcal{S}_{\sim \rho_A}(A)$. 
Let $\tau_B^{(n)}$, $\sigma_A^{(n+1)}$ for $n\in \mathbb{N}$ be given by Definition~\ref{def:am-quantum}. 
Let us define the following real numbers. 
\begin{align}
\lambda_A&\coloneqq \min(\spec(\tr_B[\rho_{AB}^\alpha])\setminus\{0\})
\\
\lambda_B&\coloneqq \min(\spec(\tr_A[\rho_{AB}^\alpha])\setminus\{0\})
\\
\lambda_{A,n}
&\coloneqq \min (\spec(\sigma_A^{(n)})\setminus \{0\})
&&\forall n\in \mathbb{N}
\label{eq:lambda-an}\\
\lambda_{B,n}
&\coloneqq \min (\spec(\tau_B^{(n)})\setminus \{0\}) 
&&\forall n\in \mathbb{N}
\label{eq:lambda-bn}\\
c_{A}&\coloneqq \min(1, \lambda_A^\frac{\alpha}{2\alpha-1} \lambda_B^{\frac{1-\alpha}{2\alpha-1}} ) \lambda_{A,0}
\\
c_{B}&\coloneqq \min(1, \lambda_A^\frac{1-\alpha}{2\alpha-1} \lambda_B^{\frac{(1-\alpha)^2}{(2\alpha-1)\alpha}} ) \lambda_B^{\frac{1}{\alpha}} \lambda_{A,0}^{\frac{1-\alpha}{\alpha}}
\label{eq:def-cb}
\end{align}
Then 
$\lambda_{A,n}\geq c_{A}>0$ and 
$\lambda_{B,n}\geq c_{B}>0$
for all $n\in \mathbb{N}$.
\end{lem}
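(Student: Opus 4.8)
The plan is to mirror the proof of Lemma~\ref{lem:spectrum-geq1}, but adapted to the range $\alpha\in(\frac12,1)$, where $1-\alpha\in(0,1)$ reverses the direction of the relevant operator comparisons and the role ``$q_n\geq 1$'' played there is taken over by ``$q_n\leq 1$'' here. For $n\in\mathbb{N}$ I would set $q_n\coloneqq Q_\alpha(\rho_{AB}\|\sigma_A^{(n)}\otimes\tau_B^{(n)})$ and $q_{n+1/2}\coloneqq Q_\alpha(\rho_{AB}\|\sigma_A^{(n+1)}\otimes\tau_B^{(n)})$. Since $\sigma_A^{(0)}\in\mathcal{S}_{\sim\rho_A}(A)$ and the codomains of $\mathcal{N}_{A\rightarrow B},\mathcal{N}_{B\rightarrow A}$ are $\mathcal{S}_{\sim\rho_B}(B),\mathcal{S}_{\sim\rho_A}(A)$, one has $\sigma_A^{(n)}\sim\rho_A$ and $\tau_B^{(n)}\sim\rho_B$ for all $n$, so all the Petz divergences involved are finite and non-negative; as $\alpha<1$ this gives $q_n,q_{n+1/2}\in(0,1]$. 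From Definition~\ref{def:am-quantum} together with~\eqref{eq:inf-tau}, $\tau_B^{(n)}$ equals $(\tr_A[\rho_{AB}^\alpha\sigma_A^{(n)1-\alpha}])^{1/\alpha}$ normalized by $\tr[(\tr_A[\rho_{AB}^\alpha\sigma_A^{(n)1-\alpha}])^{1/\alpha}]$ and $q_n=\tr[(\tr_A[\rho_{AB}^\alpha\sigma_A^{(n)1-\alpha}])^{1/\alpha}]^\alpha$, which rearranges to the operator identity $q_n\tau_B^{(n)\alpha}=\tr_A[\rho_{AB}^\alpha\sigma_A^{(n)1-\alpha}]$ on $B$, and analogously $q_{n+1/2}\sigma_A^{(n+1)\alpha}=\tr_B[\rho_{AB}^\alpha\tau_B^{(n)1-\alpha}]$ on $A$.

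From these identities I would extract a coupled recursion for the minimal nonzero eigenvalues $\lambda_{A,n},\lambda_{B,n}$ from~\eqref{eq:lambda-an}--\eqref{eq:lambda-bn}. Using that $\tr_A[\rho_{AB}^\alpha(N_A\otimes 1_B)]=\tr_A[(N_A^{1/2}\otimes 1_B)\rho_{AB}^\alpha(N_A^{1/2}\otimes 1_B)]$ for positive semidefinite $N_A$ (so that operator inequalities in $N_A$ are preserved under the partial trace), the support containment $\supp(\rho_{AB})\subseteq\supp(\rho_A)\otimes B$ (whence $\tr_A[\rho_{AB}^\alpha(\rho_A^0\otimes 1_B)]=\tr_A[\rho_{AB}^\alpha]$), the bound $\sigma_A^{(n)1-\alpha}\geq\lambda_{A,n}^{1-\alpha}\rho_A^0$ (valid since $1-\alpha\in(0,1)$), and $\tr_A[\rho_{AB}^\alpha]\geq\lambda_B\rho_B^0=\lambda_B\tau_B^{(n)0}$, one obtains $q_n\tau_B^{(n)\alpha}\geq\lambda_{A,n}^{1-\alpha}\lambda_B\tau_B^{(n)0}$; dividing by $q_n\leq 1$ and comparing minimal nonzero eigenvalues yields $\lambda_{B,n}\geq\lambda_{A,n}^{r}\lambda_B^{1/\alpha}$ with $r\coloneqq\frac{1-\alpha}{\alpha}$, and the same argument with $A$ and $B$ exchanged gives $\lambda_{A,n+1}\geq\lambda_{B,n}^{r}\lambda_A^{1/\alpha}$. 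Since $\alpha\in(\frac12,1)$ forces $r\in(0,1)$, composing the two bounds produces the single-step recursion $\lambda_{A,n+1}\geq\lambda_{A,n}^{r^2}K_A$, where $K_A\coloneqq\lambda_A^{1/\alpha}\lambda_B^{(1-\alpha)/\alpha^2}$.

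Finally I would close an induction proving $\lambda_{A,n}\geq c_A$ for all $n$. Because $1-r^2=\frac{2\alpha-1}{\alpha^2}>0$, the increasing map $t\mapsto t^{r^2}K_A$ has the unique positive fixed point $\lambda^*\coloneqq K_A^{\alpha^2/(2\alpha-1)}=\lambda_A^{\alpha/(2\alpha-1)}\lambda_B^{(1-\alpha)/(2\alpha-1)}$, so $K_A=(\lambda^*)^{1-r^2}$ and $c_A=\min(1,\lambda^*)\lambda_{A,0}$. The base case is immediate from $\min(1,\lambda^*)\leq 1$; for the inductive step, monotonicity gives $\lambda_{A,n+1}\geq\lambda_{A,n}^{r^2}K_A\geq c_A^{r^2}K_A$, and $c_A^{r^2}K_A\geq c_A$ is equivalent to $\lambda^*\geq\min(1,\lambda^*)\lambda_{A,0}$, which follows from $\lambda_{A,0}\leq 1$ by a one-line case split on whether $\lambda^*\leq 1$ or $\lambda^*>1$. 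Plugging $\lambda_{A,n}\geq c_A$ back into $\lambda_{B,n}\geq\lambda_{A,n}^{r}\lambda_B^{1/\alpha}$ and using $\min(1,\lambda^*)^{r}=\min(1,(\lambda^*)^{r})$ reproduces exactly the constant $c_B$ of~\eqref{eq:def-cb}. I expect the main obstacles to be the careful justification of the partial-trace operator manipulations in the second step, together with the exponent bookkeeping needed to recognize the fixed point $\lambda^*$ and thereby match the stated closed forms of $c_A$ and $c_B$; the hypothesis $\alpha>\frac12$ enters precisely through $r^2<1$, without which the recursion would not stabilize at a positive constant.
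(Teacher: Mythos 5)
Your proposal is correct and follows essentially the same route as the paper: the same operator identities $q_n\tau_B^{(n)\alpha}=\tr_A[\rho_{AB}^\alpha\sigma_A^{(n)1-\alpha}]$ and $q_{n+1/2}\sigma_A^{(n+1)\alpha}=\tr_B[\rho_{AB}^\alpha\tau_B^{(n)1-\alpha}]$, the same lower bounds via $\sigma_A^{(n)1-\alpha}\geq\lambda_{A,n}^{1-\alpha}\rho_A^0$ and $\tr_A[\rho_{AB}^\alpha]\geq\lambda_B\rho_B^0$, and the same coupled recursion with exponent $r=\frac{1-\alpha}{\alpha}$. The only (cosmetic) difference is in closing the recursion: you discard the factors $q_n\leq 1$ immediately and run a fixed-point induction on $t\mapsto t^{r^2}K_A$, whereas the paper keeps the $q_n$ and unrolls the recursion explicitly with a geometric sum in the exponent before bounding; both yield exactly the stated constants $c_A$ and $c_B$.
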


\begin{proof}
For all $n \in \mathbb{N}$, let $(x_n)_{n\in \mathbb{N}},(q_n)_{n\in \mathbb{N}}$ be defined as in~\eqref{eq:def-12}. 
The sequence $(x_n)_{n\in \frac{1}{2}\mathbb{N}}$ is monotonically decreasing by construction, see~\eqref{eq:tau-optimal} and~\eqref{eq:inf-tau}. 
$x_n\in [0,\infty)$ for all $n\in \frac{1}{2}\mathbb{N}$ due to the non-negativity of the Petz divergence. 
Hence, $q_n=\exp(-(1-\alpha)x_n)\in (0,1]$ for all $n\in \frac{1}{2}\mathbb{N}$, and 
$(q_n)_{n\in \frac{1}{2}\mathbb{N}}$ is monotonically increasing. 

Let $\mathcal{N}_{A\rightarrow B}$, $\mathcal{N}_{B\rightarrow A}$ be defined as in~\eqref{eq:def-N}. 
For all $n\in \mathbb{N}$ 
\begin{align}
q_{n}\tau_{B}^{(n)\alpha}
=q_{n}\left(\mathcal{N}_{A\rightarrow B}(\sigma_{A}^{(n)})\right)^\alpha
&\stackrel{\eqref{eq:qn0}}{=} \tr_A[\rho_{AB}^\alpha \sigma_A^{(n)1-\alpha} ]
\label{eq:q-tau-alpha}\\
&\geq \lambda_{A,n}^{1-\alpha}\tr_A[\rho_{AB}^\alpha]
\geq \lambda_{A,n}^{1-\alpha}\lambda_B \rho_B^0
= \lambda_{A,n}^{1-\alpha}\lambda_B \tau_{B}^{(n)0},
\end{align}
which implies that 
\begin{equation}\label{eq:bn-an}
\lambda_{B,n}\geq \left(\frac{\lambda_B}{q_{n}}\right)^{\frac{1}{\alpha}} \lambda_{A,n}^{\frac{1-\alpha}{\alpha}}.
\end{equation}
For all $n\in \mathbb{N}$ 
\begin{align}
q_{n+1/2}\sigma_{A}^{(n+1)\alpha }
=q_{n+1/2}\left(\mathcal{N}_{B\rightarrow A}(\tau_{B}^{(n)})\right)^\alpha
&\stackrel{\eqref{eq:qn1}}{=} \tr_B[\rho_{AB}^\alpha \tau_B^{(n)1-\alpha} ]
\\
&\geq \lambda_{B,n}^{1-\alpha}\tr_B[\rho_{AB}^\alpha ]
\geq \lambda_{B,n}^{1-\alpha}\lambda_A \rho_A^0
=\lambda_{B,n}^{1-\alpha}\lambda_A \sigma_A^{(n+1)0}, 
\end{align}
which implies that 
\begin{equation}\label{eq:an-bn}
\lambda_{A,n+1}\geq \left(\frac{\lambda_A}{q_{n+1/2}}\right)^{\frac{1}{\alpha}} \lambda_{B,n}^{\frac{1-\alpha}{\alpha}}.
\end{equation}
We conclude that for all $n\in \mathbb{N}$
\begin{align}
\lambda_{A,n+1}&\stackrel{\substack{\eqref{eq:bn-an}\\ \eqref{eq:an-bn}}}{\geq} \left(\frac{\lambda_A}{q_{n+1/2}}\right)^{\frac{1}{\alpha}}
\left(\frac{\lambda_B}{q_{n}}\right)^{\frac{1-\alpha}{\alpha^2}} \lambda_{A,n}^{(\frac{1-\alpha}{\alpha})^2}
&\geq\left(\frac{\lambda_A^\alpha\lambda_B^{1-\alpha}}{q_{n+1/2}} \right)^{\frac{1}{\alpha^2}} \lambda_{A,n}^{(\frac{1-\alpha}{\alpha})^2}
\geq\left(\frac{\lambda_A^\alpha\lambda_B^{1-\alpha}}{q_{n+1}} \right)^{\frac{1}{\alpha^2}} \lambda_{A,n}^{(\frac{1-\alpha}{\alpha})^2}.
\label{eq:recursion-a}
\end{align}
The last two inequalities hold because $(q_n)_{n\in \frac{1}{2}\mathbb{N}}$ is monotonically increasing. 
\eqref{eq:recursion-a} is a recursion relation for $(\lambda_{A,n})_{n\in \mathbb{N}}$. 
Using again the monotonicity of $(q_n)_{n\in \frac{1}{2} \mathbb{N}}$, this recursion relation implies that for all $n\in \mathbb{N}$
\begin{align}\label{eq:lam-delta-a}
\lambda_{A,n}&\stackrel{\eqref{eq:recursion-a}}{\geq} \left(\frac{\lambda_A^\alpha \lambda_B^{1-\alpha}}{q_{n}} \right)^{\frac{1}{\alpha^2}\sum\limits_{k=0}^{n-1}(\frac{1-\alpha}{\alpha})^{2k}} 
\lambda_{A,0}^{(\frac{1-\alpha}{\alpha})^{2n}}
=\left(\frac{\lambda_A^\alpha \lambda_B^{1-\alpha}}{q_{n}} \right)^{\frac{1}{2\alpha-1}(1-(\frac{1-\alpha}{\alpha})^{2n})} 
\lambda_{A,0}^{(\frac{1-\alpha}{\alpha})^{2n}}
\eqqcolon c_{A,n}.
\end{align}
Combining~\eqref{eq:bn-an} with~\eqref{eq:lam-delta-a} implies that for all $n\in \mathbb{N}$
\begin{align}\label{eq:lam-delta-b}
\lambda_{B,n}
\geq \left(\frac{\lambda_B}{q_n}\right)^{\frac{1}{\alpha}} c_{A,n}^{\frac{1-\alpha}{\alpha}}
\eqqcolon c_{B,n}.
\end{align}

For all $n\in \mathbb{N}$
\begin{align}\label{eq:lam-an-a0}
\lambda_{A,n}
\stackrel{\eqref{eq:lam-delta-a}}{\geq} c_{A,n}
&\geq (\lambda_A^\alpha \lambda_B^{1-\alpha} )^{\frac{1}{2\alpha-1}(1-(\frac{1-\alpha}{\alpha})^{2n})} 
\lambda_{A,0}^{(\frac{1-\alpha}{\alpha})^{2n}}
\geq \min(1,(\lambda_A^\alpha \lambda_B^{1-\alpha} )^{\frac{1}{2\alpha-1}} ) 
\lambda_{A,0}
=c_{A}.
\end{align}
The second inequality holds because $q_n\in (0,1]$ for all $n\in \mathbb{N}$. 

For all $n\in \mathbb{N}$
\begin{align}
\lambda_{B,n}
\stackrel{\eqref{eq:lam-delta-b}}{\geq} c_{B,n}
&\geq \lambda_B^{\frac{1}{\alpha}}c_{A,n}^{\frac{1-\alpha}{\alpha}}
\stackrel{\eqref{eq:lam-an-a0}}{\geq} \lambda_B^{\frac{1}{\alpha}}c_{A}^{\frac{1-\alpha}{\alpha}}
=c_{B}.
\end{align}
The second inequality holds because $q_n\in (0,1]$ for all $n\in \mathbb{N}$. 
\end{proof}

The following lemma is a slightly modified variant of~\cite[Lemma~6.2]{beck2013convergence}. 

\begin{lem}[Sublinear convergence criterion]\label{lem:an}
Let $\gamma\in (0,\infty)$ and let $(A_n)_{n\in \mathbb{N}_{>0}}$ be a sequence of non-negative real numbers such that 
$\gamma A_{n+1}^2\leq A_n-A_{n+1}$ for all $n\in \mathbb{N}_{>0}$. 
Let $c\in [\frac{3}{2},\infty)$ be such that $A_n\leq \frac{c}{\gamma n}$ for all $n\in \{1,2\}$. 
Then, for all $n\in \mathbb{N}_{>0}$
\begin{align}\label{eq:an-induction}
A_{n}\leq \frac{c}{\gamma n}.
\end{align}
\end{lem}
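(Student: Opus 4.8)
The plan is to prove~\eqref{eq:an-induction} by induction on $n$, using \emph{two} base cases and the monotonicity of the auxiliary function $\phi(t)\coloneqq t+\gamma t^2$, which is continuous and strictly increasing on $[0,\infty)$ with $\phi(0)=0$. For the base cases, the bound $A_n\le \frac{c}{\gamma n}$ holds for $n\in\{1,2\}$ directly by hypothesis. For the inductive step, I would fix $n\ge 2$, assume $A_n\le \frac{c}{\gamma n}$, and aim to deduce $A_{n+1}\le \frac{c}{\gamma(n+1)}$. The recursion $\gamma A_{n+1}^2\le A_n-A_{n+1}$ rewrites as $\phi(A_{n+1})\le A_n\le \frac{c}{\gamma n}$. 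Since $\phi$ is injective on $[0,\infty)$ and $A_{n+1}\ge 0$, it therefore suffices to check the single inequality $\phi\!\left(\frac{c}{\gamma(n+1)}\right)\ge \frac{c}{\gamma n}$: combined with $\phi(A_{n+1})\le\frac{c}{\gamma n}$ this gives $\phi(A_{n+1})\le\phi\!\left(\frac{c}{\gamma(n+1)}\right)$, hence $A_{n+1}\le\frac{c}{\gamma(n+1)}$.

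The remaining computation is elementary. Expanding $\phi\!\left(\frac{c}{\gamma(n+1)}\right)=\frac{c}{\gamma(n+1)}+\frac{c^2}{\gamma(n+1)^2}$ and dividing the desired inequality by $\frac{c}{\gamma}>0$ reduces it to $\frac{1}{n+1}+\frac{c}{(n+1)^2}\ge\frac1n$, i.e.\ to $\frac{c}{(n+1)^2}\ge\frac{1}{n(n+1)}$, i.e.\ to $c\ge\frac{n+1}{n}$. For $n\ge 2$ one has $\frac{n+1}{n}=1+\frac1n\le\frac32\le c$, which closes the induction and yields~\eqref{eq:an-induction} for all $n\in\mathbb{N}_{>0}$. (An equivalent route is to solve $\gamma t^2+t=\frac{c}{\gamma n}$ explicitly, bound $A_{n+1}$ by $\frac{-1+\sqrt{1+4c/n}}{2\gamma}$, and verify the same scalar inequality after one squaring; the $\phi$-monotonicity argument just avoids the square root.)

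The only point requiring genuine care is that the inductive step above is valid solely for $n\ge 2$: at $n=1$ the quantity $\frac{n+1}{n}=2$ may exceed $c\in[\tfrac32,\infty)$, so passing from $n=1$ to $n=2$ can fail. This is precisely why the hypothesis supplies the bound at \emph{both} $n=1$ and $n=2$ and why the induction must be anchored at $n=2$; the structure mirrors~\cite[Lemma~6.2]{beck2013convergence}. Everything else is routine scalar manipulation, so I do not anticipate any substantive obstacle beyond bookkeeping the two base cases correctly.
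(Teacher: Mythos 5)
Your proof is correct and follows essentially the same route as the paper: an induction with base cases at $n\in\{1,2\}$ and an inductive step valid only for $n\ge 2$, where the recursion $\gamma A_{n+1}^2+A_{n+1}\le A_n\le\frac{c}{\gamma n}$ is combined with the conditions $n\ge 2$ and $c\ge\frac{3}{2}$. The only (cosmetic) difference is that the paper explicitly solves the quadratic to bound $A_{n+1}\le\frac{-1+\sqrt{1+4c/n}}{2\gamma}\le\frac{c}{\gamma(n+1)}$, whereas you invoke monotonicity of $t\mapsto t+\gamma t^2$ to avoid the square root — a variant you yourself note is equivalent.
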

\begin{proof}
We will prove the claim by induction over $n$. 

Base case: By assumption,~\eqref{eq:an-induction} is true for $n\in \{1,2\}$.

Induction step: Suppose~\eqref{eq:an-induction} holds for some $n\in \mathbb{N}_{\geq 2}$. Then,
\begin{align}
\gamma A_{n+1}^2+A_{n+1}
&\leq A_n\leq \frac{c}{\gamma n},
\label{eq:conv1}
\end{align}
where the last inequality follows from the induction hypothesis. 
Therefore, 
\begin{align}
A_{n+1}&\leq \frac{-1+\sqrt{1+4\frac{c}{n}}}{2\gamma}
\leq \frac{c}{\gamma (n+1)}.
\label{eq:conv2}
\end{align}
The first inequality in~\eqref{eq:conv2} follows from~\eqref{eq:conv1} and the fact that $c,\gamma\in (0,\infty)$.
The second inequality in~\eqref{eq:conv2} holds because $n\geq 2$ and $c\geq \frac{3}{2}$.
\end{proof}
\begin{rem}[Existence of constant $c$]\label{rem:def-c}
Lemma~\ref{lem:an} presupposes the existence of a suitable constant $c$. 
Such a constant generally exists because one may define 
$c\coloneqq \max(\frac{3}{2},\gamma A_1,2\gamma A_2)$.
\end{rem}

\subsection{Proof of Theorem~\ref{thm:sublinear}}\label{app:proof-thm}
\emph{Notation.} 
Before presenting the proof, we first clarify our notation for Fr\'echet derivatives. 
Consider $\mathcal{B}_A\coloneqq \{X_A\in \mathcal{L}(A):X_A^\dagger =X_A\}$ with the Schatten $\infty$-norm as a Banach space over $\mathbb{R}$. 
Similarly, consider $\mathcal{B}_B\coloneqq \{Y_B\in \mathcal{L}(B):Y_B^\dagger =Y_B\}$ with the Schatten $\infty$-norm as a Banach space over $\mathbb{R}$. 
If $U\subseteq \mathcal{B}_A$ is an open set and $f:U\rightarrow\mathcal{B}_B$ is a Fr\'echet differentiable function, then the Fr\'echet derivative of $f$ at $X_A\in U$ is denoted by 
$Df(X_A)\in \mathcal{L}(\mathcal{B}_A,\mathcal{B}_B)$.

\begin{proof}
Without loss of generality, suppose $\sigma_A^{(0)}\in \mathcal{S}_{\sim \rho_A}(A)$, see Remark~\ref{rem:restriction}.

In addition to the definitions in Theorem~\ref{thm:sublinear}, 
we define $\tau_B^{(n)},\sigma_A^{(n+1)}$ for all $n\in \mathbb{N}$ as in Definition~\ref{def:am-quantum}, 
and we define 
$(\lambda_{B,n})_{n\in \mathbb{N}}$ and $c_B$ as in Lemma~\ref{lem:spectrum-leq1}. 
Moreover, we define 
$(x_{n})_{n\in \frac{1}{2}\mathbb{N}}$ and $(q_{n})_{n\in \frac{1}{2}\mathbb{N}}$ as in~\eqref{eq:def-12}. 

We will first prove~\eqref{eq:thm-n1}. 
As explained in the proof of Lemma~\ref{lem:spectrum-leq1}, the sequence $(x_n)_{n\in \frac{1}{2}\mathbb{N}}$ is monotonically decreasing and $x_n\in [0,\infty)$ for all $n\in \frac{1}{2}\mathbb{N}$, and 
$(q_n)_{n\in \frac{1}{2}\mathbb{N}}$ is monotonically increasing and $q_n\in (0,1]$ for all $n\in \frac{1}{2}\mathbb{N}$. 

For all $n\in \mathbb{N}$
\begin{align}
x_{n+1/2}-x_{n+1}
&=D_\alpha (\rho_{AB}\| \sigma_A^{(n+1)}\otimes \tau_B^{(n)})
-D_\alpha (\rho_{AB}\| \sigma_A^{(n+1)}\otimes \tau_B^{(n+1)})
=D_\alpha (\tau_B^{(n+1)}\| \tau_B^{(n)}).
\label{eq:xn-dn}
\end{align}
The last equality follows from the quantum Sibson identity~\eqref{eq:sibson}. 
Since $q_n=\exp((\alpha-1)x_n)$ for all $n\in \frac{1}{2}\mathbb{N}$, it follows from~\eqref{eq:xn-dn} that for all $n\in \mathbb{N}$
\begin{align}
\frac{q_{n+1/2}}{q_{n+1}}
&=Q_\alpha(\tau_B^{(n+1)}\| \tau_B^{(n)}).
\label{eq:q-frac-tau}
\end{align}

Without loss of generality, suppose $\rho_A>0$ and $\rho_B>0$. 
(If this does not hold, the same proof works if one restricts the Hilbert spaces $A$ and $B$ to the support of $\rho_A$ and the support of $\rho_B$, respectively.) 
Let us define the function
\begin{align}\label{eq:def-f}
f:\mathcal{S}_{>0}(A)\times \mathcal{S}_{>0}(B)\rightarrow\mathbb{R}, 
(\sigma_A,\tau_B)\mapsto -Q_\alpha(\rho_{AB}\| \sigma_A\otimes \tau_B)
=-\tr[\rho_{AB}^\alpha (\sigma_A\otimes \tau_B)^{1-\alpha}].
\end{align}
We will now examine the Fr\'echet derivative of $f$. 
Let $g:\mathcal{S}_{>0}(B)\rightarrow\mathcal{L}(B),\tau_B\mapsto \tau_B^{1-\alpha}$. 
The Fr\'echet derivative of $g$ at $\tau_B\in \mathcal{S}_{>0}(B)$ is 
$Dg(\tau_B)(Y_B)=g^{[1]}(\tau_B)\odot Y_B$ for all self-adjoint $Y_B\in \mathcal{L}(B)$, 
where $\odot$ denotes the Hadamard product taken in an eigenbasis of $\tau_B$, 
and $g^{[1]}$ denotes the first-order divided difference of $g$~\cite{bhatia2007positive}. 
Therefore, the partial Fr\'echet derivative with respect to the second argument of $f$ at 
$(\sigma_A,\tau_B)\in \mathcal{S}_{>0}(A)\times \mathcal{S}_{>0}(B)$ is
\begin{align}
D_2f (\sigma_A,\tau_B)(Y_B)
&=-\tr[ \tr_A[\rho_{AB}^\alpha \sigma_A^{1-\alpha}]  Dg(\tau_B)(Y_B) ]
\label{eq:nabla1f}
\end{align}
for all self-adjoint $Y_B\in \mathcal{L}(B)$. 
Similarly, let $\tilde{g}:\mathcal{S}_{>0}(A)\rightarrow\mathcal{L}(A),\sigma_A\mapsto \sigma_A^{1-\alpha}$. 
The Fr\'echet derivative of $\tilde{g}$ at $\sigma_A\in \mathcal{S}_{>0}(A)$ is 
$D\tilde{g}(\sigma_A)(X_A)=\tilde{g}^{[1]}(\sigma_A)\tilde{\odot} X_A$ for all self-adjoint $X_A\in \mathcal{L}(A)$, 
where $\tilde{\odot}$ denotes the Hadamard product taken in an eigenbasis of $\sigma_A$, 
and $\tilde{g}^{[1]}$ denotes the first-order divided difference of $\tilde{g}$~\cite{bhatia2007positive}. 
Therefore, the partial Fr\'echet derivative with respect to the first argument of $f$ at 
$(\sigma_A,\tau_B)\in\mathcal{S}_{>0}(A)\times \mathcal{S}_{>0}(B)$ is
\begin{align}\label{eq:D1f}
D_1f (\sigma_A,\tau_B)(X_A)
&=-\tr[ \tr_B[\rho_{AB}^\alpha \tau_B^{1-\alpha}] D\tilde{g}(\sigma_A)(X_A) ]
\end{align}
for all self-adjoint $X_A\in \mathcal{L}(A)$.
The partial Fr\'echet derivatives are continuous, i.e., 
$(\sigma_A,\tau_B)\mapsto D_if(\sigma_A,\tau_B)$ is continuous on $\mathcal{S}_{>0}(A)\times \mathcal{S}_{>0}(B)$ for any $i\in \{1,2\}$. 
Therefore, $f$ is Fr\'echet differentiable and the Fr\'echet derivative of $f$ is determined by its partial derivatives as
\begin{equation}\label{eq:frechet-partial}
Df(\sigma_A,\tau_B)(X_A,Y_B)
=D_1f(\sigma_A,\tau_B)(X_A)+D_2f(\sigma_A,\tau_B)(Y_B)
\end{equation}
for all self-adjoint $X_A\in \mathcal{L}(A),Y_B\in \mathcal{L}(B)$.

We have
\begin{align}
\argmin_{(\sigma_A,\tau_B)\in \mathcal{S}_{>0}(A)\times \mathcal{S}_{>0}(B)} 
f(\sigma_A,\tau_B)
&=\argmax_{(\sigma_A,\tau_B)\in \mathcal{S}_{>0}(A)\times \mathcal{S}_{>0}(B)} 
Q_\alpha(\rho_{AB}\| \sigma_A\otimes \tau_B)
\\
&=\argmin_{(\sigma_A,\tau_B)\in \mathcal{S}(A)\times \mathcal{S}(B)} 
D_\alpha(\rho_{AB}\| \sigma_A\otimes \tau_B)
\label{eq:argmin-f}
\end{align}
where~\eqref{eq:argmin-f} holds due to~\eqref{eq:unique}. 
Let $(\hat{\sigma}_A,\hat{\tau}_B)$ be the unique element in this set, see~\eqref{eq:unique}. 
Let 
\begin{align}
\hat{q}&\coloneqq Q_\alpha(\rho_{AB}\| \hat{\sigma}_A\otimes\hat{\tau}_B)
= -f(\hat{\sigma}_A,\hat{\tau}_B)
=\exp((\alpha-1)I_\alpha^{\downarrow\downarrow}(A:B)_\rho).
\end{align}

For all $n\in \mathbb{N}$
\begin{align}
0&\leq \hat{q}-q_{n+1}
\leq \hat{q}-q_{n+1/2}
\label{eq:am0}\\
&=f(\sigma_A^{(n+1)},\tau_B^{(n)})-f(\hat{\sigma}_A,\hat{\tau}_B)
\label{eq:am1}\\
&\leq Df(\sigma_A^{(n+1)},\tau_B^{(n)}) (\sigma_A^{(n+1)}-\hat{\sigma}_A,\tau_B^{(n)}-\hat{\tau}_B)
\label{eq:am12}\\
&= D_1 f(\sigma_A^{(n+1)},\tau_B^{(n)}) (\sigma_A^{(n+1)}-\hat{\sigma}_A) 
+ D_2 f(\sigma_A^{(n+1)},\tau_B^{(n)}) (\tau_B^{(n)}-\hat{\tau}_B)
\label{eq:am2}\\
&=D_2 f(\sigma_A^{(n+1)},\tau_B^{(n)}) (\tau_B^{(n)}-\hat{\tau}_B)
\label{eq:am3}\\
&= D_2 f(\sigma_A^{(n+1)},\tau_B^{(n)})(\tau_B^{(n)}-\hat{\tau}_B)
- D_2 f (\sigma_A^{(n)},\tau_B^{(n)})(\tau_B^{(n)}-\hat{\tau}_B)
\label{eq:am5}
\\
&=\tr[(\tr_A[\rho_{AB}^\alpha \sigma_A^{(n)1-\alpha}]
-\tr_A[\rho_{AB}^\alpha \sigma_A^{(n+1)1-\alpha}])
Dg(\tau_B^{(n)})(\tau_B^{(n)}-\hat{\tau}_B)]
\label{eq:am6}
\\
&=\tr[(q_{n} \tau_B^{(n)\alpha}- q_{n+1} \tau_B^{(n+1)\alpha} ) 
Dg(\tau_B^{(n)}) (\tau_B^{(n)}-\hat{\tau}_B) ]
\label{eq:am7}\\
&\leq \lVert q_{n} \tau_B^{(n)\alpha}- q_{n+1} \tau_B^{(n+1)\alpha} \rVert_2 \cdot 
\lVert Dg(\tau_B^{(n)}) (\tau_B^{(n)}-\hat{\tau}_B)
\rVert_2.
\label{eq:am8}
\end{align}
\eqref{eq:am0} holds because $(q_n)_{n\in \frac{1}{2}\mathbb{N}}$ is monotonically increasing. 
\eqref{eq:am12} follows from the joint convexity of $f$~\cite{burri2024properties}. 
\eqref{eq:am2} follows from~\eqref{eq:frechet-partial}.
\eqref{eq:am3} holds because according to~\eqref{eq:tau-optimal}, 
$\sigma_A^{(n+1)}$ is the unique global minimizer of $f(\cdot,\tau_B^{(n)})$.
\eqref{eq:am5} holds because according to~\eqref{eq:tau-optimal}, 
$\tau_B^{(n)}$ is the unique global minimizer of $f(\sigma_A^{(n)},\cdot )$. 
\eqref{eq:am6} follows from~\eqref{eq:nabla1f}. 
\eqref{eq:am7} follows from~\eqref{eq:q-tau-alpha}. 
\eqref{eq:am8} follows from the Cauchy-Schwarz inequality for the Hilbert-Schmidt inner product.

The first factor in~\eqref{eq:am8} can be bounded from above as follows.
\begin{align}
\big\lVert q_{n} \tau_B^{(n)\alpha}- q_{n+1} \tau_B^{(n+1)\alpha} \big\rVert_2
&=\big\lVert q_{n}(\tau_B^{(n)\alpha}-\tau_B^{(n+1)\alpha})-(q_{n+1}-q_{n})\tau_B^{(n+1)\alpha} \big\rVert_2
\\
&\leq q_{n}\big\lVert \tau_B^{(n)\alpha}- \tau_B^{(n+1)\alpha} \big\rVert_2
+(q_{n+1}-q_{n})\lVert \tau_B^{(n+1)\alpha}\rVert_2
\label{eq:first-1}\\
&\leq q_{n}\big\lVert \tau_B^{(n)\alpha}- \tau_B^{(n+1)\alpha} \big\rVert_{1/\alpha}
+(q_{n+1}-q_{n})\lVert \tau_B^{(n+1)\alpha}\rVert_{1/\alpha}
\label{eq:first-2}\\
&\leq q_{n}\sqrt{\frac{4\alpha}{1-\alpha}\left(1-Q_\alpha (\tau_B^{(n+1)}\| \tau_B^{(n)}) \right) }
+q_{n+1}-q_{n}
\label{eq:first-3}\\
&=q_{n}\sqrt{\frac{4\alpha}{1-\alpha}\left(1-\frac{q_{n+1/2}}{q_{n+1}} \right) }
+q_{n+1}-q_{n}
\label{eq:first-4}\\
&\leq q_{n+1}\left(\sqrt{\frac{4\alpha}{1-\alpha}\left(1-\frac{q_n}{q_{n+1}}\right)} +\left(1-\frac{q_n}{q_{n+1}}\right) \right)
\label{eq:first-5}\\
&\leq q_{n+1}\left(\sqrt{\frac{4\alpha}{1-\alpha}}+1\right) \sqrt{1-\frac{q_n}{q_{n+1}}} 
\label{eq:first-6}\\
&= \left(\sqrt{\frac{4\alpha}{1-\alpha}}+1\right) \sqrt{q_{n+1}(q_{n+1}-q_n)} 
\label{eq:first-7}
\end{align}
\eqref{eq:first-1} follows from the subadditivity of norms. 
\eqref{eq:first-2} holds due to the monotonicity of the Schatten norms. 
\eqref{eq:first-3} follows from the R\'enyi Pinsker inequality~\eqref{eq:pinsker}.
\eqref{eq:first-4} follows from~\eqref{eq:q-frac-tau}.
\eqref{eq:first-5} holds because $(q_n)_{n\in \frac{1}{2}\mathbb{N}}$ is monotonically increasing. 
\eqref{eq:first-6} holds because $t\leq \sqrt{t}$ for all $t\in [0,1]$.

The second factor in~\eqref{eq:am8} can be bounded from above as follows.
\begin{align}
\lVert Dg(\tau_B^{(n)}) (\tau_B^{(n)}-\hat{\tau}_B) \rVert_2
&\leq \lVert Dg(\tau_B^{(n)}) (\tau_B^{(n)}) \rVert_{2}+
\lVert Dg(\tau_B^{(n)}) (\hat{\tau}_B) \rVert_{2}
\label{eq:dtau0}\\
&\leq 2\sup_{Y_B\in \mathcal{S}(B)} 
\lVert Dg(\tau_B^{(n)}) (Y_B) \rVert_{2}
=2\sup_{\substack{Y_B\in \mathcal{L}(B): \\ Y_B\geq 0, \lVert Y_B\rVert_1 =1}} 
\lVert Dg(\tau_B^{(n)}) (Y_B) \rVert_{2}
\label{eq:dtau11}\\
&\leq 2\sup_{\substack{Y_B\in \mathcal{L}(B):\\ Y_B\geq 0, \lVert Y_B\rVert_2\leq 1 }}
\lVert Dg(\tau_B^{(n)}) (Y_B) \rVert_{2}
\label{eq:dtau2}\\
&\leq 2\sup_{\substack{Y_B\in \mathcal{L}(B):\\ Y_B^\dagger =Y_B, \lVert Y_B\rVert_2\leq 1 }}
\lVert Dg(\tau_B^{(n)}) (Y_B) \rVert_{2}
\label{eq:dtau21}\\
&\leq 2 \lVert (1-\alpha)\tau_B^{(n)-\alpha} \rVert_{\infty}
\label{eq:dtau4}\\
&=\frac{2}{\lambda_{B,n}^{\alpha}} (1-\alpha)
\label{eq:dtau5}
\end{align}
\eqref{eq:dtau0} follows from the subadditivity of norms. 
\eqref{eq:dtau2} holds due to the monotonicity of the Schatten norms. 
\eqref{eq:dtau4} holds because the Schatten 2-norm is unitarily invariant and $g(\tau_B)=\tau_B^{1-\alpha}$ is operator monotone \cite{bhatia1996matrix,bhatia2007positive}.
\eqref{eq:dtau5} is trivially true by the definition of $\lambda_{B,n}$ in~\eqref{eq:lambda-bn}.

Substituting these bounds for the first and the second factor into~\eqref{eq:am8} implies that for all $n\in \mathbb{N}$
\begin{align}
0\leq \hat{q}-q_{n+1}
&\leq \frac{2}{\lambda_{B,n}^{\alpha}} \sqrt{(1-\alpha)q_{n+1}\left(q_{n+1}-q_n\right)} 
\left(\sqrt{4\alpha}+\sqrt{1-\alpha}\right)
\\
&\leq \frac{2\sqrt{5}}{\lambda_{B,n}^{\alpha}} \sqrt{(1-\alpha)q_{n+1}\left(q_{n+1}-q_n\right)}.
\label{eq:alpha-term}
\end{align}
\eqref{eq:alpha-term} holds because 
$\sqrt{4t}+\sqrt{1-t}\leq \sqrt{5}$ for all $t\in [\frac{1}{2},1]$. 
The division of~\eqref{eq:alpha-term} by $q_{n+1}$ implies that for all $n\in \mathbb{N}$
\begin{align}\label{eq:q-qn1}
0\leq \frac{\hat{q}}{q_{n+1}}-1
\leq \frac{2\sqrt{5}}{\lambda_{B,n}^{\alpha}} \sqrt{(1-\alpha)\left(1-\frac{q_n}{q_{n+1}}\right)}.
\end{align} 

For all $n\in \mathbb{N}$
\begin{align}\label{eq:xn-xn1}
x_n-x_{n+1}
=-\frac{1}{1-\alpha}\log \left(1+\frac{q_n}{q_{n+1}}-1 \right)
\geq -\frac{1}{1-\alpha}\left(\frac{q_n}{q_{n+1}}-1 \right)
=\frac{1}{1-\alpha}\left(1-\frac{q_n}{q_{n+1}} \right).
\end{align}
For the inequality in~\eqref{eq:xn-xn1}, we have used that $\log(1+t)\leq t$ for all $t\in (-1,0]$ and that $\frac{q_n}{q_{n+1}}\in (0,1]$ since $(q_n)_{n\in \mathbb{N}}$ is monotonically increasing. 
For all $n\in \mathbb{N}$
\begin{align}
0\leq x_{n+1}-I_\alpha^{\downarrow\downarrow}(A:B)_\rho
&=\frac{1}{1-\alpha}\log\left(1+\frac{\hat{q}}{q_{n+1}}-1\right)
\\
&\leq \frac{1}{1-\alpha}\left(\frac{\hat{q}}{q_{n+1}}-1\right)
\label{eq:xn-x-1}\\
&\leq \frac{2\sqrt{5}}{\lambda_{B,n}^{\alpha}} \sqrt{\frac{1}{1-\alpha}\left(1-\frac{q_n}{q_{n+1}}\right)}
\label{eq:xn-x-12}
\\
&\leq \frac{2\sqrt{5}}{\lambda_{B,n}^{\alpha}} \sqrt{x_n-x_{n+1}}
\label{eq:xn-x-2}
\\
&\leq \frac{2\sqrt{5}}{c_{B}^{\alpha}} \sqrt{x_n-x_{n+1}} 
=c_0\sqrt{x_n-x_{n+1}}.
\label{eq:xn-x-3}
\end{align}
\eqref{eq:xn-x-1} holds because $\log(1+t)\leq t$ for all $t\in [0,\infty)$. 
\eqref{eq:xn-x-12} follows from~\eqref{eq:q-qn1}.
\eqref{eq:xn-x-2} follows from~\eqref{eq:xn-xn1}. 
The inequality in~\eqref{eq:xn-x-3} follows from Lemma~\ref{lem:spectrum-leq1}. 
The equality in~\eqref{eq:xn-x-3} follows from~\eqref{eq:def-sublinear-c0} and~\eqref{eq:def-cb}.
This proves~\eqref{eq:thm-n1}.

We will now prove~\eqref{eq:thm-n2}. 
Let $A_n\coloneqq x_n-I_\alpha^{\downarrow\downarrow}(A:B)_\rho$ for all $n\in \mathbb{N}$. 
Let $\gamma\coloneqq c_0^{-2}$. 
By~\eqref{eq:thm-n1}, 
$\gamma A_{n+1}^2\leq A_n-A_{n+1}$ for all $n\in \mathbb{N}$. 
Let $c\coloneqq \max(\frac{3}{2},2\gamma A_0)$. 
Since the sequence $(A_n)_{n\in \mathbb{N}}$ is monotonically decreasing and non-negative, 
\begin{align}\label{eq:proof-c}
c=\max\left(\frac{3}{2},2\gamma A_0,2\gamma A_1,2\gamma A_2\right)
\geq \max\left(\frac{3}{2},\gamma A_1,2\gamma A_2\right).
\end{align} 

We conclude that for all $n\in \mathbb{N}_{>0}$
\begin{align}
\lvert x_{n}-I_\alpha^{\downarrow\downarrow}(A:B)_\rho \rvert
=A_n
\leq \frac{c}{\gamma n}
&=\max\left(\frac{3}{2} c_0^{2},2(x_0-I_\alpha^{\downarrow\downarrow}(A:B)_\rho)\right) \frac{1}{n}
\label{eq:proof-xn-an1}
\\
&\leq \max\left(\frac{3}{2} c_0^{2},2x_0\right) \frac{1}{n}.
\label{eq:proof-xn-an2}
\end{align}
The inequality in~\eqref{eq:proof-xn-an1} follows from Lemma~\ref{lem:an} and~\eqref{eq:proof-c}. 
\eqref{eq:proof-xn-an2} follows from the non-negativity of the doubly minimized PRMI~\cite{burri2024properties}.
\end{proof}

\bibliographystyle{arxiv_fullname}
\bibliography{bibfile}

\end{document}